\DeclareMathOperator{\Tr}{Tr}
\newcommand\reallywidehat[1]{%
	\savestack{\tmpbox}{\stretchto{%
			\scaleto{%
				\scalerel*[\widthof{\ensuremath{#1}}]{\kern-.6pt\bigwedge\kern-.6pt}%
				{\rule[-\textheight/2]{1ex}{\textheight}}
			}{\textheight}%
		}{0.5ex}}%
	\stackon[1pt]{#1}{\tmpbox}%
}
\newcommand{\norm}[1]{\ensuremath{\lVert #1 \rVert}}
\newtheorem{theorem}{Theorem}[chapter]
\newtheorem{corollary}[theorem]{Corollary}
\newtheorem{lemma}[theorem]{Lemma}
\newtheorem*{claim}{Claim}
\newtheorem{definition}[theorem]{Definition}
\begin{document}

%
%
%
%
%
%
%
%
%
%
%

\title{Combinatorial Optimization via the Sum of Squares hierarchy}

\author{Goutham Rajendran}
\department{Department of Computer Science}

\degree{Master's in Computer Science}

\degreemonth{May}
\degreeyear{2018}
\thesisdate{May 25, 2015}

\copyrightnoticetext{\copyright 2018 by Goutham Rajendran. All rights reserved}

\thesisadvisor{Madhur Tulsiani}{Assistant Professor}
\coadvisor{Janos Simon}{Professor}

\chairman{Arthur C. Smith}{Chairman, Department Committee on Graduate Theses}

\maketitle


\begin{titlepage}
	\vfil
	Copyright \textcircled{c} 2018 by Goutham Rajendran\\ All rights reserved
\end{titlepage}


\clearpage
\begin{abstractpage}
%
%
%

We study the Sum of Squares (SoS) Hierarchy with a view towards combinatorial optimization. We survey the use of the SoS hierarchy to obtain approximation algorithms on graphs using their spectral properties. We present a simplified proof of the result of Feige and Krauthgamer on the performance of the hierarchy for the Maximum Clique problem on random graphs. We also present a result of Guruswami and Sinop that shows how to obtain approximation algorithms for the Minimum Bisection problem on low threshold-rank graphs.

We study inapproximability results for the SoS hierarchy for general constraint satisfaction problems and problems involving graph densities such as the Densest $k$-subgraph problem. We improve the existing inapproximability results for general constraint satisfaction problems in the case of large arity, using stronger probabilistic analyses of expansion of random instances. We examine connections between constraint satisfaction problems and density problems on graphs. Using them, we obtain new inapproximability results for the hierarchy for the Densest $k$-subhypergraph problem and the Minimum $p$-Union problem, which are proven via reductions. 

We also illustrate the relatively new idea of pseudocalibration to construct integrality gaps for the SoS hierarchy for Maximum Clique and Max $K$-CSP. The application to Max $K$-CSP that we present is known in the community but has not been presented before in the literature, to the best of our knowledge.
\end{abstractpage}


\cleardoublepage

\section*{Acknowledgments}

I thank Prof. Madhur Tulsiani for introducing me to this beautiful subject and many interesting concepts, and for giving me a lot of useful ideas and suggestions regarding this work.

I wish to express my gratitude to Prof. L\'{a}szl\'{o} Babai for imparting his wisdom and giving me countless advice; and Prof. Janos Simon for being a constant source of support.

I am grateful to Prof. Aravindan Vijayaraghavan for helpful discussions on Densest $k$-subgraph and Prof. Pravesh Kothari for communicating, through Madhur, the idea of how to use pseudocalibration to obtain hardness results for Max $K$-CSP.

This work would not have been possible without the support of my friends and family, especially my parents.


\pagestyle{plain}
\tableofcontents

\chapter{Introduction}

The famous Cook-Levin theorem showed the existence of at least one NP-hard problem, namely the Boolean satisfiability problem. Using reductions, many natural problems that are interesting have been found to be NP-hard, which means that an efficient algorithm to these problems would essentially prove P = NP. So, assuming that P $\neq$ NP, the focus has been on trying to find efficient algorithms, which could possibly be randomized, that give good approximation guarantees.

We study optimization problems where we are given an instance $I$ and we would like to compute the optimum value of the objective function, denoted $OPT$, over feasible solutions. The optimization could be either to maximize or minimize the value. An $\alpha$-approximation algorithm for $\alpha \le 1$ for a maximization (resp. minimization) problem is an efficient algorithm that finds a solution for any instance $I$ with value at least $\alpha\cdot OPT$ (resp. at most $\frac{1}{\alpha}\cdot OPT$).  Even when $\alpha > 1$, we use the term $\alpha$-approximation algorithm for a maximization (resp. minimization) problem to mean an efficient algorithm that finds a solution for any instance $I$ with value at least $\frac{1}{\alpha}\cdot OPT$ (resp. at most $\alpha\cdot OPT$). Note that this double definition is essentially to avoid the convention that the approximation factor is either at most $1$ or at least $1$ and instead use them interchangeably. Here, efficient algorithm means that it's running time is polynomial in the size of the instance $I$ and note that the algorithm could be randomized.

A plethora of techniques have been introduced towards this objective and two of the crucial techniques are Linear programming and the related Semidefinite programming, which are powerful because they can be applied to a variety of problems, with a single framework.

\section{Linear Programming and Semidefinite programming}

A linear program is an optimization problem of the following form:
\begin{align*}
\text{Maximize}\qquad&\bm{c}^T\bm{x}&\\
\text{subject to}\qquad&A\bm{x} \le \bm{b}&\\
&\bm{x} \in \mathbb{R}^n&
\end{align*}

Here, $A \in \mathbb{R}^{m \times n}, \bm{b} \in \mathbb{R}^m$. Linear programs can be solved in polynomial time using the ellipsoid method or the interior point method. When the condition $\bm{x}\in \mathbb{R}^n$ is replaced by $\bm{x}\in \mathbb{Z}^n$, we call it an integer program. Integer programming is NP-hard. Many approximation algorithms start by considering an integer program to a given problem, relaxing it to a linear program, solving it and then rounding the solutions to integers and finally proving that this rounding achieves good approximation guarantees.

To explain semidefinite programming, we need to define positive semidefinite matrices.

\begin{definition}
A symmetric matrix $A \in \mathbb{R}^{n \times n}$ is said to be positive semidefinite, denoted $A \succeq 0$, if any of the following equivalent conditions hold.
\begin{itemize}
	\item $A = X^TX$ for some $X \in \mathbb{R}^{d \times n}, d\le n$
	\item All eigenvalues of $A$ are nonnegative
	\item $\bm{x}^TA\bm{x} \ge 0$ for all $\bm{x} \in \mathbb{R}^n$
\end{itemize}
\end{definition}

A Semidefinite program (SDP) has $n^2$ variables $y_{1, 1}, y_{1, 2}, \ldots, y_{n, n}$ which can be thought of to form a matrix $Y \in \mathbb{R}^{n \times n}$. Then, the objective is of the following form:
\begin{align*}
\text{Maximize}\qquad&C\bullet Y&\\
\text{subject to}\qquad&A_i\bullet Y \le b_i& \forall i = 1, 2, \ldots, m\\
&Y \succeq 0&\\
&Y \in \mathbb{R}^{n\times n}&
\end{align*}

Here, $C, A_1 \ldots, A_m \in \mathbb{R}^{n \times n}, b_1, \ldots, b_m \in \mathbb{R}^n$. Also, note that "$\bullet$" denotes entrywise dot product, that is $C \bullet Y = \displaystyle\sum_{i = 1}^n\displaystyle\sum_{j = 1}^n C_{i, j}Y_{i, j}$. So, it is a linear program in the entries of $Y$ with the additonal constraint that $Y$ is positive semidefinite. Note that since $Y$ has to be positive semidefinite, it also has to be symmetric and so, there are essentially only $n(n+1)/2$ variables.

It is a famous result of Gr\"{o}tschel, Lov\'{a}sz and Schrijver\cite{oracle} that SDPs can be solved in polynomial time, under some mild assumptions. We remark that, by solved, we mean that for any constant $\epsilon > 0$, we can get an additive $\epsilon$-approximation in polynomial time. It may not be possible to find the exact solution because the exact solution may be irrational.

To show an example of how SDPs can be useful, consider the Maximum Cut problem. In this problem, we are given an undirected unweighted graph $G = (V, E)$ and we would like to find a partition $(S, V - S)$ of the vertex set so that the number of edges with exactly one endpoint in $S$, is maximized. This problem is NP-hard. The best known approximation algorithm for this problem due to Goemans and Williamson\cite{gw} uses semidefinite programming.

Consider the following program over integers for Max-Cut. For each vertex $u \in V$, introduce the variable $x_u$ which takes the value $1$ when $u \in S$ and $-1$ when $u \not\in S$. The constraint $x_u^2 = 1$ enforces $x_u = \pm 1$ and for each edge $(u, v)$, observe that the expression $\left(\frac{1}{2} - \frac{1}{2}x_ux_v\right)$ indicates whether that edge is in the cut. So, Max-Cut is equivalent to the following optimization problem.
\begin{align*}
\text{Maximize}\qquad&\displaystyle\sum_{(u, v)\in E}\left(\frac{1}{2} - \frac{1}{2}x_ux_v\right)&\\
\text{subject to}\qquad&\qquad x_u^2 = 1&\\
&\qquad x_u \in \mathbb{R}&
\end{align*}

This is an instance of a quadratic program. Unfortunately, quadratic programs are NP-hard. Indeed, the above is a reduction from Max-Cut to quadratic programs. Goemans and Williamson relaxed the above program to a semidefinite program which can be efficiently solved and they showed a rounding algorithm which achieves a good approximation. The relaxation is to replace the real numbers $x_u$ by vectors $\bm{V}_u$ of arbitrary dimension. That is, we relax $x_u \in \mathbb{R}$ to $\bm{V}_u \in \mathbb{R}^d$ for some positive integer $d$. Then, replace all products $x_ux_v$ with the standard inner product $\langle\bm{V}_u, \bm{V}_v\rangle$.

The new program for Max-Cut is as follows.
\begin{align*}
\text{Maximize}\qquad&
\displaystyle\sum_{(u, v)\in E}\left(\frac{1}{2} - \frac{1}{2}\langle\bm{V}_u, \bm{V}_v\rangle\right)&\\
\text{subject to}\qquad&\qquad\langle\bm{V}_u, \bm{V}_u\rangle = 1&\\
&\qquad\quad\bm{V}_u \in \mathbb{R}^d&
\end{align*}

The program in the form above is called a vector program. Note that we just need to ensure that $d$ exists, but don't specify its value beforehand. To solve this, we introduce $n^2$ variables $y_{u, v}$ for all vertices $u, v$ and replace all $\langle\bm{V}_u, \bm{V}_v\rangle$ with $y_{u, v}$. Then, observe that the above program can be written as a linear program in $y_{u, v}$. The only catch is that, the solution to this program $y_{u, v}$ should be such that there exist vectors $\bm{V}_u$ in $\mathbb{R}^d$ for some $d$ such that $y_{u, v} = \langle\bm{V}_u, \bm{V}_v\rangle$. This is precisely the condition that $Y = (y_{u, v})$ is positive semidefinite. If we add this constraint to the program, we have a semidefinite program in $Y$ that we can solve.

Once we find $Y$, we can efficiently find the actual vectors $\bm{V}_u \in \mathbb{R}^d$ (known as the Cholesky decomposition) and the final rounding algorithm is as follows: Sample a random unit vector $\bm{g}$ in $\mathbb{R}^d$. The rounding sets $x_u = sgn(\langle \bm{g}, \bm{V}_u\rangle)$ where $sgn(x)$ is $1$ if $x \ge 0$ and $-1$ if $x < 0$. The partition corresponding to these $x_u$s is precisely the partition that we output, that is, we output $S = \{u \in V \; | \; x_u = 1\}$.

Goemans and Williamson\cite{gw} proved that this randomized rounding achieves $\alpha_{GW} \approx 0.87856$ approximation. Feige and Schechtman\cite{feige} proved that the above analysis is optimal for this SDP. Moreover, Khot et al.\cite{kkmo} proved that this is the best approximation algorithm possible for this problem assuming the Unique Games Conjecture.

\section{Hierarchies}

Hiearchies are sequences of progressively stronger relaxations of linear or semidefinite programs which are obtained by adding more consistency constraints that an actual solution would satisfy. These are not problem specific and in general, could be done for most problems where the program variables take values in $\{0, 1\}$. In the hierarchies we study, the relaxed variables encode the probability of a subset of original variables being assigned $1$ in the optimum solution. Although we lose in running time by adding more constraints, we will still have polynomial running time if we add only polynomially many constraints.

Linear programming hierarchies were studied by Lov\'{a}sz and Schrijver\cite{ls}; and Sherali and Adams\cite{sa}. The semidefinite programming hierarchies were studied by Shor\cite{shor}, Nesterov\cite{nesterov}, Parrillo\cite{parr} and Lasserre\cite{las}. It is known as the Sum of Squares(SoS) hierarchy, which will be the focus of our thesis. Although it is defined for generic polynomial optimization, we will study mainly the SDP formulation also known as the Lasserre hierarchy.

It is known that the SoS hierarchy is at least as powerful as the Lov\'{a}sz-Schrijver or Sherali-Adams hierarchies. We generally try to prove approximation guarantees by considering the weakest possible hierarchy that will ensure that guarantee; and we prove hardness results for the strongest possible hierarchies. There are other intermediate hierarchies that have been studied, but we will not consider them here.

The performance of a program can be quantified by its integrality gap. Suppose the actual optimum to an instance $I$ of a maximization problem is $OPT$ and the program returns optimum value $FRAC \ge OPT$, then the integrality gap for this instance is defined to be $\frac{FRAC}{OPT}$. The maximum value of this quantity over all instances of a fixed size is the integrality gap of the program and measures how good the program performs in the worst case. This can similarly be defined for minimization problems. An integrality gap of $1$ means the program exactly solves the given problem. We can prove large integrality gaps for these hierarchies for some natural problems, providing evidence of their intrinsic hardness.

\section{Thesis Organization}

In this thesis, we provide a short exposition of the Sum of Squares hierarchy as well as obtain new results, mainly for combinatorial problems. 

In Chapter $2$, we define the hierarchy and give a flavor of the algorithmic results that can be obtained. We study the performance of the SoS hierarchy for the Maximum Clique problem on random graphs. In particular, we present the relevant result of Feige and Krauthgamer\cite{feige} and present a variant of their proof using the stronger SoS hierarchy (see Section \ref{sosproof})), instead of their original proof which uses the weaker Lov\'{a}sz-Schrijver hierarchy. We then give an exposition of Guruswami and Sinop's\cite{gs} approximation algorithm, via the SoS hierarchy, for the Minimum Bisection problem when the instance is a low threshold-rank graph.

In Chapter 3, we present SoS hierarchy lower bounds for general Constraint Satisfaction problems (Max $K$-CSP) due to Kothari et al.\cite{kmow} and show how they can be used to obtain lower bounds for Densest $k$-subgraph and it's variants. Then, we present an alternate view of the SoS hierarchy using pseudoexpectation operators and formally show the equivalence to this alternate view. 

Finally, we illustrate the powerful idea of pseudocalibration to construct lower bounds for the SoS hierarchy for Maximum Clique and Max $K$-CSP. The idea was introduced and applied to Maximum Clique by Barak et al.\cite{pseudo} but we present a slightly different explanation from the one in their paper. We also show that we can alternatively use pseudocalibration to arrive at the integrality gap construction of Kothari et al.\cite{kmow} for Max $K$-CSPs (see Section \ref{pcalibration}), as opposed to their purely combinatorial approach. This application is fairly well-known in the community but has not been presented anywhere in the literature, to the best of our knowledge.

We also exhibit new results. We improve the existing SoS hardness results for the Max $K$-CSP problem in the case when $K$ grows as a function of the instance size (see Corollary \ref{csp_hardness}). We obtain new hardness results for Densest $k$-subhypergraph (see Theorem \ref{hyper}) and Minimum $p$-Union (see Corollary \ref{union}). The former is a reduction from SoS hardness of Densest $k$-subgraph and the latter is a reduction from SoS hardness of Max $K$-CSPs. To the best of our knowledge, no prior SoS hardness results were known for either of these problems.
\chapter{The Sum of Squares Hierarchy}

We will first provide a rough outline of the hierarchy. Suppose we have a program with variables $\{x_i\}_{i \le n}$ where we need to optimize over $x_i \in \{0, 1\}$. We would like to write a new program with possibly more variables whose optimal solution, restricted to the relevant variables, is a convex combination of optimal integer solutions $\{x_i\}_{i \le n}$ to the original program. For this purpose, we can consider vectors $\bm{V_{i}}$ that essentially capture these variables so that $\norm{\bm{V_{i}}}^2$ will be expected value of $x_i$ under such a distribution. More generally, for some predetermined integer $r$, for every subset $S$ of the variable indices of size at most $r$, we introduce vectors $\bm{V}_S$ which can be thought of as to capture the event that every variable with index in $S$ is in the optimum solution, that is, it represents $\displaystyle\prod_{i \in S} x_i$. So, $\norm{\bm{V}_S}^2$ is intended to be the probability that every variable with subscript in $S$ is $1$ in the final solution. These $\bm{V}_S$ are known as local variables. We set $\norm{\bm{V}_{\phi}}^2 = 1$ because the empty event should ideally have probability $1$.

Now, for all $i, j$, terms such as $x_ix_j$ would be replaced by $\langle \bm{V}_{\{i\}}, \bm{V}_{\{j\}}\rangle$. But notice that we could also have replaced it by $\langle \bm{V}_{\{i, j\}}, \bm{V}_{\phi}\rangle$. To rectify this situation, we would add the constraint $\langle \bm{V}_{\{i\}}, \bm{V}_{\{j\}}\rangle = \langle \bm{V}_{\{i, j\}}, \bm{V}_{\phi}\rangle$. More generally, we add the constraint $\langle \bm{V}_{S_1}, \bm{V}_{S_2}\rangle = \langle \bm{V}_{S_3}, \bm{V}_{S_4}\rangle$ for all sets $S_1, S_2, S_3, S_4$ of size at most $r$ such that $S_1 \cup S_2 = S_3 \cup S_4$. These are known as local consistency constraints. In a sense, they ensure that the vectors $\bm{V}_S$ mimic an actual probability distribution. Once we have these constraints, terms like $x_1x_3x_4$ could be replaced by any of $\langle \bm{V}_{\{1, 3\}}, \bm{V}_{\{4\}}\rangle$ or $\langle \bm{V}_{\phi}, \bm{V}_{\{1, 3, 4\}}\rangle$ or $\langle \bm{V}_{\{1, 4\}}, \bm{V}_{\{3\}}\rangle$. We also add the constraints $\langle \bm{V}_{S_1}, \bm{V}_{S_2}\rangle \ge 0$ for all sets $S_1, S_2$ of size at most $r$, as would be satisfied by actual distributions. We remark that if $|S_1 \cup S_2| \le r$, then these follow from the previous constraints since $\langle \bm{V}_{S_1}, \bm{V}_{S_2}\rangle = \langle \bm{V}_{S_1 \cup S_2}, \bm{V}_{S_1 \cup S_2}\rangle = \norm{\bm{V}_{S_1 \cup S_2}}^2 \ge 0$. Finally, any constraint for the given problem is replaced by many extra constraints on these new variables that conform to our interpretation. For example $x_1x_3 + x_5 \le 10$ would be replaced by the constraints $\langle \bm{V}_S, \bm{V}_{\{1, 3\}}\rangle + \langle \bm{V}_S, \bm{V}_{\{5\}}\rangle \le 10\langle \bm{V}_S, \bm{V}_{\phi}\rangle$ for all sets $S$ with $|S| \le r$. Here, we assume $r \ge 2$ since the variable $\bm{V}_{\{1, 3\}}$ doesn't exist otherwise.

\section{The SoS relaxation for boolean programs}

Now we will describe the relaxation in a general setting, following the above intuition. This is slightly restricted but should suffice for most applications. Suppose we have an program over the variables $x_1, \ldots, x_n$ of the form below:
\begin{align*}
	\mbox{Maximize}\qquad&p(x_1, \ldots, x_n)&\\
	\text{subject to}\qquad&q_i(x_1, \ldots, x_n) \ge 0\quad\qquad i = 1, 2, \ldots, m&\\
	& x_i \in \{0, 1\}&
\end{align*}

Here, $p$ and $q_1, \ldots, q_m$ are polynomials. Since $x_i \in \{0, 1\}$, we have that $x_i^2 = x_i$ and so, we can assume without loss of generality that $p, q_1, \ldots, q_m$ are multilinear. Let $r$ be any integer which is at least the degree of $p$ and at least the degree of $q_i$ for all $i \le m$. For $T \subseteq [n]$ denote by $x_T$ the product $\displaystyle\prod_{i \in T} x_i$. Also, define $[n]_{\le r} = \{T \subseteq [n] \;|\; |T| \le r\}$ to be the set of subsets with at most $r$ elements. Then, we can write $p(x_1, \ldots, x_n) = \displaystyle\sum_{T \in [n]_{\le r}} p_Tx_T, q_i(x_1, \ldots, x_n) = \displaystyle\sum_{T \in [n]_{\le r}} (q_i)_Tx_T$ uniquely.
\pagebreak[4]

\begin{definition}
We define a level $r$ SoS relaxation to be the following vector program with variables $\bm{V}_S$ for $S \in [n]_{\le r}$:
\begin{align*}
	\text{Maximize}\qquad&\displaystyle\sum_{T \in [n]_{\le r}}p_T\norm{\bm{V}_T}^2&&\\
	\text{subject to}\qquad&\displaystyle\sum_{T \in [n]_{\le r}}(q_i)_T\langle \bm{V}_T, \bm{V}_S\rangle \ge 0 & \forall& S \in [n]_{\le r}, i = 1, \ldots, m\\
	&\langle \bm{V}_{S_1}, \bm{V}_{S_2}\rangle = \langle \bm{V}_{S_3}, \bm{V}_{S_4}\rangle &\forall& S_1 \cup S_2 = S_3 \cup S_4\text{ and }S_i \in [n]_{\le r}\\
	&\langle \bm{V}_{S_1}, \bm{V}_{S_2}\rangle \ge 0 &\forall& S_1, S_2 \in [n]_{\le r}\\
	&\norm{\bm{V}_{\phi}}^2 = 1&&
\end{align*}
\end{definition}

First, note that this is indeed a relaxation because if the optimum solution to the original program was $x_i = b_i \in \{0, 1\}$, then the $1$-dimensional solution $\bm{V}_{T} = \displaystyle\prod_{i \in T} b_i$ satisfies the constraints and gives the same objective value.

Observe that we have $mn^{O(r)}$ constraints. This problem can be reformulated as a semidefinite program as follows: Introduce real variables $y_{S_1, S_2}$ to mean $\langle \bm{V}_{S_1}, \bm{V}_{S_2}\rangle$. So, we get a linear program in $y_{S_1, S_2}$ and moreover, the existence of vectors $\bm{V}_S$ for a given collection of $y_{S_1, S_2}$ is equivalent to saying that $Y = (y_{S_1, S_2})$ (which is an $n^{O(r)}\times n^{O(r)}$ matrix) is positive semidefinite. So, this program can be solved in time polynomial in the number of constraints. In most cases, we have $m$ to be constant. In that case, this program can be solved in $n^{O(r)}$ time.

Here, $r$ is called the number of levels of the program. It is known that if $r$ is as large as $n$, then we get actual probability distributions and hence, we would have solved the problem exactly. In general, we can study the tradeoff between the approximation guarantee and running time as $r$ grows.

\section{Examples}

We now give SoS relaxations for the natural integer program for some problems. We will describe the intended meaning of the basic linear program's variables $\{x_i\}_{i \in [n]}$ but the SoS relaxation will only contain the variables $\bm{V}_S$ for $|S| \le r$, where $r$ is the number of levels. $r$ can be arbitrary but in most cases, for notational simplicity, we just consider it to be at least the minimum size of a set $S$ that is present in the objective or one of the constraints. So, for example, in Densest $k$-subgraph, we assume $r \ge 2$ because the objective contains $\bm{V}_{\{u, v\}}$ for edges $(u, v)$. We can work with $r = 1$ but need to precisely explain what relaxation we are working with. We will show an example of this in the next section.

\subsection{Maximum Independent Set}

An instance of Maximum Independent Set is a graph $G = (V, E)$. The objective is to find a subset of vertices $S$ such that no edge $(u, v)$ has $u, v \in S$ and the subset $S$ is as large as possible. In the basic program, we have variables $x_u$ which indicate whether the vertex $u$ is in the final independent set. So, we need to maximize $\displaystyle\sum_{u \in V} x_u$ subject to $x_ux_v = 0$ for all edges $(u, v)$. Note that this condition ensures that the resulting set has no edges within. Assume $V = [n]$. The level $r$ SoS relaxation is as follows.
\begin{align*}
\text{Maximize}\qquad&\displaystyle\sum_{u\in V}\norm{\bm{V}_{\{u\}}}^2&&\\
\text{subject to}\qquad&\langle \bm{V}_{\{u, v\}}, \bm{V}_S \rangle = 0 &\forall& (u, v) \in E, S \in [n]_{\le r}\\
&\langle \bm{V}_{S_1}, \bm{V}_{S_2}\rangle = \langle \bm{V}_{S_3}, \bm{V}_{S_4}\rangle &\forall& S_1 \cup S_2 = S_3 \cup S_4\text{ and }S_i \in [n]_{\le r}\\
&\langle \bm{V}_{S_1}, \bm{V}_{S_2}\rangle \ge 0 &\forall& S_1, S_2 \in [n]_{\le r}\\
&\norm{\bm{V}_{\phi}}^2 = 1&&
\end{align*}

\subsection{Max K-CSP}

An instance of Max $K$-CSP over alphabet $[q]$ contains $m$ constraints $C_1, \ldots, C_m$ over $n$ variables $x_1, \ldots, x_n$. Each constraint $C_i$ is a boolean predicate on a subset of $K$ distinct variables. That is, if $T_i$ is the subset of $K$ distinct variables for the $i$th constraint, then $C_i$ is a function from $[q]^{T_i}$ to $\{0, 1\}$. An assignment is a mapping of the variables to $[q]$. We say that an assignment satisfies $C_i$ if the evaluation of $C_i$ on the assignment restricted to $T_i$ is $1$. The objective is to assign values from $[q]$ to the variables $x_1, \ldots, x_n$ such that maximum number of constraints are satisfied.

For each $i \le m$ and $\alpha \in [q]^{T_i}$, let $C_i(\alpha)$ indicate whether the assignment $\alpha$ satisfies $C_i$. In the basic program, we have variables $y_{(j, \alpha_j)}$ which indicate whether the assignment to $x_j$ is $\alpha_j$. So, two immediate constraints are $\displaystyle\sum_{\alpha_j \in [q]} y_{(j, \alpha_j)} = 1$ and for $\alpha_j \neq \alpha_j'$ we have $y_{(j, \alpha_j)}y_{(j, \alpha_j')} = 0$. For $\alpha \in [q]^{T_i}$, denote by $y_{(T_i, \alpha)}$ the product $\displaystyle\prod_{j \in T_i} y_{(j, \alpha_j)}$ which indicates whether the final assignment to the variables restricted to $T_i$ is $\alpha$. So, we need to maximize the number indices $i$ with $\displaystyle\sum_{\alpha \in [q]^{T_i}} C_i(\alpha)y_{(T_i, \alpha)} = 1$ because this is equivalent to $C_i$ being satisfied. In the level $r$ SoS relaxation, we have variables $\bm{V}_{(S, \alpha)}$ for all subsets $S \in [n]_{\le r}$ for all assignments $\alpha \in [q]^{S}$. The level $r$ SoS relaxation is as follows:
\begin{align*}
\text{Maximize}~~&\displaystyle\sum_{i = 1}^m\displaystyle\sum_{\alpha \in [q]^{T_i}} C_i(\alpha)\norm{\bm{V}_{(T_i, \alpha)}}^2&&\\
\text{subject to}~~&\langle \bm{V}_{(S_1, \alpha_1)}, \bm{V}_{(S_2, \alpha_2)} \rangle = 0 &\forall& \alpha_1(S_1 \cap S_2) \neq \alpha_2(S_1 \cap S_2)\text, S_1, S_2 \in [n]_{\le r}\\
&\langle \bm{V}_{(S_1, \alpha_1)}, \bm{V}_{(S_2, \alpha_2)}\rangle = \langle \bm{V}_{(S_3, \alpha_3)}, \bm{V}_{(S_4, \alpha_4)}\rangle &\forall& S_1 \cup S_2 = S_3 \cup S_4, \alpha_1 \circ \alpha_2 = \alpha_3 \circ \alpha_4, S_i \in [n]_{\le r}\\
&\displaystyle\sum_{\alpha\in [q]} \langle\bm{V}_{\{j\}, [j \to \alpha]}, \bm{V}_S\rangle = \norm{\bm{V}_S}^2&\forall& S \in [n]_{\le r}, j \in [n]\\
&\langle \bm{V}_{S_1}, \bm{V}_{S_2}\rangle \ge 0 &\forall& S_1, S_2 \in [n]_{\le r}\\
& \norm{\bm{V}_{\phi}}^2 = 1&&
\end{align*}

Here, $\alpha(S_1 \cap S_2)$ is the assignment $\alpha$ restricted to $S_1 \cap S_2$, the first condition ensures that there are no contradictions in partial assignments for two sets. If $\alpha_1 \in [q]^{S_1}, \alpha_2 \in [q]^{S_2}$ which do not contradict each other, then $\alpha_1 \circ \alpha_2$ is the assignment on $S_1 \cup S_2$ that is the union of the two assignments. The second condition is a simple consistency constraint for the union of two partial assignments. The third constraint enforces that each variable is assigned some letter from $[q]$.

\subsection{Densest k-Subgraph}

An instance of Densest $k$-Subgraph is an undirected unweighted graph $G = (V, E)$ and a positive integer $k$. The objective is to find a subset $W$ of $V$ with exactly $k$ vertices such that the number of edges with both endpoints in $W$, is maximized.

The variable $x_u$ indicates whether the vertex $u$ is in the final solution. So, we need to have $\displaystyle\sum_{u \in V} x_u = k$ and the number of edges is $\displaystyle\sum_{(u, v) \in E} x_ux_v$, which we need to maximize. Assume $V = [n]$. The level $r$ SoS relaxation is as follows.
\begin{align*}
\text{Maximize}\qquad&\displaystyle\sum_{(u, v)\in E}\norm{\bm{V}_{\{u, v\}}}^2&&\\
\text{subject to}\qquad&\displaystyle\sum_{v \in V}\langle \bm{V}_{\{v\}}, \bm{V}_S \rangle = k\norm{\bm{V}_S}^2 &\forall& S \in [n]_{\le r}\\
&\langle \bm{V}_{S_1}, \bm{V}_{S_2}\rangle = \langle \bm{V}_{S_3}, \bm{V}_{S_4}\rangle &\forall& S_1 \cup S_2 = S_3 \cup S_4\text{ and }S_i \in [n]_{\le r}\\
&\langle \bm{V}_{S_1}, \bm{V}_{S_2}\rangle \ge 0 &\forall& S_1, S_2 \in [n]_{\le r}\\
&\norm{\bm{V}_{\phi}}^2 = 1&&
\end{align*}

\section{Maximum Clique on random graphs}\label{sosproof}

An instance of Maximum Clique is a graph $G = (V, E)$ and the objective is to find the size of the largest complete graph that is a subgraph, known as a clique, of $G$.

Through a series of work, in particular \cite{hastad} followed by \cite{ponnuswami}, it is known that maximum clique is hard to approximate within a factor of $n / 2^{(\log n)^{3/4 + \epsilon}}$ for any $\epsilon > 0$ where $n$ is the number of vertices, assuming $NP \not\subseteq BPTIME(2^{(\log n)^{O(1)}})$. But it is still interesting to understand how well we can do on average case instances, that is, when the graph is randomly picked from a predetermined distribution.

In particular, we consider Erd\"{o}s-R\'{e}nyi random graphs $G \sim G(n, 1/2)$ which is a graph $G = (V, E)$ on $n$ vertices where for each $u \neq v$, the edge $(u, v)$ is present in $E$ with probability $1/2$. By standard probabilistic arguments, it can be shown that $G \sim G(n, 1/2)$ has no cliques of size more than $2 \log n$ with high probability.

It is natural to consider the SoS relaxation of the standard integer program and study how it performs on a graph $G$ sampled from $G(n, 1/2)$. Feige and Krauthgamer\cite{probable} proved that a weaker hierarchy known as the {L}ov\'asz-{S}chrijver hierarchy for $r$ levels returns an optimum value of $\Theta(\sqrt{n / 2^r})$, with high probability. We will prove the upper bound for the SoS hierarchy as is studied here.

The basic program has boolean variables $x_u$ for $u \in V$ where $x_u$ indicates whether $u$ is in the largest clique. So, we need to maximize $\displaystyle\sum_{u \in V} x_u$ subject to $x_ux_v = 0$ for all pairs $(u, v)$, with $u \neq v$, that are not edges. The constraint ensures that two chosen vertices are always connected by an edge. The level $r$ SoS relaxation $\mathcal{P}_r$ for maximum clique is as follows.
\begin{align*}
\text{Maximize}\;&\displaystyle\sum_{u\in V}\norm{\bm{V}_{\{u\}}}^2&&\\
\text{subject to}\;&\langle \bm{V}_{S_1}, \bm{V}_{S_2} \rangle = 0 &\forall& S_1, S_2 \in [n]_{\le r}\text{ such that }\exists u\neq v \in S_1 \cup S_2, (u, v) \not\in E\\
&\langle \bm{V}_{S_1}, \bm{V}_{S_2}\rangle = \langle \bm{V}_{S_3}, \bm{V}_{S_4}\rangle &\forall& S_1 \cup S_2 = S_3 \cup S_4\text{ and }S_i \in [n]_{\le r}\\
&\langle \bm{V}_{S_1}, \bm{V}_{S_2}\rangle \ge 0 &\forall& S_1, S_2 \in [n]_{\le r}\\
&\norm{\bm{V}_{\phi}}^2 = 1&&
\end{align*}

Here, for $r \ge 2$, the first constraint is equivalent to $\langle \bm{V}_{\{u, v\}}, \bm{V}_S \rangle = 0$ for all $(u, v) \not\in E, u \neq v, S \in [n]_{\le r}$. The reason we write it in a different manner above is to incorporate the case $r = 1$. When $r = 1$, the constraint is precisely $\langle \bm{V}_{\{u\}}, \bm{V}_{\{v\}} \rangle = 0$ for all $(u, v) \not\in E, u \neq v$.

We will analyze this SDP by relating it to a function on graphs known as the Lov\'{a}sz $\vartheta$ function. Lov\'{a}sz\cite{lov} introduced a function $\vartheta(G)$ that can be computed efficiently which gives an upper bound on $\alpha(G)$, the size of the maximum independent set in $G$. The function is usually defined using orthonormal representations of graphs but it can be shown to be equivalent (see for instance, \cite[Section 9.2]{geomrep}) to the following definition.

\begin{definition}[Lov\'{a}sz $\vartheta$ function]
$\vartheta(G)$ is the optimum value of the following SDP on variables $\bm{W}_u$ for $u \in V$:
\begin{align*}
\text{Maximize}\qquad&\displaystyle\sum_{u, v\in V} \langle \bm{W}_u, \bm{W}_v\rangle&\\
\text{subject to}\qquad\quad&\quad\langle \bm{W}_u, \bm{W}_v \rangle = 0 &\forall (u, v) \in E\\
&\displaystyle\sum_{u\in V} \norm{\bm{W}_u}^2 = 1&
\end{align*}
\end{definition}

Let $\overline{G}$ be the complement graph of $G$, that is, $\overline{G} = (V, E')$ where $(u, v) \in E'$ if and only if $u \neq v$ and $(u, v) \not\in E$. The following lemma relates the optimum value of $P_1$, the level $1$ SoS relaxation, to the value of $\vartheta$ of the complement graph.

\begin{lemma}\label{base}
The optimum value of $\mathcal{P}_1$ for $G$ is at most $\vartheta(\overline{G})$
\end{lemma}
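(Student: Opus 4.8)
The plan is to exhibit a feasible solution to the $\vartheta(\overline{G})$ SDP whose objective value is at least the optimum of $\mathcal{P}_1$, which immediately gives the desired inequality since $\vartheta(\overline{G})$ is the maximum over feasible solutions. Let $\{\bm{V}_{\phi}, \bm{V}_{\{u\}}\}$ be an optimal solution to $\mathcal{P}_1$, with objective value $\mathrm{OPT} = \sum_{u \in V} \norm{\bm{V}_{\{u\}}}^2$. I would first dispose of the degenerate case $\mathrm{OPT} = 0$ (then there is nothing to prove since $\vartheta(\overline{G}) \ge 0$), and otherwise define the candidate vectors for the $\vartheta$-SDP by rescaling: set $\bm{W}_u := \bm{V}_{\{u\}}/\sqrt{\mathrm{OPT}}$ for each $u \in V$. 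By construction $\sum_{u \in V}\norm{\bm{W}_u}^2 = 1$, so the normalization constraint of the $\vartheta$-SDP is satisfied.

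Next I would check the orthogonality constraint. The edges of $\overline{G}$ are exactly the non-edges $(u,v)$ of $G$ with $u \ne v$. For such a pair, the first constraint of $\mathcal{P}_1$ (in the $r=1$ form noted right after the definition of $\mathcal{P}_r$) gives $\langle \bm{V}_{\{u\}}, \bm{V}_{\{v\}}\rangle = 0$, hence $\langle \bm{W}_u, \bm{W}_v\rangle = 0$ for every $(u,v) \in E'$. So $\{\bm{W}_u\}$ is feasible for $\vartheta(\overline{G})$.

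Finally I would evaluate the $\vartheta$-objective on this solution: $\sum_{u,v \in V} \langle \bm{W}_u, \bm{W}_v\rangle = \frac{1}{\mathrm{OPT}}\sum_{u,v \in V}\langle \bm{V}_{\{u\}}, \bm{V}_{\{v\}}\rangle = \frac{1}{\mathrm{OPT}}\norm{\sum_{u \in V}\bm{V}_{\{u\}}}^2$. Using the nonnegativity constraints $\langle \bm{V}_{\{u\}}, \bm{V}_{\{v\}}\rangle \ge 0$ of $\mathcal{P}_1$, every cross term is nonnegative, so this sum is at least $\sum_{u \in V}\langle\bm{V}_{\{u\}},\bm{V}_{\{u\}}\rangle = \mathrm{OPT}$, i.e. the $\vartheta$-objective of $\{\bm{W}_u\}$ is at least $\mathrm{OPT}^2/\mathrm{OPT} = \mathrm{OPT}$. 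Since $\vartheta(\overline{G})$ is the supremum of its objective over feasible solutions, $\vartheta(\overline{G}) \ge \mathrm{OPT}$, which is the claim.

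The main thing to be careful about — rather than a genuine obstacle — is making sure the right constraints of $\mathcal{P}_1$ are being invoked: the orthogonality of $\bm{W}_u, \bm{W}_v$ on non-edges comes from the first ("clique") constraint specialized to $r=1$, while the inequality $\sum_{u,v}\langle \bm{W}_u,\bm{W}_v\rangle \ge \mathrm{OPT}$ relies crucially on the pointwise nonnegativity constraint $\langle \bm{V}_{S_1},\bm{V}_{S_2}\rangle \ge 0$; without the latter the cross terms could be negative and the bound would fail. Everything else is routine rescaling and bookkeeping.
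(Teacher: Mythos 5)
Your setup (rescaling $\bm{W}_u = \bm{V}_{\{u\}}/\sqrt{\mathrm{OPT}}$ and checking feasibility for the $\vartheta(\overline{G})$ SDP) is exactly the paper's, but the final step has a genuine gap. Dropping the nonnegative cross terms gives only
$\sum_{u,v}\langle \bm{V}_{\{u\}},\bm{V}_{\{v\}}\rangle \ge \sum_{u}\norm{\bm{V}_{\{u\}}}^2 = \mathrm{OPT}$,
so after dividing by $\mathrm{OPT}$ the $\vartheta$-objective of your solution is bounded below by $1$, not by $\mathrm{OPT}$. Your concluding arithmetic ``$\mathrm{OPT}^2/\mathrm{OPT}$'' silently assumes the stronger bound $\sum_{u,v}\langle \bm{V}_{\{u\}},\bm{V}_{\{v\}}\rangle \ge \mathrm{OPT}^2$, which you never established; when $\mathrm{OPT}>1$ (the only interesting regime) this does not follow from nonnegativity of the cross terms.

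The missing ingredient is the vector $\bm{V}_{\phi}$ and the local consistency constraints. The paper writes
$\sum_{u,v}\langle \bm{V}_{\{u\}},\bm{V}_{\{v\}}\rangle = \norm{\sum_u \bm{V}_{\{u\}}}^2 = \norm{\sum_u \bm{V}_{\{u\}}}^2\,\norm{\bm{V}_{\phi}}^2 \ge \bigl(\langle \sum_u \bm{V}_{\{u\}}, \bm{V}_{\phi}\rangle\bigr)^2$
by Cauchy--Schwarz, and then uses the consistency constraint $\langle \bm{V}_{\{u\}},\bm{V}_{\phi}\rangle = \norm{\bm{V}_{\{u\}}}^2$ (take $S_1=\{u\}, S_2=\phi, S_3=S_4=\{u\}$) to identify the right-hand side with $\mathrm{OPT}^2$. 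Note that this route does not need the pointwise nonnegativity constraint at all; what it needs is $\norm{\bm{V}_{\phi}}^2=1$ and the consistency constraints. If you insert that Cauchy--Schwarz step in place of the cross-term argument, the rest of your proof goes through verbatim.
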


\begin{proof}
Consider the optimal solution $\{\bm{V}_S\}_{S \in [n]_{\le 1}}$ for $\mathcal{P}_1$ with $\displaystyle\sum_{u \in V}\norm{\bm{V}_{\{u\}}}^2 = FRAC$, the optimum value of $\mathcal{P}_1$. Consider the SDP formulation for $\vartheta(\overline{G})$ with variables $\bm{W}_u$ and set $\bm{W}_u = \bm{V}_{\{u\}} / \sqrt{FRAC}$. We have $\displaystyle\sum_{u \in V}\norm{\bm{W}_{u}}^2 = 1$. For each edge $(u, v) \in E'$, we have $\langle \bm{W}_u, \bm{W}_v\rangle = \langle \bm{V}_{\{u\}}, \bm{V}_{\{v\}} \rangle / FRAC = 0$ since $(u, v) \not\in E$. Finally,
\begin{align*}
FRAC\times\vartheta(\overline{G}) &\ge 
FRAC \times \displaystyle\sum_{u, v\in V} \langle \bm{W}_u, \bm{W}_v\rangle\\
&= \displaystyle\sum_{u, v\in V} \langle \bm{V}_{\{u\}}, \bm{V}_{\{v\}}\rangle\\
&= \langle \displaystyle\sum_{u\in V} \bm{V}_{\{u\}}, \displaystyle\sum_{u\in V} \bm{V}_{\{u\}}\rangle\\
&= \norm{\displaystyle\sum_{u\in V} \bm{V}_{\{u\}}}^2\norm{\bm{V}_{\phi}}^2\\
&\ge \left(\langle \displaystyle\sum_{u\in V} \bm{V}_{\{u\}}, \bm{V}_{\phi}\rangle\right)^2\\
&= \left(\displaystyle\sum_{u\in V} \langle \bm{V}_{\{u\}}, \bm{V}_{\phi}\rangle\right)^2 = \left(\displaystyle\sum_{u\in V} \norm{\bm{V}_{\{u\}}}^2\right)^2 = FRAC^2
\end{align*}
where the second inequality follows by Cauchy-Schwarz inequality. This proves that $FRAC \le \vartheta(\overline{G})$ as required.
\end{proof}

The following theorem was shown by \cite{probable} for the Lov\'{a}sz-Schrijver hierarchy for the Maximum Independent Set problem. We modify it slightly by showing it for the SoS hierarchy for the Maximum Clique problem, which is equivalent to Maximum Independent Set problem on the complement graph. Using a stronger hierarchy makes their proof much simpler and this simpler version is presented below.

For a subset $S \subseteq V$ of a graph $G = (V, E)$, define $\Gamma_G(S) = \{u \in V \;|\; \exists v \in S, (u, v) \in E\}$ to be the set of neighbors of $S$ in $G$ and define $G - S$ to be the graph obtained from $G$ by deleting the vertices in $S$ along with their edges.

\begin{theorem}[\cite{probable}]\label{fk03}
Fix $0 < \epsilon < 1$. Let $G = (V, E)$ be a graph on $n$ vertices and let $r \ge 1$ be an integer such that for all subsets $S \subseteq V$ of size at most $r$, the graph $G' = G - S - \Gamma_{\overline{G}}(S)$ on $n'$ vertices satisfies the following assumptions:
\begin{itemize}
	\item $\vartheta(\overline{G'}) \le 2(1 + \epsilon)\sqrt{n'}$
	\item Each vertex in $G'$ has degree between $\frac{n'}{2(1 + \epsilon)}$ and $\frac{(1 + \epsilon)n'}{2}$.
\end{itemize}
Let $d = (1 - \epsilon)\sqrt{2}$. If $d^{r + 1} \le \epsilon^2\sqrt{n}$, then $\mathcal{P}_r$ has an optimum value of at most $4\sqrt{n}/d^{r + 1}$.
\end{theorem}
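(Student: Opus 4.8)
I would prove the bound by induction on $r$, with the engine being a ``conditioning'' step that converts a level-$r$ solution on $G$ into a level-$(r-1)$ solution on the common $G$-neighbourhood of a single vertex. Write $\beta=(1+\epsilon)(1-\epsilon)^2$, and note the elementary fact $\beta<1$ for $\epsilon\in(0,1)$; this, rather than anything about the SDP, is what forces the constants to work. First one may assume $d>1$ (i.e.\ $\epsilon<1-\tfrac1{\sqrt2}$): otherwise $d^{r+1}\le 1$, and $\mathcal P_r\le\mathcal P_1\le\vartheta(\overline G)\le 2(1+\epsilon)\sqrt n\le 4\sqrt n\le 4\sqrt n/d^{r+1}$ is already the claim. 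For the base case $r=1$, Lemma~\ref{base} gives that $\mathcal P_1$ on $G$ has value at most $\vartheta(\overline G)$, which by the first assumption (with $S=\emptyset$) is at most $2(1+\epsilon)\sqrt n$; and $2(1+\epsilon)\le 4/d^2$ is exactly $\beta\le1$, so $\mathcal P_1\le 4\sqrt n/d^2$. The same one-line computation gives, for free, the \emph{unconditional} bound $\mathcal P_1\le 4\sqrt{n_S}/d^2$ on every $G_S:=G-S-\Gamma_{\overline G}(S)$ with $|S|\le r$, which is what I will want at the bottom of the recursion.

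\textbf{The conditioning step.} In the inductive step let $\{\bm V_S\}$ be an optimal solution of $\mathcal P_r$ on $G$ with value $FRAC=\sum_u x_u$, where $x_u:=\norm{\bm V_{\{u\}}}^2$. For each $u$ with $x_u>0$, the vectors $\bm V_{S\cup\{u\}}/\sqrt{x_u}$ ($|S|\le r-1$) form a feasible solution of $\mathcal P_{r-1}$ on $G_u:=G-\{u\}-\Gamma_{\overline G}(\{u\})$, the common $G$-neighbourhood of $u$: the clique and consistency constraints are inherited, the new $\bm V_{\emptyset}=\bm V_{\{u\}}/\sqrt{x_u}$ has squared norm $1$, and whenever $v\notin V(G_u)$ the clique constraint already forces $\bm V_{\{u,v\}}=0$, so the solution genuinely lives on $V(G_u)$. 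Moreover $G_u-S'-\Gamma_{\overline{G_u}}(S')=G-(\{u\}\cup S')-\Gamma_{\overline G}(\{u\}\cup S')$ with $|\{u\}\cup S'|\le r$, so $G_u$ still satisfies the theorem's hypotheses with parameter $r-1$. Its value is $FRAC_u:=x_u^{-1}\sum_{v\in V(G_u)}\norm{\bm V_{\{u,v\}}}^2$. The key inequality comes from Cauchy--Schwarz against $\bm V_\emptyset$ followed by the clique constraints:
\[
FRAC^2=\Big(\sum_u\langle\bm V_{\{u\}},\bm V_\emptyset\rangle\Big)^2\le\Big\|\sum_u\bm V_{\{u\}}\Big\|^2=\sum_{u,v}\langle\bm V_{\{u\}},\bm V_{\{v\}}\rangle=FRAC+\sum_u x_u\,FRAC_u,
\]
where the last equality uses $\langle\bm V_{\{u\}},\bm V_{\{v\}}\rangle=\norm{\bm V_{\{u,v\}}}^2$ for $v\sim u$ and $=0$ for $v\not\sim u$, $u\ne v$.

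\textbf{Closing the induction.} It remains to bound $\sum_u x_u\,FRAC_u$. Here $G_u$ has $n_u=\deg_G(u)\le\frac{(1+\epsilon)n}{2}$ vertices by the degree assumption (with $S=\emptyset$), so the inductive hypothesis gives $FRAC_u\le 4\sqrt{n_u}/d^r$, whence by Cauchy--Schwarz and $\sum_u x_u n_u\le\frac{(1+\epsilon)n}{2}FRAC$,
\[
\sum_u x_u\,FRAC_u\le\frac{4}{d^r}\sum_u x_u\sqrt{n_u}\le\frac{4}{d^r}\sqrt{FRAC}\,\sqrt{\textstyle\sum_u x_u n_u}\le\frac{4\,FRAC}{d^r}\sqrt{\frac{(1+\epsilon)n}{2}}=\sqrt\beta\cdot\frac{4\sqrt n}{d^{r+1}}\cdot FRAC,
\]
using the identity $\sqrt{(1+\epsilon)/2}\,/d^r=\sqrt\beta/d^{r+1}$. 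Dividing out $FRAC$ gives $FRAC\le 1+\sqrt\beta\cdot\frac{4\sqrt n}{d^{r+1}}$. Since the hypothesis $d^{r+1}\le\epsilon^2\sqrt n$ forces $\frac{4\sqrt n}{d^{r+1}}\ge\frac{4}{\epsilon^2}$, and a short computation shows $1-\sqrt\beta\ge\frac{\epsilon^2}{4}$ for $\epsilon\in(0,1)$, the additive $1$ is absorbed: $1\le(1-\sqrt\beta)\frac{4\sqrt n}{d^{r+1}}$, i.e.\ $FRAC\le\frac{4\sqrt n}{d^{r+1}}$, completing the induction.

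\textbf{The main obstacle.} The step I expect to fight with is the one glossed over above: to apply the inductive hypothesis to $G_u$ one needs $d^r\le\epsilon^2\sqrt{n_u}$, and the degree assumption only yields $n_u\ge\frac{n}{2(1+\epsilon)}$, which is \emph{not} enough to propagate the side condition down the recursion. For the vertices $u$ violating it one must instead fall back on the crude estimate $FRAC_u\le\vartheta(\overline{G_u})\le 2(1+\epsilon)\sqrt{n_u}$ (Lemma~\ref{base} plus the first assumption with $S=\{u\}$); the point is that such a $u$ satisfies $\sqrt{n_u}<d^r/\epsilon^2$, which together with the degree bounds confines $G_u$ --- and hence $G$ --- to a narrow size window in which the constant $4$ and the threshold $\epsilon^2\sqrt n$ have been chosen precisely so that $4\sqrt n/d^{r+1}$ still dominates. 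Carrying out this case analysis, and the analogous check at the lowest levels where the unconditional $\vartheta$-estimate must already beat $4\sqrt{n_S}/d^{\ell+1}$, is the delicate arithmetic bookkeeping of the proof; the conditioning mechanics and the single inequality $(1+\epsilon)(1-\epsilon)^2<1$ are the conceptual content.
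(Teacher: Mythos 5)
Your argument is, in all of its working parts, the paper's own proof: the same induction on $r$, the same conditioning $\bm{V}_{S\cup\{w\}}/\norm{\bm{V}_{\{w\}}}$ onto $G_w=G-\{w\}-\Gamma_{\overline G}(\{w\})$, the same Cauchy--Schwarz against $\bm{V}_{\phi}$ yielding $FRAC^2\le FRAC+\sum_w x_w\,FRAC_w$, and the same absorption of the diagonal term via $(1+\epsilon)(1-\epsilon)^2<1$ combined with $4\sqrt n/d^{r+1}\ge 4/\epsilon^2$. The only cosmetic differences are that the paper bounds each $n_w\le\frac{(1+\epsilon)n}{2}$ uniformly instead of using your second Cauchy--Schwarz on $\sum_w x_w\sqrt{n_w}$, and folds the $+1$ into the estimate $FRAC_w\le 4\sqrt n/d^{r+2}-1$ before summing rather than dividing by $FRAC$ first.

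The ``main obstacle'' you flag is real, and it is worth saying plainly that the paper's proof simply ignores it: the induction hypothesis is invoked for $G_w$ without verifying $d^{r+1}\le\epsilon^2\sqrt{n_w}$, and as you compute this does not follow from $d^{r+2}\le\epsilon^2\sqrt n$ and $n_w\ge n/(2(1+\epsilon))$, because $\sqrt{n_w}$ may shrink by a factor $\sqrt{2(1+\epsilon)}$ while the left side only shrinks by $d=(1-\epsilon)\sqrt 2<\sqrt{2(1+\epsilon)}$; the condition can already fail one level down. However, your proposed fallback does not close this gap. For a ``bad'' $w$ the crude estimate gives only $FRAC_w\le 2(1+\epsilon)\sqrt{n_w}$ with $n_w\ge n/(2(1+\epsilon))$, i.e.\ $FRAC_w=\Theta(\sqrt n)$, whereas the quantity it must not exceed, $4\sqrt n/d^{r+2}-1$, can be as small as the constant $4/\epsilon^2$ when $d^{r+2}$ is near its permitted maximum $\epsilon^2\sqrt n$. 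So bad vertices encountered near the top of the recursion cannot be absorbed this way; the ``narrow window'' is not narrow in the direction you need. The clean fix is to strengthen the numerical hypothesis so that it provably propagates, e.g.\ replace $d^{r+1}\le\epsilon^2\sqrt n$ by $(2(1+\epsilon))^{(r+1)/2}\le\epsilon^2\sqrt n$: since $d\le\sqrt{2(1+\epsilon)}$ this dominates $d^{\,\ell+1}\cdot(2(1+\epsilon))^{(r-\ell)/2}$ for every level $\ell$, hence guarantees $d^{\,\ell+1}\le\epsilon^2\sqrt{n_S}$ on every graph $G_S$ reached by the recursion, and the level-$1$ base case needs no numerical condition at all. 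This weakens the stated range of $r$ only by constants in the base of the exponent, and in the one place the theorem is applied (Theorem~\ref{conc}, where $r\le\epsilon'\log n$ for small $\epsilon'$) the strengthened condition holds with polynomial room to spare. As written, though, neither your argument nor the paper's proves the theorem for $r$ all the way up to $\log_d(\epsilon^2\sqrt n)$.
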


\begin{proof}
Let us denote the optimum value of $\mathcal{P}_r$ by $FRAC$. We induct on $r$. When $r = 1$, using lemma \ref{base} and the first assumption for $S = \phi$, we get $FRAC \le \vartheta(\overline{G}) \le 2(1 + \epsilon)\sqrt{n} < 4\sqrt{n}/d^2$. Now assume that the result holds for $r$ levels and consider $r + 1$ levels for a graph $G$ satisfying the given conditions for all subsets $S$ of size at most $r + 1$. Let the optimal SoS vectors for $\mathcal{P}_{r + 1}$ be $\{\bm{V}_S\}_{S \in [n]_{\le r + 1}}$. We wish to prove that $FRAC = \displaystyle\sum_{u \in V} \norm{\bm{V}_{\{u\}}}^2 \le 4\sqrt{n}/d^{r + 2}$.

For each $u \in V$, define the graph $G_u = G - \{u\} - \Gamma_{\overline{G}}(\{u\})$ and let it have vertex set $V_u$ with $n_u$ vertices. Observe that $G_u$ satisfies the conditions given in the theorem for all subsets $S$ of size at most $r$. Indeed, if we consider any subset $S$ of $G_u$ of size at most $r$, then $G_u - S - \Gamma_{\overline{G_u}}(S) = G - S' - \Gamma_{\overline{G}}(S')$ where $S' = S \cup \{u\}$ is of size at most $r + 1$, which proves that the two assumptions hold. So, by the induction hypothesis, since $G_u$ satisfies the assumptions for sets of size at most $r$, the relaxation $\mathcal{P}_r$ for $G_u$ has an optimum value of at most $4\sqrt{n_u} / {d^{r + 1}}$.

Let $R = \{u \in V \;| \; \norm{\bm{V}_{\{u\}}} > 0\}$ be the set of vertices with nonzero SoS vectors. Fix $w \in R$. Define the vectors $\bm{U}_{S} = \bm{V}_{\{w\} \cup S} / \norm{\bm{V}_{\{w\}}}$. Informally, this can be thought of to capture the event that $S$ is a subset of the maximum clique conditioned on the event that $w$ is already chosen in the clique. We claim that $\bm{U}_{S}$ is a feasible solution for $\mathcal{P}_r$ for $G_w$. Note that $\bm{U}_S$ for $|S| \le r$ is well defined since $|\{w\} \cup S| \le r + 1$. For any $(u, v) \not\in E, u \neq v$ and $S_1, S_2$ of size at most $r$ such that $u, v \in S_1 \cup S_2$, we have $\langle \bm{U}_{S_1}, \bm{U}_{S_2} \rangle = \langle \bm{V}_{\{w\}\cup S_1}, \bm{V}_{\{w\} \cup S_2} \rangle / \norm{\bm{V}_{\{w\}}}^2 = 0$ since $u, v \in (\{w\}\cup S_1)\cup (\{w\}\cup S_2)$. For $S_1, S_2, S_3, S_4$ of size at most $r$ such that $S_1 \cup S_2 = S_3 \cup S_4$, we have that $(\{w\} \cup S_1) \cup (\{w\} \cup S_2) = (\{w\} \cup S_3) \cup (\{w\} \cup S_4)$ and hence, $\langle \bm{U}_{S_1}, \bm{U}_{S_2}\rangle = \langle \bm{V}_{\{w\} \cup S_1}, \bm{V}_{\{w\} \cup S_2}\rangle / \norm{\bm{V}_{\{w\}}}^2 = \langle \bm{V}_{\{w\} \cup S_3}, \bm{V}_{\{w\} \cup S_4}\rangle / \norm{\bm{V}_{\{w\}}}^2 = \langle \bm{U}_{S_3}, \bm{U}_{S_3}\rangle$ and $\langle \bm{U}_{S_1}, \bm{U}_{S_2}\rangle = \langle \bm{V}_{\{w\} \cup S_1}, \bm{V}_{\{w\} \cup S_2}\rangle / \norm{\bm{V}_{\{w\}}}^2 \ge 0$. Finally, $\norm{\bm{U}_{\phi}}^2 = \norm{\bm{V}_{\{w\}}}^2 / \norm{\bm{V}_{\{w\}}}^2 = 1$.

By the induction hypothesis, we get that $\displaystyle\sum_{u \in V_w} \norm{\bm{U}_{\{u\}}}^2 \le 4\sqrt{n_u}/{d^{r + 1}}$ which implies $\displaystyle\sum_{u \in V_w} \langle\bm{V}_{\{u\}}, \bm{V}_{\{w\}}\rangle \le (4\sqrt{n_u}/{d^{r + 1}})\norm{\bm{V}_{\{w\}}}^2$. By taking $S = \phi$ in the assumptions, we get that $w$ has degree at most $\frac{(1 + \epsilon)n}{2}$ and so, $n_u \le \frac{(1 + \epsilon)n}{2}$. Using this and the assumption that $d^{r + 1} \le \epsilon^2\sqrt{n}$, we get $4\sqrt{n_u}/{d^{r + 1}} \le 4(1 - \epsilon)\sqrt{1 + \epsilon}\sqrt{n}/{d^{r + 2}} < 4\sqrt{n}/d^{r + 2} - 1$ and therefore, $\displaystyle\sum_{u \in V_w} \langle\bm{V}_{\{u\}}, \bm{V}_{\{w\}}\rangle \le 4\sqrt{n}/d^{r + 2} - 1$.

We have $FRAC = \displaystyle\sum_{u \in V} \norm{\bm{V}_{\{u\}}}^2 = \displaystyle\sum_{u \in V} \langle \bm{V}_{\{u\}}, \bm{V}_{\phi}\rangle = \langle\displaystyle\sum_{u \in V} \bm{V}_{\{u\}}, \bm{V}_{\phi}\rangle$. By Cauchy-Schwarz, this is at most $\norm{\displaystyle\sum_{u \in V} \bm{V}_{\{u\}}}\cdot\norm{\bm{V}_{\phi}} = \norm{\displaystyle\sum_{u \in V} \bm{V}_{\{u\}}}$. When $(u, w) \not\in E$, we have $\langle \bm{V}_{\{u\}}, \bm{V}_{\{w\}}\rangle = 0$. And when $w \not\in R$, we have $\bm{V}_{\{w\}} = 0$. Using these, we get
\begin{align*}
FRAC^2 &\le \langle\displaystyle\sum_{u \in V} \bm{V}_{\{u\}},\displaystyle\sum_{u \in V} \bm{V}_{\{u\}}\rangle\\
&= \displaystyle\sum_{u \in V, w \in V} \langle\bm{V}_{\{u\}}, \bm{V}_{\{w\}}\rangle\\
&= \displaystyle\sum_{u \in V, w \in R} \langle\bm{V}_{\{u\}}, \bm{V}_{\{w\}}\rangle \\
&= \displaystyle\sum_{w \in R}\left(\norm{\bm{V}_{\{w\}}}^2 + \displaystyle\sum_{u \in V_w} \langle\bm{V}_{\{u\}}, \bm{V}_{\{w\}}\rangle\right)\\
&\le \displaystyle\sum_{w \in R}\left(\norm{\bm{V}_{\{w\}}}^2 + (4\sqrt{n}/d^{r + 2} - 1)\norm{\bm{V}_{\{w\}}}^2\right)\\
&= (4\sqrt{n}/d^{r + 2})\displaystyle\sum_{w \in R}\norm{\bm{V}_{\{w\}}}^2\\
&\le (4\sqrt{n}/d^{r + 2})\displaystyle\sum_{w \in V}\norm{\bm{V}_{\{w\}}}^2\\
&= (4\sqrt{n}/d^{r + 2})FRAC
\end{align*}
This completes the induction.
\end{proof}

We finally argue that that $G \sim G(n, 1/2)$ satisfies the assumptions in Theorem \ref{fk03} with high probability when $r = O(\log n)$. Juh{\'a}sz\cite{juhasz} showed a concentration result on the value of $\vartheta(G)$ for $G \sim G(n, 1/2)$ using eigenvalue concentration bounds of random matrices\cite{furedi} but by using stronger concentration bounds\cite{kriv}, Feige and Krauthgamer\cite{probable} were able to show the following result.

\begin{theorem}[\cite{probable}]\label{conc}
For any $\epsilon > 0$, there exists $\epsilon' > 0$ such that for any $r \le \epsilon' \log n$, $G = (V, E) \sim G_{n, 1/2}$ satisfies the following condition with high probability: for all subsets $S \subseteq V$ of size at most $r$, the graph $G' = G - S - \Gamma_{G}(S)$ on $n'$ vertices satisfies the following assumptions:
\begin{itemize}
	\item $\vartheta(G') \le 2(1 + \epsilon)\sqrt{n'}$
	\item Each vertex in $\overline{G'}$ has degree between $\frac{n'}{2(1 + \epsilon)}$ and $\frac{(1 + \epsilon)n'}{2}$.
\end{itemize}
\end{theorem}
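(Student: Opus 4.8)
The plan is to establish the two required properties separately, in each case reducing the statement about all $2^{O(r)}\binom{n}{r}$ induced subgraphs $G' = G - S - \Gamma_G(S)$ to a concentration bound for a single subgraph followed by a union bound over the choices of $S$. First I would observe that once the vertex set $S$ (with $|S|\le r$) is fixed, the graph $G'$ is distributed as $G(n', 1/2)$ \emph{conditioned} on the value of $n'$: the edges inside $V \setminus (S \cup \Gamma_G(S))$ are independent of the edges touching $S$, so conditioning on which vertices survive does not bias the internal edge distribution. Since $r = O(\log n)$, with high probability every such $G'$ has $n' = n - |S| - |\Gamma_G(S)| \ge n(1 - o(1))/2^{O(\log n)}$, in particular $n'$ is polynomially large; I would first fix an event of probability $1 - n^{-\omega(1)}$ on which all these $n'$ are simultaneously large enough for the subsequent concentration bounds to be meaningful. (For the degree condition one additionally needs $n'$ large enough that $\sqrt{n' \log n} = o(n')$, which holds whenever $n' \gg \log n$.)

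For the second (degree) property, I would fix $S$ and a surviving vertex $v$; its degree in $\overline{G'}$ is a sum of $n' - 1$ independent Bernoulli$(1/2)$ variables, hence by a Chernoff bound it lies in $[(1-\epsilon/2) n'/2, (1+\epsilon/2) n'/2] \subseteq [n'/(2(1+\epsilon)), (1+\epsilon)n'/2]$ except with probability $\exp(-\Omega(\epsilon^2 n'))$. Taking a union bound over the at most $n^{r+1} = 2^{O(r \log n)} = 2^{O(\log^2 n)}$ pairs $(S,v)$, this is dominated by $\exp(-\Omega(\epsilon^2 n' )) \le \exp(-\Omega(n^{\Omega(1)}))$ as long as $\epsilon'$ is chosen small enough that $n' = n^{1 - O(\epsilon')}$ beats $\log^2 n$; here I need to be a little careful and use $\epsilon' \log n$ as the level bound precisely so that $n' \ge n^{1-\delta}$ for a small constant $\delta$, making the per-event failure probability beat the $2^{O(\log^2 n)}$ union-bound factor.

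For the first property I would invoke the concentration of the Lov\'asz theta function for $G(n', 1/2)$: it is known (Juh\'asz, and in the sharp form used here, Krivelevich--Vu / Coja-Oghlan) that $\vartheta(G(m,1/2))$ concentrates around $\Theta(\sqrt m)$ with deviations that are $O(\sqrt{m} / \mathrm{polylog})$ or better, so that for $m$ polynomially large, $\vartheta(G(m,1/2)) \le 2(1+\epsilon)\sqrt m$ except with probability $\exp(-m^{\Omega(1)})$ (this is exactly the strengthening over F\"uredi--Koml\'os-type bounds that the text attributes to \cite{kriv}). Applying this with $m = n'$, conditioning as above so that $G'$ is genuinely $G(n',1/2)$, and union-bounding over the $2^{O(\log^2 n)}$ sets $S$ finishes this case, again using that the per-set failure probability $\exp(-(n')^{\Omega(1)}) = \exp(-n^{\Omega(1)})$ swamps the union-bound factor once $\epsilon'$ is small enough.

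The main obstacle is the first property: a naive Chernoff/Azuma argument for $\vartheta$ via its expression as a spectral/SDP quantity only gives concentration with deviations of order $\sqrt{m}$ (the F\"uredi--Koml\'os scale), which is exactly the same order as $\vartheta$ itself and therefore useless for pinning $\vartheta$ below $2(1+\epsilon)\sqrt m$. One genuinely needs the sharp eigenvalue/theta concentration results of \cite{kriv} (rather than \cite{furedi}) to get deviation $o(\sqrt m)$, and the deviation must moreover decay fast enough (faster than $1/\mathrm{polylog}(m)$, indeed with sub-exponentially small tail probability) to survive the $2^{\Theta(\log^2 n)}$-term union bound over all $S$ of size $\le r$. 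Everything else --- the conditioning argument, the degree Chernoff bound, and the bookkeeping of constants so that $\epsilon'$ is chosen after $\epsilon$ --- is routine; I would simply cite \cite{kriv} for the sharp $\vartheta$ tail bound and not reprove it.
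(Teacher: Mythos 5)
The paper does not prove Theorem~\ref{conc} --- it is imported verbatim from \cite{probable}, with the surrounding text only indicating that the argument rests on replacing the F\"uredi--Koml\'os-type bounds used by Juh\'asz with the sharper concentration results of \cite{kriv}. Your proposal (conditioning so that $G'$ is genuinely $G(n',1/2)$, Chernoff for the degrees and for $n'\ge n^{1-O(\epsilon')}$, the sharp $\vartheta$ tail bound from \cite{kriv}, and a union bound over the $n^{O(r)}=2^{O(\log^2 n)}$ choices of $S$) is a correct reconstruction of exactly that argument, and you correctly identify the one non-routine ingredient --- the sub-$\sqrt{m}$ deviation with sub-exponential tails for $\vartheta(G(m,1/2))$ --- as the place where \cite{kriv} must be invoked rather than reproved.
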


Observe that when $G \sim G(n, 1/2)$, the graph $\overline{G}$ is also distributed as $G(n, 1/2)$. So, for any $\epsilon > 0$, there exists $\epsilon' > 0$ such that for any $r \le \epsilon' \log n$, with high probability, for all subsets $S \subseteq V$ of size at most $r$, the graph $G' = \overline{G} - S - \Gamma_{\overline{G}}(S)$ on $n'$ vertices satisfies the two assumptions in Theorem \ref{conc}. But note that $\overline{G'} = G - S - \Gamma_{\overline{G}}(S)$ which proves that $G$ satisfies the conditions of Theorem \ref{fk03} with high probability for $r = O(\log n)$.

We get that $\mathcal{P}_r$ for $G \sim G(n, 1/2)$ has an optimum value of at most $4\sqrt{n} / ((1 - \epsilon)\sqrt{2})^{r  + 1}$ with high probability. This in particular gives an algorithm for the the Planted Clique problem. An instance of Planted Clique is a graph $G = (V, E)$ drawn from one of the following distributions equally likely:
\begin{itemize}
	\item $G(n, 1/2)$ - The Erdos-Renyi graph on $n$ vertices where each edge $(u, v)$ is present with probability $1/2$ for all $u \neq v$.
	\item $G(n, 1/2, k)$ - The graph is first sampled from $G(n, 1/2)$ and then $k$ vertices are chosen uniformly at random and clique is planted on these $k$ vertices. That is, if $W$ is the chosen $k$ vertices, then for all $u, v \in W$ with $u \neq v$, the edge $(u, v)$ is added if not already present. The resulting graph is returned.
\end{itemize}
The objective is to determine which distribution the graph $G$ is drawn from, with probability of being correct at least some constant $p > 1/2$.

If $k \gg 4\sqrt{n}/((1 - \epsilon)\sqrt{2})^{r  + 1}$, then we get that SoS for $r$ levels distinguishes the two distributions with high probability because the optimum value of the relaxation is at most $4\sqrt{n}/((1 - \epsilon)\sqrt{2})^{r  + 1}$ for $G(n, 1/2)$ and is at least $k$ for $G(n, 1/2, k)$. So, we can solve the Planted Clique problem for $k \gg \sqrt{n / 2^r}$ in $n^{O(r)}$ time.

We will later study SoS lower bounds for Maximum Clique on random graphs, where we show that this upper bound is almost tight, and this Planted Clique view will be very useful for constructing integrality gaps.

\section{Approximation algorithms for low threshold-rank graphs}


For a graph $G = (V, E)$ on $n$ vertices, consider the normalized adjacency matrix $A$. The graph $G$ is informally called a low threshold-rank graph if $A$ has very few eigenvalues more than a positive constant $\epsilon$. These kind of graphs satisfy many interesting properties. For instance, if there is only one eigenvalue more than $0.5$, then that means that the second eigenvalue is at most $0.5$ and by Cheeger's inequality, this graph is an expander. More generally, Gharan and Trevisan\cite{gt} proved that low threshold rank graphs roughly look like a union of expanders, in the sense that few edges of the graph can be deleted so that each remaining component is an expander.

Guruswami and Sinop\cite{gs} obtained approximation algorithms to many standard graph problems including Unique Games, by rounding the solutions to the SoS hierarchy via an idea known as propagation. For a positive integer $r$ and constant $\epsilon > 0$, by using $O(r/\epsilon^2)$ levels of the SoS hierarchy, they were able to obtain approximation algorithms with approximation guarantees depending inversely on $\lambda_r(L)$, the $r$th smallest eigenvalue of the normalized Laplacian $L$ of the graph. In particular, for low threshold-rank graphs (where $\lambda_r(L)$ is large for small $r$), we get good approximation algorithms which are efficient.

Similar results were obtained by Barak, Raghavendra and Steurer\cite{brs} by rounding SoS solutions via an idea known as local to global correlation.

For the sake of exposition, we will describe the result and rounding algorithm of $\cite{gs}$ for Minimum bisection. An instance of Minimum Bisection is a graph $G = (V, E)$ and an integer $k$. The objective is to find a subset $S \subseteq V$ with exactly $k$ vertices such that the number of edges with exactly one endpoint in $S$ is minimized.

\begin{theorem}[\cite{gs}]\label{guruswamisinop}
Consider any instance of Minimum Bisection $(G, k)$ and for any subset $S$ of the vertices $V$, let $\Gamma(S)$ denote the number of edges with exactly one endpoint in $S$. For a positive integer $r$ and constant $\epsilon > 0$, in time $n^{O(r/\epsilon^2)}$, we can find a set $R' \subseteq V$ such that $k(1 - o(1)) \le |R'| \le (1 + o(1))$ and $\Gamma(R') \le \frac{1 + \epsilon}{\min(1, \lambda_{r + 1}(L))}\Gamma(R)$, where $R$ is the optimal solution, namely $R = \text{argmin}_{|S| = k}\Gamma(S)$.
\end{theorem}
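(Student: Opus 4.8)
The plan is to solve a high enough level of the SoS relaxation for Minimum Bisection, round the fractional solution by \emph{conditioning} on a small set of variables and then rounding the remaining ones independently, and charge the rounding loss to a spectral low-rank-approximation estimate (the Guruswami--Sinop column-selection lemma). Concretely, I would take the level-$\ell$ SoS relaxation with $\ell = \Theta(r/\epsilon^2)$, objective $\min\sum_{(u,v)\in E}\norm{\bm{V}_{\{u\}}-\bm{V}_{\{v\}}}^2$ and cardinality constraint $\sum_v\langle\bm{V}_{\{v\}},\bm{V}_S\rangle = k\norm{\bm{V}_S}^2$, and pass to the equivalent pseudoexpectation picture: a degree-$\ell$ operator $\tilde{\mathbb{E}}$ on $\{0,1\}^n$ with $\tilde{\mathbb{E}}[\sum_u x_u]=k$, for which $\mathrm{sdp}:=\tilde{\mathbb{E}}\big[\sum_{(u,v)\in E}(x_u-x_v)^2\big]\le\Gamma(R)$ since the relaxation only relaxes. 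Writing $\bm{V}_u:=\bm{V}_{\{u\}}$, we have $\langle\bm{V}_u,\bm{V}_v\rangle=\tilde{\mathbb{E}}[x_ux_v]$ and $\norm{\bm{V}_u}^2=\tilde{\mathbb{E}}[x_u]$, and the relaxation has $n^{O(\ell)}=n^{O(r/\epsilon^2)}$ variables, which will account for the running time.

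Next I would specify the rounding (propagation/seeded rounding): choose a seed set $S$ of $\Theta(r/\epsilon^2)$ variables --- greedily, taking at each step the variable whose conditioning most decreases the potential $\Phi=\sum_u\mathrm{Var}(x_u)$ --- sample values for the seed variables from $\tilde{\mathbb{E}}$ restricted to $S$, condition to get $\tilde{\mathbb{E}}':=\tilde{\mathbb{E}}_{\mid S}$, and finally set $\hat x_u=1$ independently with probability $p_u:=\tilde{\mathbb{E}}'[x_u]\in[0,1]$, outputting $R'=\{u:\hat x_u=1\}$ (patched to size within $(1\pm o(1))k$ if needed). For the cardinality guarantee, $\mathbb{E}[|R'|]=\sum_u p_u$ has expectation $k$ over the seed sampling by the law of total expectation for pseudoexpectations, and both the seed-sampling fluctuation of $\sum_u p_u$ and the independent-rounding fluctuation of $|R'|$ around their means are $o(k)$ by Chebyshev/Chernoff, so $|R'|=(1\pm o(1))k$ with good probability.

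For the cut cost, the starting identity is $\mathbb{E}[\Gamma(R')]=\sum_{(u,v)\in E}\tilde{\mathbb{E}}'[(x_u-x_v)^2]+2\sum_{(u,v)\in E}\big(\tilde{\mathbb{E}}'[x_ux_v]-p_up_v\big)$, which holds because independent rounding gives $\Pr[\hat x_u\ne\hat x_v]=p_u+p_v-2p_up_v$ while $\tilde{\mathbb{E}}'[(x_u-x_v)^2]=p_u+p_v-2\tilde{\mathbb{E}}'[x_ux_v]$ (using $x_u^2=x_u$). Averaging over the seed sampling, the first sum has expectation exactly $\mathrm{sdp}\le\Gamma(R)$, so everything reduces to showing that the covariance term $\sum_{(u,v)\in E}\big|\tilde{\mathbb{E}}'[x_ux_v]-p_up_v\big|$ is, in expectation over the conditioning, at most a constant times $\frac{\epsilon}{\min(1,\lambda_{r+1}(L))}\,\mathrm{sdp}$; combined with the identity this yields $\mathbb{E}[\Gamma(R')]\le\frac{1+\epsilon}{\min(1,\lambda_{r+1}(L))}\Gamma(R)$.

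That covariance bound is the heart of the argument and the step I expect to be hardest. The plan here has three ingredients: (a) conditioning is projection of the SDP vectors, i.e.\ $\tilde{\mathbb{E}}'[x_ux_v]-p_up_v=\langle\bm{V}_u^{\perp S},\bm{V}_v^{\perp S}\rangle$ where $\bm{V}_u^{\perp S}$ is the component of $\bm{V}_u$ orthogonal to $\mathrm{span}\{\bm{V}_s:s\in S\}$, so each term is at most $\norm{\bm{V}_u^{\perp S}}\norm{\bm{V}_v^{\perp S}}$; (b) the variance-decrement/greedy analysis, which shows conditioning on $\Theta(r/\epsilon^2)$ seed variables drives the typical pairwise residual correlation down by a factor $\epsilon$ relative to the spectral floor; and (c) the Guruswami--Sinop column-subset-selection (low-rank approximation) lemma: since $\mathrm{sdp}$ equals, up to the degree normalization built into $L$, $\Tr(X L X^{\top})$ for the matrix $X$ of SDP vectors, and $L$ has $(r+1)$-st eigenvalue $\lambda_{r+1}(L)$, the component of $X$ in the eigenspaces of $L$ of eigenvalue $\ge\lambda_{r+1}(L)$ carries Frobenius mass at most $\mathrm{sdp}/\lambda_{r+1}(L)$ while the complementary part has rank $\le r$ and is spanned by $r$ columns --- conditioning on those $r$ seed variables makes $\sum_u\norm{\bm{V}_u^{\perp S}}^2\lesssim\mathrm{sdp}/\lambda_{r+1}(L)$, and unconditionally $\lesssim\mathrm{sdp}$ since projection is a contraction ($\norm{\bm{V}_u^{\perp S}-\bm{V}_v^{\perp S}}\le\norm{\bm{V}_u-\bm{V}_v}$). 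Feeding (a)--(c) into a Cauchy--Schwarz over $E$ produces the claimed bound. The genuinely delicate points --- making SoS conditioning correspond to vector projection while respecting all consistency constraints, getting the normalization right so that it is precisely $\lambda_{r+1}$ of the normalized Laplacian that appears, and balancing the ``number of seed variables vs.\ $\epsilon$'' trade-off against the ``rank $r$ vs.\ $\lambda_{r+1}$'' trade-off in a single $O(r/\epsilon^2)$-round procedure --- are exactly where the real work lies; the concentration and cardinality steps are routine.
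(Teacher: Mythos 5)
Your outline has the right skeleton and is recognizably the Guruswami--Sinop propagation rounding, phrased in pseudoexpectation language instead of the paper's explicit $\bm{U}_{S,\alpha}$ inclusion--exclusion vectors (these two views are dual, and the identity you write for $\mathbb{E}[\Gamma(R')]$ is exactly the one the paper derives). Two remarks on where you diverge, one a genuine confusion and one a genuine improvement.

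The confusion: you propose to choose the seed set $S$ greedily by minimizing $\Phi=\sum_u\mathrm{Var}(x_u)$ \emph{and} then invoke the Guruswami--Sinop column-subset-selection lemma. These are alternative seed-selection strategies, not complementary ingredients. The greedy variance-decrement argument is the Barak--Raghavendra--Steurer ``local-to-global correlation'' route; the Guruswami--Sinop route (which this theorem follows) simply enumerates all subsets $S$ of size $r'=O(r/\epsilon^2)$ in the $n^{O(r')}$ time budget and uses Lemma~\ref{gscol} to assert that at least one choice of $S$ drives $\sum_u\norm{\Pi^\perp\bm V_u}^2$ down to $\frac{1}{1-\epsilon}\sum_{i>r}\sigma_i^2$, where the $\sigma_i$ are the singular values of the matrix $X$ with columns $\bm V_{\{u\}}$. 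If you commit to the GS column-selection lemma, the greedy conditioning analysis is unnecessary; if you commit to greedy conditioning, you need the BRS global-correlation bound instead. As written, the proposal would need both arguments to be spelled out, which is more than either paper does. Relatedly, your claim that the residual covariance is $O\bigl(\tfrac{\epsilon}{\lambda_{r+1}(L)}\mathrm{sdp}\bigr)$ overstates what the GS lemma gives: the lemma yields $\tfrac{1}{1-\epsilon}\sum_{i>r}\sigma_i^2\le\tfrac{1}{(1-\epsilon)\lambda_{r+1}(L)}\mathrm{sdp}$, so the additive term is $\Theta(\mathrm{sdp}/\lambda_{r+1}(L))$, not $\Theta(\epsilon\,\mathrm{sdp}/\lambda_{r+1}(L))$; the $\epsilon$ in the final guarantee comes from the $\tfrac{1}{1-\epsilon}$ slack in the column-selection lemma, not from an extra savings factor.

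The improvement: your step bounding $\sum_{i>r}\sigma_i^2$ is cleaner than the paper's. You decompose $X$ using the $L$-eigenbasis projector $P$ onto the bottom $r$ eigenvectors, observe $\mathrm{sdp}\propto\Tr(XLX^\top)\ge\lambda_{r+1}(L)\norm{X(I-P)}_F^2$, and then note that $XP$ is a rank-$\le r$ matrix, so by optimality of the truncated SVD, $\sum_{i>r}\sigma_i^2\le\norm{X(I-P)}_F^2\le\mathrm{sdp}/(d\,\lambda_{r+1}(L))$. This replaces the paper's invocation of the Hoffman--Wielandt inequality (Lemma~\ref{hoff}) together with the rearrangement inequality by a single Eckart--Young-style comparison, and it is correct. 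The paper's Hoffman--Wielandt route proves the same inequality but is less transparent.

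\end{document}
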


We will first describe two ingredients that we will need for our proof. For the rest of this section, for any square matrix $A$, let $\lambda_t(A)$ denotes the $t$-th smallest eigenvalue of $A$ and let $\norm{A}_F$ denote the Frobenius norm of $A$.

Consider any matrix $X \in \mathbb{R}^{n' \times m'}$. Let the singular values be $\sigma_1 \ge \sigma_2 \ge \ldots \ge \sigma_{m'} \ge 0$ and let the columns of the matrix be $\bm{v}_1, \ldots, \bm{v}_{m'}$. For any $r \le m'$, we know that among all projection maps $\Pi^{\perp}$ on $\mathbb{R}^{n'}$ into the orthogonal complement of subspaces of dimension $r$, the minimum value of $\displaystyle\sum_{i = 1}^{m'} \norm{\Pi^{\perp}\bm{v}_i}^2$ is $\displaystyle\sum_{i = r + 1}^{m'} \sigma_i^2$. We would like to analyze what happens if we restrict our projection to be in a subspace spanned by a subset of the $\bm{v}_i$s. The following lemma claims that we can still achieve a good guarantee.

\begin{lemma}[\cite{gspre}]\label{gscol}
	For all positive integers $r' \ge r$, there exist $r'$ columns of $X$ such that if $\Pi^{\perp}$ is the projection map on $\mathbb{R}^{n}$ into the orthogonal complement of the subspace spanned by these columns, then $\displaystyle\sum_{i = 1}^{m'} \norm{\Pi^{\perp} \bm{v}_i}^2 \le \frac{r' + 1}{r' - r + 1}\left(\displaystyle\sum_{i = r + 1}^{m'} \sigma_i^2\right)$. In particular, for any $\epsilon > 0$, if $r' \ge r / \epsilon$, then the right hand side is at most $\frac{1}{1 - \epsilon} \left(\displaystyle\sum_{i = r + 1}^{m'} \sigma_i^2\right)$.
\end{lemma}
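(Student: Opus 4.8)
The plan is to prove the lemma by \emph{volume sampling} of columns, in the style of Deshpande--Rademacher--Vempala--Wang and Guruswami--Sinop. Write $\lambda_i = \sigma_i^2$ for brevity. First handle the degenerate case $\sigma_{r'} = 0$: then $X$ has rank less than $r'$, so one may choose $r'$ columns whose span equals the entire column space of $X$, making $\Pi^{\perp}\bm{v}_i = 0$ for every $i$ and the left-hand side $0$. So assume $\sigma_{r'} > 0$. For $T \subseteq \{1,\dots,m'\}$ with $|T| = r'$, let $X_T$ be the $n' \times r'$ submatrix of the columns indexed by $T$, let $G_T = X_T^{\top} X_T$ be their Gram matrix, and let $\Pi_T^{\perp}$ project onto the orthogonal complement of their span. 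Define a probability distribution on $r'$-subsets by $\Pr[T] \propto \det(G_T)$, the squared volume of the parallelepiped they span; by the Cauchy--Binet formula the normalizing constant equals $\sum_{|T| = r'} \det(G_T) = e_{r'}(\lambda_1, \dots, \lambda_{m'})$, the $r'$-th elementary symmetric polynomial of the $\lambda_i$, which is positive under our assumption.

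The first main step is the identity
\[
	\mathbb{E}_{T}\left[\sum_{i = 1}^{m'} \norm{\Pi_T^{\perp}\bm{v}_i}^2\right] \;=\; (r' + 1)\,\frac{e_{r'+1}(\lambda_1, \dots, \lambda_{m'})}{e_{r'}(\lambda_1, \dots, \lambda_{m'})}.
\]
This follows from the ``base times height'' relation for Gram determinants: for $i \notin T$ one has $\det(G_T)\,\norm{\Pi_T^{\perp}\bm{v}_i}^2 = \det(G_{T \cup \{i\}})$, while for $i \in T$ the quantity $\norm{\Pi_T^{\perp}\bm{v}_i}^2$ vanishes. Multiplying the bracketed sum by $\det(G_T)$, summing over $i$ and then over all $T$, and observing that each $(r'+1)$-subset $T'$ arises as $T \cup \{i\}$ in exactly $r' + 1$ ways, the numerator becomes $(r' + 1)\sum_{|T'| = r' + 1}\det(G_{T'}) = (r'+1)\,e_{r'+1}(\lambda)$ (again by Cauchy--Binet), and dividing by the normalizing constant $e_{r'}(\lambda)$ gives the identity.

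The second main step is the purely algebraic bound, valid for any $r' \ge r$,
\[
	\frac{e_{r'+1}(\lambda_1,\dots,\lambda_{m'})}{e_{r'}(\lambda_1,\dots,\lambda_{m'})} \;\le\; \frac{1}{r' - r + 1}\sum_{i = r+1}^{m'}\lambda_i \;=\; \frac{1}{r'-r+1}\sum_{i = r+1}^{m'}\sigma_i^2.
\]
To obtain it, decompose each subset by its intersection with $\{1,\dots,r\}$: with $\mu = (\lambda_{r+1}, \dots, \lambda_{m'})$ one has $e_{r'+1}(\lambda) = \sum_{A \subseteq \{1,\dots,r\}}\left(\prod_{i \in A}\lambda_i\right) e_{r'+1-|A|}(\mu)$ and likewise $e_{r'}(\lambda) = \sum_{A \subseteq \{1,\dots,r\}}\left(\prod_{i \in A}\lambda_i\right) e_{r'-|A|}(\mu)$, interpreting $e_k(\mu) = 0$ when $k$ exceeds the length of $\mu$. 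Now apply, term by term, the Maclaurin-type inequality $e_{k+1}(\mu) \le \frac{1}{k+1}\, e_1(\mu)\, e_k(\mu)$ (immediate from expanding $e_1(\mu)\, e_k(\mu)$ and dropping the nonnegative ``repeated index'' terms) with $k = r' - |A|$, using $r' + 1 - |A| \ge r' + 1 - r$ and $e_1(\mu) = \sum_{i > r}\sigma_i^2$; summing over $A$ against the weights $\prod_{i \in A}\lambda_i$ yields the claim.

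Combining the two steps gives $\mathbb{E}_T\!\left[\sum_i \norm{\Pi_T^{\perp}\bm{v}_i}^2\right] \le \frac{r'+1}{r'-r+1}\sum_{i > r}\sigma_i^2$, so at least one $r'$-subset $T$ attains this bound, which is the first assertion, the chosen columns being those indexed by $T$. For the ``in particular'' part, $r' \ge r/\epsilon$ gives $r \le \epsilon r'$, hence $r' - r + 1 \ge (1-\epsilon)r' + 1 \ge (1-\epsilon)(r'+1)$, so $\frac{r'+1}{r'-r+1} \le \frac{1}{1-\epsilon}$. The step I expect to be the main obstacle is the volume-sampling expectation identity: verifying the base-times-height determinant relation and checking that the double summation collapses with the correct multiplicity takes some care, although both are standard facts about Gram determinants and the Cauchy--Binet formula.
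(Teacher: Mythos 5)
The paper does not prove Lemma \ref{gscol}; it is quoted from \cite{gspre} without proof, so there is no in-paper argument to compare against. Your proof is correct and is essentially the standard volume-sampling argument from that reference: the base-times-height identity $\det(G_{T\cup\{i\}})=\det(G_T)\,\norm{\Pi_T^{\perp}\bm{v}_i}^2$, the Cauchy--Binet collapse to $(r'+1)\,e_{r'+1}(\lambda)/e_{r'}(\lambda)$, and the elementary-symmetric-polynomial bound obtained by splitting subsets on $\{1,\dots,r\}$ are all sound, and the edge cases ($\sigma_{r'}=0$, and $e_k(\mu)=0$ for out-of-range $k$) are handled. No gaps.
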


We will also need an inequality on the Frobenius norm of the difference of two real symmetric matrices.

\begin{lemma}[Hoffman-Wielandt inequality]\label{hoff}
	Let $A, B$ be $n \times n$ normal matrices with eigenvalues $\lambda_1(A), \ldots, \lambda_n(A)$ and $\lambda_1(B), \ldots, \lambda_n(B)$ respectively, then
	\[\norm{A - B}_F^2 \ge \min_{\sigma\in S_n} \sum_{i = 1}^n |\lambda_i(A) - \lambda_{\sigma(i)}(B)|^2\]
\end{lemma}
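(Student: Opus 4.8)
The plan is to reduce the claim to a linear optimization over doubly stochastic matrices, using spectral decompositions together with the unitary invariance of the Frobenius norm. Since $A$ and $B$ are normal, write $A = U D_A U^*$ and $B = V D_B V^*$ where $U, V$ are unitary, $D_A = \mathrm{diag}(\lambda_1(A), \ldots, \lambda_n(A))$, and $D_B = \mathrm{diag}(\lambda_1(B), \ldots, \lambda_n(B))$. Because $\norm{\cdot}_F$ is unchanged under left or right multiplication by a unitary matrix, I would first observe that $\norm{A - B}_F = \norm{U^*(A - B)V}_F = \norm{D_A W - W D_B}_F$, where $W = U^* V$ is unitary.

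Next I would expand entrywise: the $(i,j)$ entry of $D_A W - W D_B$ is $(\lambda_i(A) - \lambda_j(B)) W_{ij}$, so $\norm{A - B}_F^2 = \sum_{i,j} |\lambda_i(A) - \lambda_j(B)|^2 \, |W_{ij}|^2$. Setting $p_{ij} = |W_{ij}|^2$, the key structural fact is that the matrix $P = (p_{ij})$ is doubly stochastic: its entries are nonnegative, and since every row and every column of the unitary $W$ is a unit vector, every row sum and every column sum of $P$ equals $1$.

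Finally, $\norm{A - B}_F^2 = \sum_{i,j} |\lambda_i(A) - \lambda_j(B)|^2 \, p_{ij}$ is a linear functional of $P$ over the Birkhoff polytope of $n \times n$ doubly stochastic matrices. A linear functional on a polytope attains its minimum at a vertex, and by the Birkhoff--von Neumann theorem the vertices of the Birkhoff polytope are exactly the permutation matrices. Therefore $\norm{A - B}_F^2 \ge \min_{\sigma \in S_n} \sum_{i=1}^n |\lambda_i(A) - \lambda_{\sigma(i)}(B)|^2$, which is the desired inequality.

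The step I expect to require the most care is justifying that $(|W_{ij}|^2)_{i,j}$ is doubly stochastic and invoking Birkhoff--von Neumann to pass from the minimum over doubly stochastic matrices to the minimum over permutations; the remaining manipulations (spectral decomposition, unitary invariance, entrywise expansion) are routine.
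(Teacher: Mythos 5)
Your proof is correct and complete. The paper does not actually prove Lemma~\ref{hoff}; it defers to the reference \cite[Theorem VI.4.1]{bhatia}, and your argument --- spectral decompositions of the normal matrices, unitary invariance of $\norm{\cdot}_F$ to reduce to $\norm{D_A W - W D_B}_F^2 = \sum_{i,j}|\lambda_i(A)-\lambda_j(B)|^2|W_{ij}|^2$, and then Birkhoff--von Neumann to pass from the doubly stochastic matrix $(|W_{ij}|^2)$ to a permutation --- is precisely the standard proof given there, with all the key steps (double stochasticity of $(|W_{ij}|^2)$, minimization of a linear functional at a vertex of the Birkhoff polytope) correctly justified.
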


For a proof, see for example \cite[Theorem VI.4.1, page 165]{bhatia}. In particular, this inequality holds if $A, B$ are symmetric real matrices.

In the following proof, assume $G$ is $d$-regular, but Guruswami and Sinop's results work for general graphs.

\begin{proof}[Proof outline of Theorem \ref{guruswamisinop}.] We will show a slightly weaker approximation guarantee of $(1 + \frac{1}{(1 - \epsilon)\lambda_{r + 1}(L)})\Gamma(R)$. This will illustrate the main idea behind the rounding algorithm, and getting the improved guarantee requires only a bit more work. Let the vertex set be $[n]$. In the basic program, we have variables $x_u$ which indicate whether $u \in R$ and so, we have the constraint $\displaystyle\sum_{u \in V} x_u = k$. Note that the expression $(x_u - x_v)^2$ indicates whether the edge $(u, v)$ is cut. So, the objective is $\displaystyle\sum_{(u, v) \in E}(x_u - x_v)^2$. For $r' = \Omega(r / \epsilon^2)$, we consider the SoS relaxation for $r' + 1$ levels:
\begin{align*}
	\text{Minimize}\qquad&\displaystyle\sum_{(u, v)\in E}\norm{\bm{V}_{\{u\}} - \bm{V}_{\{v\}}}^2&&\\
	\text{subject to}\qquad&\displaystyle\sum_{v \in V}\langle \bm{V}_{\{v\}}, \bm{V}_S \rangle = k\norm{\bm{V}_S}^2 &\forall& S \in [n]_{\le r'}\\
	&\langle \bm{V}_{S_1}, \bm{V}_{S_2}\rangle = \langle \bm{V}_{S_3}, \bm{V}_{S_4}\rangle &\forall& S_1 \cup S_2 = S_3 \cup S_4 \in [n]_{\le r'}\\
	&\langle \bm{V}_{S_1}, \bm{V}_{S_2}\rangle \ge 0 &\forall& S_1, S_2 \in [n]_{\le r'}\\
	& \norm{\bm{V}_{\phi}}^2 = 1&&\\
\end{align*}

Suppose $\bm{V}_S \in \mathbb{R}^{\gamma}$ is our optimal SoS solution. For all nonempty $S \subseteq [n]_{\le r'}, \alpha \in \{0, 1\}^S$, suppose $\alpha$ maps all of $S' \subseteq S$ to $1$ and all of $S - S'$ to $0$ for some $S' \subseteq S$, define $\bm{U}_{S, \alpha} = \displaystyle\sum_{S' \subseteq T \subseteq S} (-1)^{|T - S'|}\bm{V}_{T'}$, a vector intended to capture the event that $\alpha$ correctly indicates whether $u \in S$ is in $R$. This definition can be thought of to an application of the inclusion-exclusion principle. We also define $\bm{U}_{\phi, \phi} = \bm{V}_{\phi}$. In the rest of the proof, when $S = \phi$, there is no $\alpha \in \{0, 1\}^S$, but we instead assume by convention that there is a unique element $\phi \in \{0, 1\}^S$ with $\bm{U}_{S, \alpha} =\bm{U}_{\phi, \phi}$. Observe the following facts about $\bm{U}_{S, \alpha}$, which are verified by straightforward computations:
\begin{itemize}
	\item $\bm{U}_{S, \mathbbm{1}_S} = \bm{V}_S$ for all $S\in [n]_{\le r'}$ where $\mathbbm{1}_S$ maps all of $S$ to $1$ and by convention, $\mathbbm{1}_{\phi} = \phi$.
	\item $\displaystyle\sum_{\beta\in \{0, 1\}^{\{u\}}} \bm{U}_{S, \alpha \circ \beta} = \bm{U}_{S - \{u\}, \alpha}$ for all $u \in S \in [n]_{\le r'}, \alpha \in \{0, 1\}^{S - \{u\}}$. Here, $\alpha \circ \beta: \{0, 1\}^S$ sends $v \in S$ to $\alpha(v)$ if $v\neq u$ and $\beta(v)$ otherwise.
	\item For all $S, T \in [n]_{\le r'}$, if $\alpha \in \{0, 1\}^S, \beta \in \{0, 1\}^T$ are such that there exists $u \in S \cap T$ with $\alpha(u) \neq \beta(u)$, then $\langle \bm{U}_{S, \alpha}, \bm{U}_{T, \beta}\rangle = 0$.
	\item For all $S \in [n]_{\le r'}$, we have $\displaystyle\sum_{\alpha \in \{0, 1\}^S} \bm{U}_{S, \alpha} = V_{\phi}$ and $\displaystyle\sum_{\alpha \in \{0, 1\}^S} \norm{\bm{U}_{S, \alpha}}^2 = 1$. In particular, $\norm{\bm{U}_{\phi, \phi}}^2 = \norm{\bm{V}_{\phi}}^2 = 1$.
	\item For all $S, T, S', T' \in [n]_{\le r'}$ such that $S \cup T = S' \cup T'$ and all $\alpha \in \{0, 1\}^S, \beta \in \{0, 1\}^T, \alpha' \in \{0, 1\}^{S'}, \beta' \in \{0, 1\}^{T'}$ such that $\alpha(u) = \beta(u)$ for all $u \in S \cap T, \alpha'(u) = \beta'(u)$ for all $u \in S' \cap T'$ and $\alpha \circ \beta = \alpha' \circ \beta'$, we have $\langle \bm{U}_{S, \alpha}, \bm{U}_{T, \beta}\rangle = \langle \bm{U}_{S', \alpha'}, \bm{U}_{T', \beta'}\rangle$. Here, $\alpha\circ \beta:\{0, 1\}^{S \cup T}$ maps $u \in S$ to $\alpha(u)$ and $u \in T$ to $\beta(u)$ (note that this is well defined since the values match on the intersection) and $\alpha'\circ \beta'$ is similarly defined.
\end{itemize}
From the above consistency properties, we can think of $\norm{U_{S, \alpha}}^2$ as the probability that $R \cap S = \{u \in S \;|\; \alpha(u) = 1\}$. The rounding algorithm proceeds by guessing a subset $S \in [n]_{\le r'}$ (indeed, all guesses can be tried in $n^{O(r')}$ time) and choosing an assignment $\alpha \in \{0, 1\}^S$ with probability $\norm{\bm{U}_{S, \alpha}}^2$. Once $\alpha$ is chosen, the rounding algorithm returns the set $R'$ where, for all $u \in V$, $u$ is included in $R'$ with probability $\frac{\langle \bm{U}_{S, \alpha}, \bm{U}_{\{u\}, \mathbbm{1}_{\{u\}}}\rangle}{\langle \bm{U}_{S, \alpha}, \bm{U}_{S, \alpha}\rangle} = \frac{\langle \bm{U}_{S, \alpha}, \bm{V}_{\{u\}}\rangle}{\langle \bm{U}_{S, \alpha}, \bm{U}_{S, \alpha}\rangle}$. We remark that for all $u \in S$, $u$ is included in $R'$ if and only if $\alpha(u) = 1$. By Chernoff bounds, it can be shown that $k(1 - o(1)) \le |R'| \le k(1 + o(1))$ with high probability.

It remains to analyze $\Gamma(R')$ and compare it to $\Gamma(R)$. We will argue that there exists a subset $S$ such that the expectation $\mathbb{E}_{\alpha}[\Gamma(R')]$ over the choice of $\alpha$ satisfies our approximation guarantees. For ease of notation, let $E'$ be the set of directed edges of $G$, where each edge in $E$ occurs twice as two directed edges $(u, v)$ and $(v, u)$. We have $\Gamma(R) \ge \displaystyle\sum_{(u, v)\in E}\norm{\bm{V}_{\{u\}} - \bm{V}_{\{v\}}}^2 = \displaystyle\sum_{(u, v)\in E'}\norm{\bm{V}_{\{u\}}}^2 - \displaystyle\sum_{(u, v)\in E'}\langle\bm{V}_{\{u\}}, \bm{V}_{\{v\}}\rangle$.

Fix an $S \in [n]_{\le r'}$. Let $\Pi_1$ be the projection map on $\mathbb{R}^{\gamma}$ into the subspace $\text{span}\{\bm{U}_{S, \alpha}\}_{\alpha \in \{0, 1\}^S}$ and let $\Pi_1^{\perp}$ be the projection into the orthogonal complement of this subspace. Let $\Pi_2$ be the projection map on $\mathbb{R}^{\gamma}$ into the subspace $\text{span}\{\bm{V}_u\}_{u \in S}$ and let $\Pi_2^{\perp}$ be the projection into the orthogonal complement of this subspace. Observe that $\text{span}\{\bm{V}_u\}_{u \in S}$ is contained in $\text{span}\{\bm{U}_{S, \alpha}\}_{\alpha \in \{0, 1\}^S}$ and so, $\norm{\Pi_2\bm{v}} \le \norm{\Pi_1 \bm{v}} \Longrightarrow \norm{\Pi_2^{\perp}\bm{v}} \ge \norm{\Pi_1^{\perp} \bm{v}}$ for any $\bm{v} \in \mathbb{R}^{\gamma}$. Now, using the fact that $G$ is $d$-regular, we have
\begin{align*}
\mathbb{E}_{\alpha}[\Gamma(R')] &=	\mathbb{E}_{\alpha}[\displaystyle\sum_{(u, v) \in E} Pr[u \in R' \wedge v \not\in R'] + Pr[v \in R' \wedge u \not\in R']]\\
&= \mathbb{E}_{\alpha}[\displaystyle\sum_{(u, v) \in E'} Pr[u \in R' \wedge v \not\in R']]\\
&= \displaystyle\sum_{(u, v) \in E'}\displaystyle\sum_{\alpha \in \{0, 1\}^S}\norm{\bm{U}_{S, \alpha}}^2\frac{\langle \bm{U}_{S, \alpha}, \bm{V}_{\{u\}}\rangle}{\langle \bm{U}_{S, \alpha}, \bm{U}_{S, \alpha}\rangle} \times \left(1 - \frac{\langle \bm{U}_{S, \alpha}, \bm{V}_{\{v\}}\rangle}{\langle \bm{U}_{S, \alpha}, \bm{U}_{S, \alpha}\rangle}\right)\\
&= \displaystyle\sum_{(u, v) \in E'}\displaystyle\sum_{\alpha \in \{0, 1\}^S}\langle \bm{U}_{S, \alpha}, \bm{V}_{\{u\}}\rangle -  \displaystyle\sum_{(u, v) \in E'}\displaystyle\sum_{\alpha \in \{0, 1\}^S}\frac{\langle \bm{U}_{S, \alpha}, \bm{V}_{\{u\}}\rangle\langle \bm{U}_{S, \alpha}, \bm{V}_{\{v\}}\rangle}{\langle \bm{U}_{S, \alpha}, \bm{U}_{S, \alpha}\rangle}\\
&= \displaystyle\sum_{(u, v) \in E'}\langle \bm{V}_{\phi}, \bm{V}_{\{u\}}\rangle -  \displaystyle\sum_{(u, v) \in E'}\displaystyle\sum_{\alpha \in \{0, 1\}^S}\frac{\langle \bm{U}_{S, \alpha}, \bm{V}_{\{u\}}\rangle\langle \bm{U}_{S, \alpha}, \bm{V}_{\{v\}}\rangle}{\langle \bm{U}_{S, \alpha}, \bm{U}_{S, \alpha}\rangle}\\
&= \displaystyle\sum_{(u, v) \in E'}\norm{\bm{V}_{\{u\}}}^2 -  \displaystyle\sum_{(u, v) \in E'}\langle \Pi_1\bm{V}_{\{u\}}, \Pi_1\bm{V}_{\{v\}}\rangle\\
&\le \Gamma(R) + \displaystyle\sum_{(u, v) \in E'}\langle \bm{V}_{\{u\}}, \bm{V}_{\{v\}}\rangle - \displaystyle\sum_{(u, v) \in E'}\langle \Pi_1\bm{V}_{\{u\}}, \Pi_1\bm{V}_{\{v\}}\rangle\\
&= \Gamma(R) + \displaystyle\sum_{(u, v) \in E'}\langle \Pi_1^{\perp}\bm{V}_{\{u\}}, \Pi_1^{\perp}\bm{V}_{\{v\}}\rangle\\
&\le \Gamma(R) + \frac{1}{2}\displaystyle\sum_{(u, v) \in E'}(\norm{\Pi_1^{\perp}\bm{V}_{\{u\}}}^2 + \norm{\Pi_1^{\perp}\bm{V}_{\{v\}}}^2)\\
&= \Gamma(R) + \displaystyle\sum_{(u, v) \in E}(\norm{\Pi_1^{\perp}\bm{V}_{\{u\}}}^2 + \norm{\Pi_1^{\perp}\bm{V}_{\{v\}}}^2)\\
&\le \Gamma(R) + \displaystyle\sum_{(u, v) \in E}(\norm{\Pi_2^{\perp}\bm{V}_{\{u\}}}^2 + \norm{\Pi_2^{\perp}\bm{V}_{\{v\}}}^2)\\
&\le \Gamma(R) + d\displaystyle\sum_{u\in V}\norm{\Pi_2^{\perp}\bm{V}_{\{u\}}}^2
\end{align*}
The final step of the proof is to argue that there exists a subset $S \in [n]_{\le r'}$ such that $d\displaystyle\sum_{u\in V}\norm{\Pi_2^{\perp}\bm{V}_{\{u\}}}^2 \le \frac{1}{(1 - \epsilon)\lambda_{r + 1}(L)}\Gamma(R)$.

Consider the $\gamma \times n$ matrix $X$ with columns $\bm{V}_{\{u\}}$ and singular values $\sigma_1 \ge \sigma_2 \ge \ldots \ge \sigma_n$. From lemma \ref{gscol}, we can obtain a subset $S \subseteq V$ of size $r'$ such that $\displaystyle\sum_{u \in V} \norm{\Pi_2^{\perp} \bm{V}_{\{u\}}}^2 \le \frac{1}{1 - \epsilon} \left(\displaystyle\sum_{i = r + 1}^{m'} \sigma_i^2\right)$. Using $\Gamma(R) \ge \displaystyle\sum_{(u, v) \in E} \norm{\bm{V}_{\{u\}} - \bm{V}_{\{v\}}}^2 = d\Tr(X^TXL)$ (remember that $L$ is normalized) and lemma \ref{hoff}, we get
\begin{align*}
\norm{X^TX + L}_F^2 &\ge \min_{\sigma \in S_n}\displaystyle\sum_{i = 1}^n(\lambda_i(X^TX) + \lambda_{\sigma(i)}(L))^2\\
\Longrightarrow \norm{X^TX}_F^2 + \norm{L}_F^2 + 2\Tr(X^TXL) &\ge \displaystyle\sum_{i = 1}^n(\lambda_i(X^TX))^2 + \displaystyle\sum_{i = 1}^n(\lambda_{\sigma(i)}(L))^2 + 2\displaystyle\sum_{i = 1}^n\lambda_i(X^TX)\lambda_{\sigma(i)}(L)\\
&\ge \norm{X^TX}_F^2 + \norm{L}_F^2 + 2\displaystyle\sum_{i = 1}^n\lambda_i(X^TX)\lambda_{\sigma(i)}(L)\\
\Longrightarrow~\quad \qquad\qquad\qquad\qquad\Tr(X^TXL) &\ge \displaystyle\sum_{i = 1}^n\lambda_i(X^TX)\lambda_{\sigma(i)}(L)\\
\Longrightarrow~~\quad\qquad \qquad\qquad\qquad\qquad\Gamma(R) &\ge d\displaystyle\sum_{i = 1}^n\sigma_i^2\lambda_{\sigma(i)}(L)\\
&\ge d\displaystyle\sum_{i = 1}^n\sigma_i^2\lambda_{n + 1 - i}(L)\text{ (by the rearrangement inequality)}\\
&\ge d\displaystyle\sum_{i = r + 1}^n\sigma_i^2\lambda_{r + 1}(L)\\
&\ge d\lambda_{r + 1}(L)(1 - \epsilon)\displaystyle\sum_{u \in V} \norm{\Pi_2^{\perp} \bm{V}_{\{u\}}}^2
\end{align*}
from which we get $\mathbb{E}_{\alpha}[\Gamma(R')] \le \Gamma(R) + \frac{\Gamma(R)}{(1 - \epsilon)\lambda_{r + 1}L}$ just like we wanted.
\end{proof}

Note that we actually only needed $r/ \epsilon$ levels of the hierarchy but to achieve the improved bound, we need $r / \epsilon^2$ levels.

To illustrate why this is an efficient algorithm for low threshold-rank graphs, suppose the $c$th largest eigenvalue of the normalized adjacency matrix is $\gamma = 0.6$ for some constant $c$. Then, $\lambda_{c + 1}(L) \ge \lambda_c(L) = 0.4$. So, we can get a $2.5(1 + \epsilon)$ approximation in $n^{O(c / \epsilon^2)}$ time, which explains why this algorithm works well on graphs whose spectrum has very few large eigenvalues.
\chapter{Lower bounds for the Sum of Squares Hierarchy}

\section{Max K-CSP}\label{first}

An instance of Max $K$-CSP over alphabet $[q]$ contains $m$ constraints $C_1, \ldots, C_m$ on $n$ variables $x_1, \ldots, x_n$. Each constraint $C_i$ is a boolean predicate on a subset of $K$ distinct variables. That is, if $T_i$ is the subset of $K$ distinct variables for the $i$th constraint, then $C_i$ is a function from $[q]^{T_i}$ to $\{0, 1\}$. An assignment is a mapping of the variables to $[q]$. We say that an assignment satisfies $C_i$ if the evaluation of $C_i$ on the assignment restricted to $T_i$ is $1$. The objective is to assign letters from $[q]$ to the variables $x_1, \ldots, x_n$ such that maximum number of constraints are satisfied. This general framework captures a large class of problems and they have natural SoS relaxations as was shown in Chapter $2$.

Kothari et al.\cite{kmow} gave tight tradeoffs between the density $\Delta = m / n$, the number of rounds of the SoS hierarchy and the optimum value of the relaxation for random CSP instances. They consider a graph naturally associated with the CSP instance and argue that if the graph satisfies a condition called the Plausibility assumption, then SoS vectors exist that exhibit almost perfect completeness, or in other words, the optimum SoS value is very close to $m$. Instances of Max $K$-CSP which are random (for the precise meaning, see definition \ref{rand}), satisfy the Plausibility assumption with high probability, so they serve as integrality gaps.

In our construction, the instance $I$ has $m$ $K$-ary constraints on $n$ variables. We fix a prime power $q$ and a subset $\mathcal{C} \subseteq \mathbb{F}_q^K$ and we consider instances $I$ where the variables are $x_1, \ldots, x_n$, the alphabet is $[q]$ and each constraint $P$ on the appropriate subset of variables $x_C = (x_i)_{i \in C}$ is of the form $P(x) = [\text{Is }x_C - b \in \mathcal{C}?]$ where $b \in \mathbb{F}_q^K$ and $\mathcal{C} \subseteq \mathbb{F}_q^K$. Here, $\mathcal{C}$ is fixed for all predicates but $b$ could be different.

There are $2$ natural graphs that we can associate $I$ with. Let the $m$ constraints be on the subsets $C_1, \ldots, C_m$ of $[n]$. We abuse notation to treat $C_i$ as a boolean function from $[q]^{C_i}$ to $\{0, 1\}$ which evaluates to $1$ if and only if that corresponding assignment is satisfied by the $i$th predicate.
\begin{itemize}
	\item Factor Graph: Consider the bipartite graph $G_I$ defined as follows. The left partition is $\{C_i \;|\; i \in [m]\}$, the set of constraints and the right partition is $\{x_j \;|\; j \in [n]\}$. $G_I$ contains the edge $(C_i, x_j)$ if and only if $C_i$ contains $x_j$. Therefore, $G_I$ has $m + n$ vertices and $mK$ edges since each vertex in the left partition has degree $K$.
	\item The Label Extended Factor graph: Fix a positive integer $\beta$ and consider the bipartite graph $H_{I, \beta} = (L, R, E)$ defined as follows. The left partition $L$ is $\{(C_i, \alpha) \;|\; i \in [m], \alpha \in [q]^K, C_i(\alpha) = 1\}$. The right partition $R$ is $\{(x_i, \alpha_{x_i}, j) \;|\; i \in [n], \alpha_{x_i} \in [q], j \in [\beta]\}$ with cardinality $nq\beta$. And $E$ contains all the edges $((C_i, \alpha), (x, \alpha_x, j))$ if $x \in C_i$ and $\alpha$ assigns $x$ to $\alpha_x$. Since each predicate is a random shift of $C$, we have that there are $|\mathcal{C}|$ possible values of $\alpha$ for each $C_i$, so $|L| = m|\mathcal{C}|$.  Therfore, $H_{I, \beta}$ has $N = m|\mathcal{C}| + nq\beta$ vertices and $m|\mathcal{C}|K\beta$ edges since each vertex in $L$ has degree $K\beta$.
\end{itemize}

\begin{definition}[Random Max $K$-CSP instance]\label{rand}
For a fixed $\mathcal{C} \subseteq \mathbb{F}_q^K$, a random instance of Max $K$-CSP of the form above proceeds by choosing the $m$ constraints independently as follows - For each constraint, we first choose the subset of $K$ variables uniformly at random and then choose $b \in \mathbb{F}_q^K$ uniformly at random.
\end{definition}

For an instance $I$, we define some parameters that will be of interest:

Let $\tau \ge 1$ be any integer such that $\mathcal{C}$ is $(\tau - 1)$-wise uniform. This means that the projection to any $\tau - 1$ coordinates from the $K$ coordinates is the uniform distribution in these coordinates. The minimum such $\tau$ is called the complexity of the predicate.

Let $1 \le \eta \le \frac{1}{2}$ be a parameter such that $\eta n$ is roughly the number of levels of SoS that we are considering. So, we would be interested in optimizing $\eta$.

Let $\zeta$ be any parameter such that $0 < \zeta < 1$ and $K \le \zeta \cdot \eta n$. Note that both $\eta, \zeta$ could depend on $n$.

\begin{definition}[$\tau$-subgraph]
Define a $\tau$-subgraph $H$ to be any edge-induced subgraph of $G_I$ such that each constraint vertex in $H$ has degree at least $\tau$ in $H$.
\end{definition}

Edge-induced essentially means that there are no isolated vertices. Also, note that the empty subgraph is a $\tau$-subgraph.

\begin{definition}[Plausible subgraphs]
Define a $\tau$-subgraph $H$ of $G_I$ with $c$ constraint vertices, $v$ variable vertices and $e$ edges to be \textit{plausible} if $v \ge e - \frac{\tau - \zeta}{2}c$.
\end{definition}

Now, we introduce the condition that we would like our factor graph to satisfy.

\begin{definition}[Plausibility assumption:]
All $\tau$-subgraphs $H$ of $G_I$ with at most $2\eta n$ constraint variables are plausible.
\end{definition}

This assumption roughly says that all small subsets of $L$ have large neighborhoods, that is, $G_I$ has expansion properties. The idea is that random instances satisfy the Plausibility assumption with high probability and instances whose factor graphs satisfy the Plausibility assumption exhibit perfect completeness for the SoS relaxation.

More precisely, fix a Max $K$-CSP instance $I$ and let $G_I$ be the factor graph. The following theorem shows SoS hardness for Max $K$-CSP assuming Plausibility.

\begin{theorem}[\cite{kmow}]\label{cspsos}
If the Plausibility assumption holds, then a degree $O(\zeta \eta n)$ SoS relaxation of the instance will have optimum value $m$.
\end{theorem}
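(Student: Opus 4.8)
Since the Max $K$-CSP relaxation always has value at most $m$ (iterating the third family of consistency constraints gives $\sum_{\alpha\in[q]^{T_i}}\norm{\bm V_{(T_i,\alpha)}}^2 = 1$ for every $i$, so the objective is a sum of $m$ quantities each at most $1$), it suffices to exhibit a feasible vector solution of degree $d = \Theta(\zeta\eta n)$ whose objective equals $m$. Equivalently, I will construct a positive semidefinite ``moment matrix'' $Y$ indexed by pairs $(S,\alpha)$ with $|S|\le d$, satisfying the SDP consistency constraints, with $\sum_{\alpha:\,C_i(\alpha)=1} Y_{(T_i,\alpha),(T_i,\alpha)} = 1$ for every constraint $i$. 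The plan is to build $Y$ from a coherent family of \emph{local distributions}: for each variable set $S$ with $|S|\le d$ specify a distribution $\mu_S$ on $[q]^S$, set $Y_{(S_1,\alpha_1),(S_2,\alpha_2)} = \mu_{S_1\cup S_2}(\alpha_1\circ\alpha_2)$ when $\alpha_1,\alpha_2$ agree on $S_1\cap S_2$ and $0$ otherwise, and then verify the three properties: consistency, completeness, and PSD-ness.

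\textbf{Defining the local distributions.} Given a seed set $S$, I would form its \emph{$\tau$-closure} $W(S)\supseteq S$ by iteratively adjoining all variables of any constraint that already has enough of its variables present, stopping at a maximal plausible $\tau$-subgraph relative to $S$; on $W(S)$ place the distribution obtained by assigning each variable an independent uniform letter of $[q]$ and conditioning on every constraint of the closure subgraph being satisfied, and let $\mu_S$ be its marginal on $S$. The Plausibility assumption enters twice here. First, it controls the closure: since a $\tau$-subgraph on at most $2\eta n$ constraint vertices obeys $v\ge e-\tfrac{\tau-\zeta}{2}c$ while every constraint in it has degree in $[\tau,K]$, the number of accumulated variables stays linear in $|S|$ with the blow-up factor governed by $\zeta$, so $|W(S)|\le\eta n$ whenever $|S| = O(\zeta\eta n)$ — this is exactly what pins down the degree bound. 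Second, plausibility is a Hall-type condition on the constraint hypergraph which, together with $(\tau-1)$-wise uniformity of $\mathcal{C}$, guarantees the conditioning is on a nonempty event, so $\mu_S$ is well defined; running the same argument on $W(S_1\cup S_2)$ shows the $\mu_S$ marginalize consistently, which makes $Y$ well defined and forces all the SDP consistency constraints.

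\textbf{Completeness.} For each constraint $i$, its variable set $T_i$ has size $K\le\zeta\eta n\le d$ and $C_i$ itself lies in the closure $W(T_i)$ (all $K\ge\tau$ of its variables are present), so $\mu_{T_i}$ is a marginal of a distribution conditioned on $x_{T_i}\in b_i+\mathcal{C}$. Hence $\mu_{T_i}$ is supported on satisfying assignments, $\sum_{\alpha:\,C_i(\alpha)=1}Y_{(T_i,\alpha),(T_i,\alpha)} = 1$, and the objective equals exactly $m$.

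\textbf{The main obstacle: positive-semidefiniteness.} The crux is $v^\top Y v\ge 0$, i.e.\ $\tilde{\mathbb{E}}[p(x)^2]\ge 0$ for every polynomial $p$ of degree $\le d/2$, where $\tilde{\mathbb{E}}[x_S=\alpha] = \mu_S(\alpha)$. The approach I would take is to show that for $p$ supported on a variable set $U$ with $|U|\le d/2$, the restriction of $\tilde{\mathbb{E}}$ to the degree-$\le d$ monomials occurring in $p^2$ agrees with the genuine distribution on $W(U)$ used to define $\mu_U$ — that is, the closure operator is stable enough that enlarging the seed to the union of supports arising in $p^2$ does not change the relevant marginals. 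This stability is precisely where plausibility of the $\tau$-subgraph spanned by $W(U)$ is used, and once it holds we get $\tilde{\mathbb{E}}[p^2] = \mathbb{E}_{y}[p(y)^2]\ge 0$ for $y$ drawn from that genuine distribution. I expect essentially all the real difficulty to live in this closure-stability step, together with the quantitative bookkeeping behind the expansion inequality $v\ge e-\tfrac{\tau-\zeta}{2}c$: the same inequality underwrites well-definedness, completeness, and PSD-ness, and it is the reason the round bound carries the factor $\zeta$.
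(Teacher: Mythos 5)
First, a point of reference: the thesis does not actually prove this theorem --- it is quoted from Kothari et al.\ \cite{kmow}, and the text only sketches how pseudocalibration recovers their construction (Fourier support restricted to $\tau$-subgraphs whose constraint vertices have degree at least $3$ and whose odd-degree variable vertices lie in $S$), explicitly declining to prove positivity. Your overall plan --- local distributions obtained by conditioning the uniform distribution on the constraints inside a closure of the seed set, with Plausibility guaranteeing both that closures stay small and that the conditioning event is nonempty via $(\tau-1)$-wise uniformity --- is the right family of ideas, and your reduction to exhibiting a feasible solution of value $m$, together with the completeness argument, is fine. One smaller discrepancy: your closure is the iterative ``adjoin any constraint with enough variables already present'' operator from earlier Sherali--Adams-style work, whereas the closure in \cite{kmow} is the union of all $\tau$-subgraphs with at most $\eta n$ constraint vertices whose leaf variables lie in $S$; the thesis stresses that this new definition is one of their main contributions, and the difference matters for the quantitative bookkeeping.

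The genuine gap is in your positive-semidefiniteness step. You propose to verify $\tilde{\mathbb{E}}[p^2]\ge 0$ only for polynomials $p$ ``supported on a variable set $U$ with $|U|\le d/2$,'' realizing the relevant moments as genuine moments of the distribution on $W(U)$. But PSD-ness of the moment matrix requires $v^{\top} Y v\ge 0$ for \emph{arbitrary} $v$, i.e.\ $\tilde{\mathbb{E}}[p^2]\ge 0$ for every polynomial of degree at most $d/2$; such a polynomial (e.g.\ $\sum_{i=1}^n v_i x_i$) can involve all $n$ variables, since only each monomial is small, not their union. Its support then has no closure of controlled size and there is no single genuine distribution to compare against. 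This is precisely the obstruction separating SoS lower bounds from Sherali--Adams lower bounds: locally consistent distributions on all small variable sets do not in general yield a PSD moment matrix. The argument in \cite{kmow} instead constructs explicit Gram vectors for the entries $Y_{(S_1,\alpha_1),(S_2,\alpha_2)}$ out of the local distributions on closures, and the heavy lifting is a structural lemma showing that the closure of $S_1\cup S_2$ decomposes compatibly with the closures of $S_1$ and $S_2$ --- this is where the inequality $v\ge e-\tfrac{\tau-\zeta}{2}c$ is really spent. So you have correctly located the hard step, but the route you propose through it would not close.
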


In their paper, a more general version was shown for any $\tau$. The completeness value then depends on the statistical distance of the given predicate from a $\tau$-wise uniform distribution on $\mathbb{F}_q^K$. In fact, using essentially the same techniques, we can obtain a result where the constraints can have varying arity and their corresponding predicates can be arbitrary with possibly varying complexity, see for instance \cite{globaltolocal}. But for our purposes, this particular version will suffice.

We remark that the actual optimum value of $I$ will be concentrated around $m|\mathcal{C}|/q^K$ with high probability by a standard Chernoff bound. This is far from the SDP optimum if $|\mathcal{C}|$ is small compared to $q^K$, so this will be the usual setting in our hardness applications.

So, we would like to find the right value of $\eta$ so that all $\tau$-subgraphs with at most $2\eta n$ constraints are plausible. Such a bound can be obtained by a standard probabilistic argument leading to the following theorem.

\begin{theorem}[\cite{kmow}]\label{plaus}
Assume that $\mathcal{C}$ has complexity  at least $\tau \ge 3$. Fix $0 < \zeta < 0.99(\tau - 2)$ and $0 < \beta < \frac{1}{2}$. Then, with probability at least $1 - \beta$, the factor graph $G_I$ of a random Max $K$-CSP instance $I$ with $n$ variables and $m = \Delta n$ constraints will satisfy the Plausibility assumption with $\eta = \frac{1}{K}\left(\frac{\beta^{1 / (\tau - 2)}}{2^{K / (\tau - 2)}}\right)^{O(1)}\cdot \frac{1}{\Delta^{2 / (\tau - 2 - \zeta)}}$.
\end{theorem}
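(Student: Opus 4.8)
The plan is a first-moment (union-bound) argument: I would bound the probability that $G_I$ contains \emph{any} ``bad'' subgraph, i.e.\ a $\tau$-subgraph $H$ of the relevant size (``at most $2\eta n$ constraint variables'') that violates plausibility, and then choose $\eta$ to make this probability at most $\beta$. First I record the arithmetic constraints such an $H$ must satisfy. Writing $c,v,e$ for its numbers of constraint vertices, variable vertices, and edges: being a $\tau$-subgraph forces $e \ge \tau c$, and since each constraint has only $K$ incident edges in $G_I$ we also have $e \le Kc$; violating plausibility means $v < e - \tfrac{\tau-\zeta}{2}c$. The union bound then ranges over the size parameter, over triples $(c,v,e)$ compatible with these inequalities, over which constraints and which variable vertices $H$ uses, and over the degree sequence $(d_i)$ of the constraint vertices with $\tau \le d_i \le K$ and $\sum_i d_i = e$.

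The per-configuration probability is estimated using that the $K$ variables of each constraint are chosen independently and uniformly at random: for a fixed set $W$ of $v$ variables, the probability that a fixed constraint has at least $d_i$ of its variables in $W$ is at most $\binom{K}{d_i}(v/n)^{d_i}$, so by independence a fixed $c$-tuple of constraints realizing the given degree sequence into $W$ occurs with probability at most $\prod_i \binom{K}{d_i}(v/n)^{d_i} \le 2^{Kc}(v/n)^{e}$. Multiplying by $\binom{m}{c}\le(em/c)^c$, by $\binom{n}{v}\le(en/v)^v$, and by the number of degree sequences and of admissible $e$ (each at most $2^{O(K)}$ per constraint, the $e$-sum only a polynomial factor), and substituting $m=\Delta n$, the total contribution of configurations with parameters $(c,v)$ is at most $\bigl(e\,\Delta\,2^{O(K)}\,\mathrm{poly}(\tau,\zeta)\,(v/n)^{\tau - v/c}\bigr)^{c}$ up to polynomial factors, where I used $(v/n)^{e}(n/v)^{v}=(v/n)^{e-v}$ and that, among all admissible $e$, the bound is largest at $e=\tau c$, where non-plausibility forces $v<\tfrac{\tau+\zeta}{2}c$. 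Plugging in this extremal $v$ turns the exponent $\tau-v/c$ into $\tfrac{\tau-\zeta}{2}$ and makes the base proportional to $c^{(\tau-\zeta)/2 - 1}=c^{(\tau-2-\zeta)/2}$; since $\tau\ge3$ and $\zeta<1$ this exponent is positive, so the base increases with $c$ and the sum over $c$ (geometric-like once the base is $<1$) is dominated by its largest term, at $c$ comparable to the size parameter.

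It then suffices to force that dominant term to be at most $\beta/2$, which makes the whole sum at most $\beta$. Substituting the size parameter $\approx\eta n$ for $c$ (up to the factor relating ``at most $2\eta n$ constraint variables'' to the edge/vertex count, which is where a stray factor of $K$ enters the bookkeeping) collapses the $n$-dependence, and the requirement becomes $e\,\Delta\,2^{O(K)}\,\mathrm{poly}(\tau,\zeta)\,\eta^{(\tau-2-\zeta)/2}\le \beta/2$, i.e.\ $\eta \le \bigl(\beta\cdot 2^{-O(K)}\cdot \Delta^{-1}\cdot\mathrm{poly}(\tau,\zeta)\bigr)^{2/(\tau-2-\zeta)}$ up to the aforementioned $1/K$. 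Because $0<\zeta<0.99(\tau-2)$ we have $\tfrac{1}{100}(\tau-2) < \tau-2-\zeta < \tau-2$, so $\tfrac{2}{\tau-2-\zeta}$ and $\tfrac{1}{\tau-2}$ differ only by a bounded constant factor; hence $\beta^{2/(\tau-2-\zeta)}$ and $2^{-O(K)/(\tau-2-\zeta)}$ can be rewritten as $\bigl(\beta^{1/(\tau-2)}/2^{K/(\tau-2)}\bigr)^{O(1)}$, while $\Delta^{-2/(\tau-2-\zeta)}$ is exactly the stated exponent, giving the claimed form of $\eta$. Note $\tau\ge3$ and $\zeta<0.99(\tau-2)$ are precisely what keep $(\tau-2-\zeta)/2$ both positive (so the series converges and $\eta>0$) and bounded away from $0$ (so the $O(1)$'s are genuinely $O(1)$).

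The main obstacle I expect is not the probabilistic estimate -- the bound $\binom{K}{d_i}(v/n)^{d_i}$ and its product over constraints are routine -- but the bookkeeping surrounding it: handling general degree sequences rather than just the all-$\tau$ case cleanly, folding the $\binom{K}{d_i}$ factors and the count of degree sequences into a single $2^{O(K)}$, and above all tracking the constants carefully enough through the optimization over $\eta$ to land exactly on the stated expression, including justifying the explicit $1/K$ prefactor (coming from the relation between the edge count and the ``size'' of the subgraph) and the exact placement of $(\tau-2)$ versus $(\tau-2-\zeta)$ in the exponents.
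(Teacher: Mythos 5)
The paper does not actually prove Theorem \ref{plaus} --- it cites \cite{kmow} and remarks only that it follows from ``a standard probabilistic argument'' --- and your first-moment union bound over constraint sets, variable sets, and degree sequences is exactly that standard argument: the per-configuration estimate $\prod_i\binom{K}{d_i}(v/n)^{d_i}$, the extraction of $(v/n)^{e-v}\le(v/n)^{(\tau-\zeta)c/2}$ from non-plausibility, and the geometric summation forcing the base at $c=2\eta n$ below $\beta/2$ are all sound and do yield $\eta$ of the stated form. The one step to tighten is the reduction to $e=\tau c$: it is not literally the extremal case for every $(c,v)$ (for larger admissible $e$ the bad range of $v$ extends up to $Kc$ rather than $\tfrac{\tau+\zeta}{2}c$), but since every bad configuration satisfies $e-v>\tfrac{\tau-\zeta}{2}c$ the bound goes through with a $K^{(\tau-\zeta)/2}$ in place of $\mathrm{poly}(\tau,\zeta)$ in the base --- which is harmless and is in fact precisely what produces the $1/K$ prefactor you were worried about, once inverted through the exponent $2/(\tau-2-\zeta)$.
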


The following corollary is immediate from Theorem \ref{cspsos} and Theorem \ref{plaus}.

\begin{corollary}
Assume that $\mathcal{C}$ has complexity at least $\tau \ge 3$. Fix $0 < \zeta < 0.99(\tau - 2)$ and $0 < \beta < \frac{1}{2}$. Then, with probability at least $1 - \beta$, for a random Max $K$-CSP instance $I$ with $n$ variables and $m = \Delta n$ constraints, the level $O\left(\frac{1}{K}\left(\frac{\beta^{1 / (\tau - 2)}}{2^{K / (\tau - 2)}}\right)^{O(1)}\cdot \frac{n}{\Delta^{2 / (\tau - 2 - \zeta)}}\right)$ SoS relaxation will have perfect completeness, that is, it will have an optimum value of $m$.
\end{corollary}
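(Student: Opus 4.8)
The plan is to derive the corollary by directly composing the two preceding theorems; essentially all of the substantive work has already been done, and the statement is flagged as ``immediate,'' so the natural approach is simply to make that composition precise.

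First I would invoke Theorem \ref{plaus}. Its hypotheses --- that $\mathcal{C}$ has complexity at least $\tau \ge 3$, that $0 < \zeta < 0.99(\tau - 2)$, and that $0 < \beta < \frac{1}{2}$ --- are verbatim the hypotheses of the corollary, so it applies with no modification. It gives that, with probability at least $1 - \beta$ over the choice of a random Max $K$-CSP instance $I$ with $n$ variables and $m = \Delta n$ constraints, the factor graph $G_I$ satisfies the Plausibility assumption with the value
\[
\eta = \frac{1}{K}\left(\frac{\beta^{1/(\tau-2)}}{2^{K/(\tau-2)}}\right)^{O(1)}\cdot\frac{1}{\Delta^{2/(\tau-2-\zeta)}}.
\]

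Next, I would condition on this event (which holds with probability at least $1-\beta$) and apply Theorem \ref{cspsos}: whenever the Plausibility assumption holds, the degree $O(\zeta \eta n)$ SoS relaxation of $I$ has optimum value $m$, which is perfect completeness since $m$ is trivially an upper bound on the relaxation value. It then remains only to simplify the level parameter. Treating $\tau$ and $\zeta$ as fixed constants --- they are fixed throughout, in contrast to $K$, $\beta$, and $\Delta$, whose dependence is deliberately kept explicit --- the multiplicative factor $\zeta$ in $O(\zeta \eta n)$ is absorbed into the $O(\cdot)$, and substituting the expression for $\eta$ from Theorem \ref{plaus} yields the level
\[
O\!\left(\frac{1}{K}\left(\frac{\beta^{1/(\tau-2)}}{2^{K/(\tau-2)}}\right)^{O(1)}\cdot\frac{n}{\Delta^{2/(\tau-2-\zeta)}}\right)
\]
asserted in the corollary, completing the argument.

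There is no genuine obstacle here: the entire mathematical content resides in Theorems \ref{cspsos} and \ref{plaus}, which may be assumed. The only point requiring a moment of care is the bookkeeping of which symbols are ``constants'' for the purposes of the asymptotic notation, so that the two level bounds --- the $O(\zeta\eta n)$ coming out of Theorem \ref{cspsos} and the displayed expression appearing in the corollary --- are consistent; once that convention is pinned down, the corollary follows at once.
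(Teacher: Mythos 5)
Your proposal is correct and is precisely the argument the paper intends: the paper states the corollary is ``immediate from Theorem \ref{cspsos} and Theorem \ref{plaus},'' and your composition of the two (condition on the Plausibility event from Theorem \ref{plaus}, apply Theorem \ref{cspsos}, substitute $\eta$ and absorb $\zeta$ into the $O(\cdot)$) is exactly that. Nothing further is needed.
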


We will illustrate some ideas involved in the proof of Theorem \ref{cspsos} when we describe pseudocalibration.

We remark that, over boolean predicates of constant arity, this lower bound is tight upto logarithmic factors, due to the following result on imperfect completeness of the SoS hierarchy on random CSPs.

\begin{theorem}[\cite{aow}, \cite{rrs}]
Let $I$ be a random Max $K$-CSP instance over boolean predicates, that is, $q = 2$. With high probability, the level $\tilde{O}(n / \Delta^{2/(\tau - 2)})$ SoS relaxation has optimum value strictly less than $m$.
\end{theorem}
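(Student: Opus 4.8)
The plan is to prove the theorem in two stages, following \cite{aow} and \cite{rrs}: a Fourier-analytic reduction from an arbitrary complexity-$\tau$ boolean predicate to random $\tau$-XOR, and then an SoS refutation of random $\tau$-XOR at the claimed degree. Suppose for contradiction that the degree-$d$ SoS relaxation has value $m$, witnessed by a degree-$d$ pseudo-expectation $\tilde{\mathbb E}$; since each $1-C_i$ is the square of a multilinear polynomial of degree $\le K$ on the cube we get $\tilde{\mathbb E}[C_i]\le 1$, so value $m$ forces $\tilde{\mathbb E}[C_i]=1$ for every $i$. Writing the indicator of $\mathcal C$ in its hypercube Fourier expansion $P(x)=\mu+\sum_{|S|\ge\tau}\widehat P(S)\chi_S(x)$ with $\mu=|\mathcal C|/2^K$, the hypothesis that the complexity equals $\tau$ (i.e.\ $\mathcal C$ is $(\tau-1)$-wise uniform but not $\tau$-wise) means $\widehat P(S)=0$ for $1\le|S|\le\tau-1$ and $\widehat P(S)\ne 0$ for some $|S|=\tau$; hence the $\widehat P$-weighted sum over $i$ of the level-$\tau$ pseudo-moments equals $m(1-\mu)$ up to the contribution of Fourier levels above $\tau$, which I would control by the random-restriction argument of \cite{aow} so that it costs only $o(m)$. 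What remains is a degree-$d$ pseudo-expectation certifying value $(1-o(1))m$ for the random $\tau$-XOR instance whose equations are the nonzero level-$\tau$ Fourier characters of the $m$ constraints; since the variable $K$-tuples and the shifts $b_i$ are uniform and independent, this is a genuine random $\tau$-XOR instance with $\Theta(m)$ equations on $n$ variables, so it suffices to SoS-refute random $\tau$-XOR with these parameters at degree $O(d)$.

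For the refutation of random $\tau$-XOR I would use the Kikuchi / tensor-power method. Taking $\tau$ even first (the odd case needs only a rectangular Kikuchi matrix, or a preliminary pairing of equations that share a variable), fix a level $\ell$ and form the $\binom{n}{\ell}\times\binom{n}{\ell}$ \emph{Kikuchi matrix} $M_\ell$ whose $(A,B)$-entry is the signed count of equations $T$ (signed by their right-hand sides) with $T=A\triangle B$, so that only pairs with $|A\triangle B|=\tau$ are nonzero. A satisfying assignment $x$ induces, through the vector $v_A=\prod_{j\in A}x_j$, a quadratic form $\langle v,M_\ell v\rangle$ equal to the (near-)regularity $D$ of the support graph of $M_\ell$ times $\norm v^2$, while a trace-moment computation bounds $\norm{M_\ell}$ with high probability by that same regularity up to lower-order (polynomial-in-$\ell$, polylogarithmic-in-$n$) factors; balancing the two shows the system cannot be satisfied once the number of equations crosses the threshold $m=\tilde\Omega(n)\cdot(n/\ell)^{\tau/2-1}$, equivalently once the level satisfies $\ell=\tilde\Omega\!\big(n/\Delta^{2/(\tau-2)}\big)$ with $\Delta=m/n$ --- precisely the degree asserted in the theorem. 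The same argument runs with $x$ replaced by a degree-$O(\ell)$ pseudo-expectation, which is what upgrades the spectral refutation of \cite{aow} to the honest SoS refutation of \cite{rrs}: concretely one writes $m-\sum_iC_i(x)$, minus its asserted $\Omega(m)$ slack, as a nonnegative combination of squares plus a term $\langle w(x),(\norm{M_\ell}I-M_\ell)\,w(x)\rangle$ for a low-degree vector polynomial $w$ built from the variables, the last term being a sum of squares because $\norm{M_\ell}I-M_\ell\succeq 0$.

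The main obstacle is the high-probability bound on $\norm{M_\ell}$. Here $M_\ell$ is a sum of $\Theta(m)$ strongly dependent, rank-structured blocks --- each equation contributes a structured sparse bipartite pattern spread over the $\binom{n}{\ell}$-element index set --- so off-the-shelf matrix concentration does not apply; instead one runs a high-moment (trace) method, bounding the number of closed walks in the random sparse Kikuchi (hyper)graph by tracking how many vertices of each walk are "free" and invoking $\tau$-wise uniformity (equivalently, that the signs multiply to a mean-zero quantity along any genuine cycle of equations) to kill the dangerous low-order terms. This is the technically heaviest ingredient of \cite{aow,rrs}, and it is where the polylogarithmic slack hidden in $\tilde O(\cdot)$ is spent. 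A secondary technical point is keeping the completeness loss of the first-stage reduction down to $o(m)$ while discarding the Fourier mass of $P$ above level $\tau$, uniformly over all $m$ constraints; this is exactly the random-restriction argument of \cite{aow}, which I would import. The remaining pieces --- the odd-$\tau$ modification, the parameter balancing, and the Chernoff bounds guaranteeing that the relevant "random instance" events hold with high probability --- are routine once these two are in place.
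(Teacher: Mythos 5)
The thesis does not prove this theorem; it is imported verbatim from \cite{aow} and \cite{rrs}, so there is no in-paper argument to compare against. Judged on its own terms, your second stage (refuting random $\tau$-XOR at degree $\ell$ once $m\gtrsim \tilde{O}(n)(n/\ell)^{(\tau-2)/2}$, via a Kikuchi-type matrix, a trace-method norm bound, and the SoS-ization through $\norm{M_\ell}I-M_\ell\succeq 0$) is the right shape and hits the right parameters, even though the Kikuchi formalism postdates \cite{rrs}; you also correctly flag the norm bound as the technical core, and the ``near-regularity'' you invoke is a real issue (at these densities the Kikuchi graph is not close to regular and one must prune or reweight rows, which complicates the SoS identity).

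The genuine gap is in your first stage. You write the Fourier expansion of the predicate, keep level $\tau$ as the main term, and propose to ``control the contribution of Fourier levels above $\tau$ \ldots so that it costs only $o(m)$.'' That step fails: for a general complexity-$\tau$ predicate the pseudo-moment mass $\sum_i\sum_{|S|>\tau}\widehat P(S)\tilde{\mathbb E}[\chi_S]$ is typically $\Theta(m)$, not $o(m)$, and refuting the induced $t$-XOR instances for $t>\tau$ requires \emph{more} constraints (or higher degree) than you have, so those levels can be neither discarded nor refuted directly; there is no random-restriction argument in \cite{aow} that makes them vanish. The actual reduction goes the other way, via LP duality: ``no $\tau$-wise uniform distribution is supported on $\mathcal C$'' yields a mean-zero polynomial $Q$ of degree at most $\tau$ with $Q\ge\varepsilon$ on $\mathcal C$; since a degree-$d$ pseudo-expectation with value $m$ induces genuine local distributions supported on each constraint's satisfying set, one gets $\sum_i\tilde{\mathbb E}[Q_i]\ge\varepsilon m$, and it then suffices to refute XOR at levels $1,\dots,\tau$ only, where lower levels are strictly easier and level $\tau$ is the bottleneck. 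Relatedly, you work only with the Fourier coefficients of the uniform distribution on $\mathcal C$, but the hypothesis the refutation needs (and the one dual to the KMOW lower bound) concerns \emph{every} distribution supported on $\mathcal C$; the uniform one may be far less uniform than the best one, and then your level-$\tau$ ``main term'' is aimed at the wrong level. With the duality step put in place of your truncation argument, the rest of your outline goes through.
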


We believe that their techniques should generalize to arbitary alphabet size as well.

\section{Max K-CSP for superconstant K}

If $\tau$ is a constant, as we have in most applications, note that the parameter $\eta$ as per Theorem \ref{plaus} drops off exponentially in $K$ (for a fixed $\tau$). This is fine if $K$ is constant, but for applications like Densest $k$-subgraph, $K$ is large (polynomial in $n$) and so, we need a different bound.

If we had $\tau = \Omega(K)$ as in k-SAT for example, we can use the existing bound because $\frac{K}{\tau - 2}$ will be at most a constant. But in many reductions, we can obtain good soundness generally when $\tau$ is low compared to $K$,  i.e., the predicate has low complexity. In that aspect, we will prove the following bound.

\begin{theorem}
Assume that $\mathcal{C}$ has complexity at least $\tau \ge 4$. Fix $0 < \zeta < 0.99(\tau - 2)$. If $10 \le K \le \sqrt{n}$ and $n^{\nu - 1} \le 1 / (10^8(\Delta K^{\tau - \zeta + 0.75})^{2 / (\tau - \zeta - 2)})$ for some $\nu > 0$, then the factor graph $G_I$ of a random Max $K$-CSP instance $I$ with $n$ variables and $m = \Delta n$ constraints will satisfy the Plausibility assumption with probability $1 - o(1)$, for $\eta = O(1 / (\Delta K^{\tau - \zeta})^{2 / (\tau - \zeta - 2)})$.
\end{theorem}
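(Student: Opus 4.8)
The plan is to rerun the first-moment (union bound) argument behind Theorem~\ref{plaus}, but to enumerate the ways a family of constraints can ``clump'' onto a small set of variables with enough care that the per-constraint combinatorial overhead is polynomial in $K$ rather than the $2^{\Omega(K)}$ factor that a crude estimate such as $\prod_i \binom{K}{d_i} \le 2^{Kc}$ produces. That polynomial-versus-exponential improvement in the counting is exactly what keeps $\eta$ polynomially large as $K \to \infty$.

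First I would bound the probability that $G_I$ contains a non-plausible $\tau$-subgraph $H$ with $c \le 2\eta n$ constraint vertices by summing, over $c$, over the variable set $T$ of size $v$, and over degree sequences $\tau \le d_1,\dots,d_c \le K$ with $\sum_i d_i = e$ subject to the non-plausibility inequality $v < e - \tfrac{\tau-\zeta}{2}c$, the quantity (number of candidate patterns with these parameters) $\times$ (probability a fixed pattern is realized by the random instance). The number of such patterns with a fixed $(c,v,\{d_i\})$ is at most $\binom{m}{c}\binom{n}{v}\prod_i\binom{v}{d_i}$ (choose the constraints, the variable set, and for each constraint which $d_i$ of those variables it is adjacent to), while the probability it occurs is at most $\prod_i \binom{n-d_i}{K-d_i}\big/\binom{n}{K} \le (2K/n)^{e}$ by independence of the constraints and $K \le n/2$. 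Because $v$ is small (roughly $v \le \tau c$) while $v/n \ll 1$, adding an edge to a constraint beyond the $\tau$-th costs a factor of order $vK/n \le \tau\eta K \ll 1$, so the $d_i = \tau$ term dominates and the summation over degree sequences costs only a $2^{O(c)}$ overhead; it therefore suffices to treat $e = \tau c$, for which the non-plausibility constraint reads $v \le \tfrac{\tau+\zeta}{2}c$, i.e.\ the $c$ constraints sit on only $\alpha c$ variables with $\alpha \le \tfrac{\tau+\zeta}{2} < \tau$.

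Applying $\binom{m}{c} \le (em/c)^c$, $\binom{n}{v} \le (en/v)^v$, $\binom{v}{\tau} \le (ev/\tau)^{\tau}$, substituting $m = \Delta n$ and $v = \alpha c$, and simplifying, the $v$-dependence collapses to a factor $(c/n)^{(\tau-\alpha)c} \le (c/n)^{\frac{\tau-\zeta}{2}c}$, and the whole summand for a given $c$ takes the form
\[
\Big(\,\Delta \cdot K^{O(\tau)}\cdot C(\tau)\cdot (c/n)^{\frac{\tau-\zeta-2}{2}}\,\Big)^{c},
\]
where $C(\tau)$ depends only on $\tau$ and the exponent $\tfrac{\tau-\zeta-2}{2}$ is positive by $\zeta < 0.99(\tau-2)$; a sharper treatment of $\binom{v}{\tau}$ and of the slot bookkeeping brings the $K$-exponent down to $\tau-\zeta$, with $K^{0.75}$ to spare. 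Since $c \le 2\eta n$, each summand is at most $\big(\Delta K^{\tau-\zeta} C(\tau)\,(2\eta)^{(\tau-\zeta-2)/2}\big)^{c}$, so taking $2\eta n$ to be the threshold at which this base drops below $n^{-\Omega(1)}$ makes the sum over $c \ge 2$ tend to $0$; solving for the threshold yields precisely $\eta = O\big((\Delta K^{\tau-\zeta})^{-2/(\tau-\zeta-2)}\big)$, and the hypotheses $K \ge 10$, $\ n^{\nu-1} \le 1/\big(10^{8}(\Delta K^{\tau-\zeta+0.75})^{2/(\tau-\zeta-2)}\big)$, $\ \tau \ge 4$, and $K \le \sqrt{n}$ are exactly what is needed to absorb the $O(1)$, $C(\tau)$, and $K^{0.75}$ slack and to keep $\tau-\zeta-2$ bounded away from $0$.

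The step I expect to be the crux is making the combinatorial count $\binom{m}{c}\binom{n}{v}\prod_i\binom{v}{d_i}$ tight enough that the $K$-dependence is only $K^{\tau-\zeta+o(1)}$ per constraint rather than exponential in $K$ — equivalently, resisting the temptation to bound $\binom{v}{d_i}$ or $\binom{K}{d_i}$ by $2^{v}$ or $2^{K}$ as one effectively does in the proof of Theorem~\ref{plaus}. This is where $K \le \sqrt{n}$ and the careful accounting of which of a constraint's $K$ variable slots actually participate have to be pushed through, and it is the heart of the improvement over Theorem~\ref{plaus}; a secondary, purely bookkeeping, difficulty is handling arbitrary degree sequences $d_i \ge \tau$ and the (possibly non-integral) quantity $\tfrac{\tau-\zeta}{2}c$ so that the ``reduce to $d_i = \tau$'' step is rigorous.
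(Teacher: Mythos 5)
There is a genuine gap, and it sits exactly where you flagged the crux. The paper does not rerun a union bound over non-plausible $\tau$-subgraphs at all. It first makes a \emph{deterministic} observation: writing $e-v=\sum_{i=1}^{v}(d_i-1)$ over the variable-vertex degrees, this quantity can only increase when edges are added, so among all $\tau$-subgraphs on a fixed set of $c$ constraint vertices the inequality $v\ge e-\delta c$ (with $\delta=\tfrac{\tau-\zeta}{2}$) is tightest for the \emph{full} subgraph with $e=Kc$. Plausibility therefore reduces to the single event per constraint set that $c\le \eta n$ constraints touch more than $(K-\delta)c$ variables, which is the expansion lemma quoted from \cite{bcv}; no new probabilistic estimate is made. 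Your proposal skips this reduction and instead unions over every non-plausible configuration $(c,v,\{d_i\})$, and the claim that the $d_i=\tau$ terms dominate is false. The inequality ``adding an edge costs $vK/n\ll1$'' holds only with $v$ fixed; but non-plausibility reads $v<e-\delta c$, so as $e$ grows the admissible $v$ grows up to $(K-\delta)c$, and each added variable multiplies $\binom{n}{v}$ by roughly $n/v\gg1$. Tracking the extremal case $v\approx e-\delta c$, the per-constraint factor $\binom{n}{v}^{1/c}\binom{v}{\beta}\binom{n-\beta}{K-\beta}/\binom{n}{K}\approx e^{2\beta-\delta}(v/n)^{\delta}(K/\beta)^{\beta}$ is \emph{increasing} in $\beta=e/c$ all the way to $\beta=K$, where it carries a factor $e^{\Theta(K)}$ coming from $\binom{n}{v}\approx(en/v)^{v}$ with $v=\Theta(Kc)$. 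Equivalently: a single bad constraint set supports $2^{\Theta(Kc)}$ non-plausible $\tau$-subgraphs, so your first moment is exponentially inflated, and making it converge would force $\eta$ to be exponentially small in $K$ --- precisely the dependence this theorem exists to remove.

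A secondary issue: even restricted to $e=\tau c$, the ``choose $T$ first'' count $\binom{n}{v}\prod_i\binom{v}{d_i}(2K/n)^{e}$ yields a base of order $\Delta K^{\tau}(c/n)^{(\tau-\zeta-2)/2}$, i.e.\ $K^{\tau}$ per constraint rather than the $K^{\tau-\zeta}=K^{2\delta}$ the theorem needs; I do not see how sharpening $\binom{v}{\tau}$ recovers the missing $K^{-\zeta}$. The statement of the expansion lemma achieves $K^{2\delta}$ because the underlying count in \cite{bcv} is sequential: constraints are placed one at a time and only the $\ge\delta c$ ``collision'' slots (variables already seen) pay a factor of order $K^{2}c/n$ each, while fresh variables cost nothing. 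So the repair is two-fold: insert the paper's deterministic reduction to full subgraphs so that you union over constraint sets rather than over subgraphs, and then replace the choose-$T$-in-advance enumeration by the collision-based one. With those two changes your outline becomes essentially a proof of the cited lemma, which is what the paper delegates to \cite{bcv}.
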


Note that exponential dependence on $K$ has been dropped assuming an inequality between $\Delta$ and $K$. To prove this theorem, we will be using the following lemma regarding expansion properties of the factor graph of random CSPs.

\begin{lemma}[Implicitly shown in \cite{bcv}]
If $\delta \ge 1.5, 10 \le K \le \sqrt{n}$ and $n^{\nu - 1} \le 1 / (10^8(\Delta K^{2\delta + 0.75})^{1 / (\delta - 1})$ for some $\nu > 0$, then the factor graph $G_I$ of a random Max $k$-CSP instance $I$ with $n$ variables and $m = \Delta n$ constraints will satisfy the following condition with probability $1 - o(1)$ for $\eta = O(1 / (\Delta K^{2\delta})^{1 / (\delta - 1)})$: Any set of $c$ constraints for $c \le \eta n$ will contain more than $(K - \delta)c$ variables.
\end{lemma}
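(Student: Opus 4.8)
The plan is a first-moment (union-bound) argument over the ``bad'' event, of the kind standard in the random-CSP / densest-subgraph literature. The bad event is: there exist an integer $c$ with $1\le c\le\eta n$ and a set $S$ of $c$ constraints whose variable set has $|\mathrm{vars}(S)|\le(K-\delta)c$. I would bound
\[
\Pr[\text{bad}]\;\le\;\sum_{c=1}^{\eta n}\binom{m}{c}\,p_c,\qquad p_c:=\Pr\big[\text{a fixed set of }c\text{ constraints spans at most }(K-\delta)c\text{ variables}\big],
\]
and use $\binom{m}{c}\le(em/c)^c=(e\Delta n/c)^c$.

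Everything hinges on bounding $p_c$ without an exponential-in-$K$ loss. (The factor $2^{-K/(\tau-2)}$ appearing in Theorem~\ref{plaus} is exactly what the naive union bound produces, since it pays $\binom{n}{(K-\delta)c}$ to pick out the whole spanned set but only recovers a probability $\approx((K-\delta)c/n)^{Kc}$, leaving an $e^{\Theta(Kc)}$ residue.) Instead, I would reveal the $K$-subsets of the $c$ constraints one at a time, maintaining the running union $U_i$ of variables seen after the first $i$ of them. On the bad event $Kc-|\mathrm{vars}(S)|\ge\delta c$, and $Kc-|\mathrm{vars}(S)|$ is exactly the number of ``repeat'' slots, i.e.\ variable-slots whose variable had already appeared. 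Each slot, when revealed, is a repeat with conditional probability at most $|U_{i-1}|/(n-K)\le Kc/(n-K)\le 2Kc/n$ (using $K\le\sqrt n$), uniformly over the past. Union-bounding over which $\lceil\delta c\rceil$ of the $Kc$ slots are the repeats gives, up to $\mathrm{poly}(K)$ prefactors from the ceilings,
\[
p_c\;\le\;\binom{Kc}{\lceil\delta c\rceil}\Big(\tfrac{2Kc}{n}\Big)^{\lceil\delta c\rceil}\;\lesssim\;\Big(\tfrac{eK}{\delta}\Big)^{\delta c}\Big(\tfrac{2Kc}{n}\Big)^{\delta c}\;=\;\Big(\tfrac{2e\,K^{2}c}{\delta n}\Big)^{\delta c}.
\]
Multiplying by $\binom{m}{c}$ and collecting terms,
\[
\Pr[\text{bad}]\;\lesssim\;\sum_{c=1}^{\eta n}\left[\frac{e(2e)^{\delta}}{\delta^{\delta}}\,\Delta K^{2\delta}\,\frac{c^{\delta-1}}{n^{\delta-1}}\right]^{c},
\]
and since $\delta>1$ I may bound $c^{\delta-1}\le(\eta n)^{\delta-1}$; taking $\eta$ to be a small enough (constant-in-$n$) multiple of $(\Delta K^{2\delta})^{-1/(\delta-1)}$ makes the bracket at most $1/2$ for every $c$ in range, so the geometric series is summable --- this is exactly the choice of $\eta$ asserted by the lemma, and note that no exponential in $K$ survives.

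The remaining, genuinely fiddly, step is to upgrade ``$\Pr[\text{bad}]\le$ a small constant'' to ``$\Pr[\text{bad}]=o(1)$'', and this is what the extra hypotheses buy. The bound $n^{\nu-1}\le 1/\big(10^8(\Delta K^{2\delta+0.75})^{1/(\delta-1)}\big)$ together with $K\le\sqrt n$ forces $\eta n$ to be polynomially large, $\eta n=\Omega(n^{\nu})$, so that for each fixed bounded $c$ the term $\big[\tfrac{e(2e)^\delta}{\delta^\delta}\Delta K^{2\delta}c^{\delta-1}/n^{\delta-1}\big]^{c}$ is actually $n^{-\Omega(1)}$ --- the $K^{0.75}$ slack in the hypothesis being precisely what soaks up the $\mathrm{poly}(K)$ prefactors and the $\binom{Kc}{\lceil\delta c\rceil}$ factor --- while for $c$ beyond a slowly growing threshold the $(1/2)^c$ tail already sums to $o(1)$; splitting the sum at that threshold yields the claim. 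I expect this last calibration to be the main obstacle: it is routine but tedious, and is carried out only implicitly in \cite{bcv}, whereas the union-bound setup and the slot-revealing estimate for $p_c$ are entirely standard.
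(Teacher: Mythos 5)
Note first that the paper does not actually prove this lemma: it is stated with the attribution ``Implicitly shown in~\cite{bcv}'' and then used as a black box in the proof of the theorem that follows, so there is no in-paper argument to compare your attempt against. That said, your proposal is precisely the argument the attribution is pointing to, and it is correct: a first-moment bound over the $\binom{m}{c}$ choices of constraint sets, with the per-choice probability $p_c$ controlled by revealing the $Kc$ variable-slots one constraint at a time and union-bounding over which $\lceil\delta c\rceil$ of them are ``repeats.'' The slot-revealing step is sound because within a single constraint the $K$ variables are distinct (so a slot can only repeat against \emph{earlier} constraints), the conditional repeat probability is at most $|U_{i-1}|/(n-K)\le Kc/(n-K)\le 2Kc/n$ uniformly over the history (using $K\le\sqrt n$), and chaining gives $(2Kc/n)^{|R|}$ for any fixed candidate repeat set $R$. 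One small simplification you could make: you do not actually incur a $\mathrm{poly}(K)$ loss from the ceilings. Keeping the two factors together,
\[
p_c\;\le\;\binom{Kc}{\lceil\delta c\rceil}\Big(\tfrac{2Kc}{n}\Big)^{\lceil\delta c\rceil}\;\le\;\Big(\tfrac{2eK^2c}{\delta n}\Big)^{\lceil\delta c\rceil},
\]
and since that single base is strictly below $1$ for every $c\le\eta n$ at the stated choice of $\eta$, replacing $\lceil\delta c\rceil$ by the smaller $\delta c$ in the exponent is free; the $K^{0.75}$ slack is then needed only for the end calibration, not for prefactors. Your sketch of that calibration is the right one and does close the argument: the hypothesis rearranges to $\Delta K^{2\delta+0.75}/n^{\delta-1}\le n^{-\nu(\delta-1)}/10^{8(\delta-1)}=n^{-\Omega(1)}$, which makes each individual term at bounded $c$ polynomially small, and splitting the sum at $c\approx\log n$ leaves a geometric tail bounded by $O(2^{-\log n})=O(1/n)=o(1)$ even after multiplying by the allowed $K\le\sqrt n$. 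Tedious, as you say, but there is no gap.
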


\begin{proof}[Proof of Theorem \ref{csp_hardness}:]
Set $\delta = (\tau - \zeta) / 2$. Note that the conditions of the lemma are satisfied since $\delta \ge (4 - 1) / 2 = 1.5$ and the others are obvious. So, we get that any set of $c$ constraints for $c \le \eta n$ contain more than $(K - \delta)c$ variables. Now, we will prove the Plausibility assumption. Consider any $\tau$-subgraph $H$ of $G_I$ with $c$ constraint vertices, $v$ variable vertices and $e$ edges. We wish to prove that $v \ge e - \frac{\tau - \zeta}{2}c = e - \delta c$ with high probability. Rewrite this as $\delta c \ge (e - v)$. Note that the left hand side depends only on the number of constraint vertices in $H$. If $d_1, \ldots, d_v$ are the degrees of the variable vertices in $H$, then $d_i \ge 1$ since there are no isolated vertices and $e - v = (\displaystyle\sum_{i = 1}^v d_i) - v = \displaystyle\sum_{i = 1}^v (d_i - 1)$. Observe that for a fixed set of $c$ constraint vertices in $H$, $\displaystyle\sum_{i = 1}^v (d_i - 1)$ is maximized when $H$ contains all the neighbors of these $c$ constraint vertices. So, it suffices to prove the inequality only for such subgraphs $H$. Any such subgraph will have $e = Kc$ since all edges connected to the $c$ constraint vertices are chosen and we get that we have to prove $\delta c \ge Kc - v \Longleftrightarrow v \ge (K - \delta) c$. This is guaranteed by the lemma for $c \le \eta n$.
\end{proof}

So, we have the following corollary.
\begin{corollary}\label{csp_hardness}
Assume that $\mathcal{C}$ has complexity at least $\tau \ge 4$. Fix $0 < \zeta < 0.99(\tau - 2)$. If $10 \le K \le \sqrt{n}$ and $n^{\nu - 1} \le 1 / (10^8(\Delta K^{\tau - \zeta + 0.75})^{2 / (\tau - \zeta - 2)})$ for some $\nu > 0$, with high probability, for a random Max $K$-CSP instance $I$ with $n$ variables and $m = \Delta n$ constraints, the level $O\left(\frac{n}{(\Delta K^{\tau - \zeta})^{2 / (\tau - \zeta - 2)}}\right)$ SoS relaxation will have perfect completeness, that is, it will have an optimum value of $m$.
\end{corollary}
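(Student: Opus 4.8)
The plan is to deduce the corollary by combining two facts already established: Theorem~\ref{cspsos}, which turns the combinatorial Plausibility assumption on the factor graph into an SoS integrality gap, and the Theorem immediately preceding this corollary, which certifies (via the expansion lemma implicit in \cite{bcv}) that a random Max $K$-CSP instance satisfies Plausibility for a concrete value of $\eta$. No new idea is needed; the content is entirely in propagating parameters correctly.

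First I would invoke the preceding theorem: under the hypotheses $\tau \ge 4$, $0 < \zeta < 0.99(\tau-2)$, $10 \le K \le \sqrt{n}$, and $n^{\nu-1} \le 1/(10^8(\Delta K^{\tau-\zeta+0.75})^{2/(\tau-\zeta-2)})$ for some $\nu>0$, the factor graph $G_I$ of a random instance with $n$ variables and $m=\Delta n$ constraints satisfies the Plausibility assumption with probability $1-o(1)$, where $\eta = O(1/(\Delta K^{\tau-\zeta})^{2/(\tau-\zeta-2)})$. Conditioning on this event, I would then apply Theorem~\ref{cspsos}, which gives that Plausibility implies the degree $O(\zeta\eta n)$ SoS relaxation of $I$ has optimum value exactly $m$, i.e.\ perfect completeness. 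Since $\zeta$ is bounded above by the constant $0.99(\tau-2)$, the factor $\zeta$ is absorbed into the $O(\cdot)$, so $O(\zeta\eta n) = O(\eta n) = O(n/(\Delta K^{\tau-\zeta})^{2/(\tau-\zeta-2)})$, which is exactly the level claimed in the statement. Chaining the two steps yields the corollary with the asserted probability.

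The only delicate point, and the closest thing to an obstacle, is bookkeeping rather than mathematics: I must verify that the hypotheses of the preceding theorem coincide exactly with those assumed in the corollary (they do, by construction), that $\eta n \le 2\eta n \le n$ so that the Plausibility assumption is non-vacuous and the resulting degree bound $O(\zeta\eta n)$ is genuinely sub-linear in $n$, and that ``perfect completeness, optimum value $m$'' is the correct phrasing for the SoS objective (it is, since each constraint contributes at most $1$ to the relaxed objective). Finally, I would recall the earlier remark that the true integral optimum of such an instance concentrates around $m|\mathcal{C}|/q^K$, so that whenever $|\mathcal{C}| \ll q^K$ the corollary genuinely exhibits a large SoS integrality gap rather than a vacuous equality; this is commentary supporting the significance of the statement rather than part of its proof.
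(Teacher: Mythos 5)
Your proposal is correct and matches the paper's (implicit) argument exactly: the corollary is obtained by chaining the preceding theorem, which gives Plausibility with probability $1-o(1)$ for $\eta = O(1/(\Delta K^{\tau-\zeta})^{2/(\tau-\zeta-2)})$, with Theorem~\ref{cspsos}, which converts Plausibility into perfect completeness at level $O(\zeta\eta n) = O(\eta n)$. The parameter bookkeeping, including absorbing the bounded $\zeta$ into the $O(\cdot)$, is handled as the paper intends.
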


\section{Reductions to other problems}

Once we have shown an integrality gap for the SoS Hierarchy for Max $K$-CSPs, we can reduce this to show integrality gaps for the SoS Hierarchy for other problems directly. Roughly speaking, for a given problem $\Gamma$, using the hard instances $I$ of Max $K$-CSPs, we construct instances $J$ for the SoS relaxation of $\Gamma$ such that the following conditions hold:
\begin{itemize}
	\item Completeness: We produce SoS vectors such that they are feasible for the SoS relaxation for $\Gamma$
	\item Soundness: Our construction has to be robust in the sense that the actual optimum value of the instance is far away from the optimum value of the SoS relaxation, which can be bounded by the objective value of the feasible SoS solution constructed above
\end{itemize}

This idea was exploited by Tulsiani\cite{tul} to construct integrality gaps for Maximum Independent Set, Approximate Graph Coloring, Chromatic Number and Vertex Cover; and by Bhaskara et al.\cite{bcv} for Densest $k$-subgraph.

\subsection{Densest $k$-subgraph}

An instance of Densest $k$-Subgraph is an undirected unweighted graph $G = (V, E)$ and a positive integer $k$. The objective is to find a subset $W$ of $V$ with exactly $k$ vertices such that the number of edges with both endpoints in $W$, is maximized.

The first SoS hardness for the Densest $k$-subgraph problem was shown by Bhaskara et al.\cite{bcv}. The same construction with slightly different parameters and a stronger soudness argument was found to give a better gap by Manurangsi\cite{pasin}.

\begin{theorem}[\cite{bcv}, \cite{pasin}]
Fix a constant $0 < \rho < 1$. For all sufficiently large $n, q$ and integer $3\le D \le 10$, there exists an instance of Densest $k$-subgraph with $N = O(nq^{2D - 2 + \rho})$ vertices that demonstrates an integrality gap of $\Omega(q / \ln q)$ for the level $R = \Omega(\frac{n}{q^{(4D - 2 + 2\rho)/(D - 2) + 1}})$ SoS relaxation.
\end{theorem}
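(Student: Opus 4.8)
The plan is to obtain this integrality gap by a reduction from the SoS hardness of Max $K$-CSP established in Theorem~\ref{cspsos} (together with the expansion bound of the preceding lemma implicit in \cite{bcv}), following the template laid out at the start of this section. First I would pick the right CSP: take $q$ a prime power, set the arity $K = 2D$ or thereabouts (so that each constraint touches $2D$ variables), and choose the accepting set $\mathcal{C} \subseteq \mathbb{F}_q^K$ to encode a ``birthday-paradox'' predicate, namely a predicate whose satisfying assignments correspond to a collection of $q^{D-1}$ or so tuples with good pairwise-distance / low-complexity structure, so that $\mathcal{C}$ is $(\tau-1)$-wise uniform for $\tau \ge 3$ (here $D \ge 3$ will be what forces $\tau$ large enough). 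Then a random instance $I$ with $n$ variables and $m = \Delta n$ constraints, with $\Delta$ tuned as in Theorem~\ref{plaus}, satisfies the Plausibility assumption with high probability and hence by Theorem~\ref{cspsos} has an $O(\zeta\eta n)$-degree SoS solution of value $m$, while its true optimum is $m|\mathcal{C}|/q^K = o(m)$.

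Next I would construct the Densest $k$-subgraph instance $J$ from $I$ via the label-extended factor graph $H_{I,\beta}$ described in the excerpt (or a close variant of it): the vertices of $G$ are the vertices $(x_i, a, j)$ on the variable side, with the multiplicity parameter $\beta$ chosen to blow up the instance to $N = O(n q^{2D-2+\rho})$ vertices, and edges are placed between label-vertices that appear together in some satisfying assignment of some constraint $C_i$. Intuitively each satisfied constraint contributes a small clique-like gadget on $K\beta$ vertices; set $k$ so that a single consistent global assignment selects exactly one label per variable (times the $\beta$ copies), producing $\Theta(m)$-many such gadgets and hence a dense subgraph. For the completeness direction I would take the degree-$r'$ pseudoexpectation for $I$ guaranteed above and push it through the inclusion–exclusion / tensoring construction (exactly the $\bm{U}_{S,\alpha}$-style change of basis used in the Guruswami–Sinop proof) to build feasible SoS vectors $\bm{V}_S$ for the Densest $k$-subgraph relaxation on $J$; one checks the degree constraint $\sum_v \langle \bm{V}_{\{v\}}, \bm{V}_S\rangle = k\|\bm{V}_S\|^2$ holds because the CSP solution assigns exactly one label per variable, and the objective $\sum_{(u,v)\in E}\|\bm{V}_{\{u,v\}}\|^2$ evaluates to (essentially) $m$ times the per-gadget edge count, giving $FRAC = \Omega(q \cdot \text{OPT}_{\mathrm{true}})$ after the soundness estimate. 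The SoS degree that survives the reduction is $R = \Omega(\eta n)$ with $\eta$ from Theorem~\ref{plaus}, which after substituting the chosen $K, \Delta, \zeta$ works out to $R = \Omega\!\left(n / q^{(4D-2+2\rho)/(D-2)+1}\right)$.

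For soundness I would argue that any $k$-vertex subset $W$ of $V(J)$ spanning many edges must ``come from'' an assignment satisfying many constraints of $I$: using the expansion of the factor graph (the implicit-in-\cite{bcv} lemma, which says every small set of constraints touches $(K-\delta)$-many variables per constraint), a dense set of label-vertices cannot contain conflicting labels for too many variables, so it essentially induces a partial assignment, and the number of gadgets fully inside $W$ is at most roughly the number of constraints that partial assignment satisfies — which is $O(m|\mathcal{C}|/q^K)$ with high probability by Chernoff. Dividing completeness by soundness yields the $\Omega(q/\ln q)$ gap (the $\ln q$ loss comes from the standard slack needed to go from ``partial assignment'' to ``fully consistent assignment'' in the soundness counting, and from optimizing $\rho$). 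The main obstacle I anticipate is the soundness step: controlling how a near-optimal dense subgraph can ``cheat'' by spreading mass over inconsistent labels and partially-covered gadgets, and showing the expansion of $G_I$ is strong enough to rule this out quantitatively with the stated parameter $\rho$ — this is exactly where Manurangsi's sharpened soundness analysis improves on \cite{bcv}, and reproducing its bound (rather than a weaker polylogarithmic-in-$q$ loss) is the delicate part.
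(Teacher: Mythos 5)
Your overall architecture is the right one and matches the paper: a random Max $K$-CSP instance whose level-$O(\eta n)$ relaxation has perfect completeness, the label-extended factor graph as the Densest $k$-subgraph instance, completeness by transporting the CSP SoS vectors (Lemma \ref{completeness}), and soundness via Manurangsi's density bound (Lemma \ref{pas}). However, there is a concrete parameter error that would break the proof: the arity is \emph{not} $K = O(D)$; the construction takes $K = q-1$, so the arity is polynomial in the instance size. This is essential in several places at once. First, the code $\mathcal{C}$ of dimension $D-1$ that is $(D-1)$-wise uniform is a Vandermonde/Reed--Solomon-type code living in $\mathbb{F}_q^{q-1}$, so $|\mathcal{C}| = q^{D-1}$ while $q^K = q^{q-1}$, which is what makes the true optimum $m|\mathcal{C}|/q^K$ negligible. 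Second, Lemma \ref{pas} explicitly requires $q/2 \le K \le q$; with $K = 2D \le 20$ and $q$ large you cannot invoke it, and the $\Omega(q/\ln q)$ density bound on $2m$-subsets does not hold in that regime. Third, the vertex count $N = m|\mathcal{C}| + nq\Delta = O(nq^{2D-2+\rho})$ and the level bound $\Omega(\eta n/K) = \Omega(n/q^{(4D-2+2\rho)/(D-2)+1})$ both use $K \approx q$ and $\Delta = 100q^{D+\rho}/K$. Finally, because $K$ is superconstant, Theorem \ref{plaus} is useless here — its $\eta$ contains a factor $2^{-\Theta(K/(\tau-2))}$ that would be exponentially small in $q$ — and one must instead use the superconstant-$K$ plausibility bound (Corollary \ref{csp_hardness}, with $\tau = D$), which is precisely why that section exists in the thesis. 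Your proposal, as written, would invoke the wrong plausibility theorem and an inapplicable soundness lemma.

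Two smaller corrections. The Densest $k$-subgraph instance is the bipartite graph $H_{I,\Delta}$ itself with $k = 2m$ (a perfect assignment selects the $m$ satisfied constraint-assignment vertices on the left and $m = \Delta n$ label vertices on the right, yielding $\Delta m K$ edges); there are no edges among label vertices and no clique gadgets, so your "clique-like gadget on $K\beta$ vertices" picture would require a different soundness analysis than the one available. And the completeness step does not need the $\bm{U}_{S,\alpha}$ inclusion--exclusion machinery from the Guruswami--Sinop rounding: the CSP SoS vectors $\bm{W}_{(T,\beta)}$ are already indexed by partial assignments, so one simply maps each consistent subset $S$ of graph vertices to the induced pair $(T,\beta)$ and sets $\bm{V}_S = \bm{W}_{(T,\beta)}$ (and $\bm{V}_S = \bm{0}$ for inconsistent $S$), which is how Lemma \ref{completeness} is proved.
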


The graphs that exhibit this integrality gap are constructed from random instances of Max $K$-CSP. For a random instance $I$ of Max $K$-CSP, consider an instance $\Gamma$ of Densest $k$-subgraph with the graph being $G = H_{I, \Delta}$ and $k = 2m$.

For a prime number $q$, we set $K = q - 1, \Delta = 100q^{D + \rho}/K,  \eta = 1/(10^8(\Delta K^D)^{2 / (D - 2)}$ and $\mathcal{C}$ is a code (a code is a subspace of $\mathbb{F}_q^K$, treated as a vector space over $\mathbb{F}_q$) with dimension $D - 1$ and is $(D - 1)$-wise uniform. The existence of such a code is shown below.

\begin{lemma}
For an integer $D \ge 3$ and prime number $q \ge D$, there exists a code $\mathcal{C}$ in $\mathbb{F}_q^{q - 1}$ which has dimension $(D - 1)$ and is $(D - 1)$-wise uniform.
\end{lemma}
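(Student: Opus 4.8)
The plan is to exhibit such a code explicitly using a Reed--Solomon-type construction. Since $q \geq D$, the field $\mathbb{F}_q$ has at least $D$ elements, so in particular at least $D - 1$ nonzero elements; in fact it has exactly $q - 1$ nonzero elements, say $a_1, \ldots, a_{q-1}$, and I would index the $q - 1$ coordinates of $\mathbb{F}_q^{q-1}$ by these. The code $\mathcal{C}$ will be the evaluation code of polynomials of degree at most $D - 2$: that is, $\mathcal{C} = \{ (f(a_1), \ldots, f(a_{q-1})) : f \in \mathbb{F}_q[t], \deg f \leq D - 2 \}$. This is manifestly an $\mathbb{F}_q$-linear subspace of $\mathbb{F}_q^{q-1}$, and since a polynomial of degree at most $D - 2$ has $D - 1$ coefficients and the evaluation map from the coefficient vector to the codeword is injective (a nonzero polynomial of degree at most $D - 2 \leq q - 2$ cannot vanish at all $q - 1$ nonzero points), $\mathcal{C}$ has dimension exactly $D - 1$.

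The main point to verify is $(D-1)$-wise uniformity: for any choice of $D - 1$ distinct coordinates, the projection of a uniformly random codeword onto those coordinates is uniform on $\mathbb{F}_q^{D-1}$. Fix distinct indices corresponding to field elements $b_1, \ldots, b_{D-1}$. The projection map sends the coefficient vector of $f$ to $(f(b_1), \ldots, f(b_{D-1}))$; this is a linear map $\mathbb{F}_q^{D-1} \to \mathbb{F}_q^{D-1}$ whose matrix is the Vandermonde matrix in $b_1, \ldots, b_{D-1}$. Since the $b_i$ are distinct, this Vandermonde matrix is invertible, so the projection map is a bijection. Pushing forward the uniform distribution on codewords (equivalently, the uniform distribution on coefficient vectors) through a bijection yields the uniform distribution on $\mathbb{F}_q^{D-1}$, which is exactly the claimed $(D-1)$-wise uniformity.

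I expect the Vandermonde nonsingularity step to be the only substantive point, and it is entirely standard; the mild hypothesis $q \geq D$ is needed precisely to guarantee there are at least $D - 1$ distinct nonzero evaluation points (and that $D - 2 \leq q - 2$, so the evaluation map on coefficient vectors stays injective). One small thing to be careful about is the case $D = 3$, where "degree at most $D - 2 = 1$" gives affine-linear polynomials and the argument goes through verbatim; no degenerate behavior arises for $D$ in the stated range $3 \leq D \leq 10$. Thus $\mathcal{C}$ has all the required properties.
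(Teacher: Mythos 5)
Your construction is the same as the paper's: the paper defines $\mathcal{C}$ as the image of the $(q-1)\times(D-1)$ matrix with entries $g^{(i-1)(j-1)}$ for a primitive root $g$, which is precisely the Reed--Solomon evaluation code of degree-$\le D-2$ polynomials at all $q-1$ nonzero field elements, and both proofs rest on the invertibility of the $(D-1)\times(D-1)$ Vandermonde submatrices at distinct points. Your argument is correct and differs only in being phrased in polynomial-evaluation language rather than matrix language.
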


\begin{proof}
Fix a primitive root $g$ of $\mathbb{F}_q$. Consider the $(q - 1) \times (D - 1)$ matrix $A$ as follows.
\[A =
\begin{bmatrix}
1 & 1 & 1 & \dots  & 1 \\
1 & g & g^2 & \dots  & g^{D - 2} \\
1 & g^2 & g^4 & \dots  & g^{2(D - 2)} \\
\vdots & \vdots & \vdots & \ddots & \vdots \\
1 & g^{q - 1} & g^{2(q - 1)} & \dots  & g^{(D - 2)(q - 1)}
\end{bmatrix}
\]
Here, the $(i, j)$th entry of $A$ is $g^{(i - 1)(j - 1)}$ for $i \le q - 1, j \le D - 1$. Considering $A$ as a linear operator from $\mathbb{F}_q^{D - 1}$ to $\mathbb{F}_q^{q - 1}$, we set $\mathcal{C} = \text{Im}(A)$, the image of $A$. Note that the rank of $A$ is $D - 1$ since there are at $D - 1$ columns and the square matrix formed by the first $D - 1$ rows has determinant $\displaystyle\prod_{0 \le i < j \le D - 2}(g^j - g^i)$ which is nonzero since $g$ is a primitive root and $D - 2 \le q - 2$. Therefore, $\dim \mathcal{C} = D - 1$. To prove that $\mathcal{C}$ is $(D - 1)$ uniform, consider any $D - 1$ indices $r_1 < r_2 < \ldots < r_{D - 1}$ in $[q - 1]$. Suppose we wish to determine the number of elements $\bm{c} = (c_1, \ldots, c_{q - 1}) \in \mathcal{C}$ with fixed values of $\bm{c}_{r_i}$. This condition can be written as $A\bm{b} = \bm{c}$ for some vector $\bm{b} \in \mathbb{F}_q^{D - 1}$. Note that, the $(D - 1) \times (D - 1)$ submatrix of $A$ formed by choosing the rows with indices $r_1, \ldots, r_{D - 1}$ is nonsingular, since the determinant is $\displaystyle\prod_{0 \le i < j \le D - 2}(g^{r_j} - g^{r_i}) \neq 0$. This means that the system of $D - 1$ equations uniquely determine $\bm{b}$ and hence, $\bm{c}$ is also determined, which proves that there is a unique $\bm{c}$ with any choice of predetermined values $\bm{c}_{r_i}$. This also proves that $\mathcal{C}$ is $(D - 1)$ uniform.
\end{proof}

Using the SoS hardness results of Max $K$-CSP, We can show that the level $O(\eta n)$ SoS relaxation for Max $K$-CSP with the above parameters, for a sufficiently small constant $\zeta > 0$ achieves perfect completeness. The following lemma determines a lower bound on the completeness of the graph construction assuming perfect completeness for MAX $K$-CSP.

\begin{lemma}[\cite{bcv}]\label{completeness}
If there exists a perfect solution for $r$ levels of the SoS Hierarchy for $I$, then there exists a solution of value $\Delta mK$ for $r/K$ levels of the SoS hierarchy for $\Gamma$.
\end{lemma}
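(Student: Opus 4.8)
The plan is to build the SoS solution for $\Gamma$ directly from the hypothesised perfect level-$r$ SoS solution $\{\bm{W}_{(T,\gamma)}\}$ for $I$, by translating "a vertex of $G=H_{I,\Delta}$ is chosen'' into "a partial CSP assignment holds''. Every vertex of $G$ is either a constraint vertex $(C_i,\alpha)$, which should correspond to the event that the CSP assignment restricted to the $K$ variables in $C_i$ equals $\alpha$, or a variable-copy vertex $(x_j,a,\ell)$, which should correspond to the event $x_j=a$ and touches a single CSP variable. For a set $S$ of vertices of $G$ with $|S|\le r/K$, let $T(S)\subseteq[n]$ be the union of the CSP variables touched by the vertices of $S$, so $|T(S)|\le K|S|\le r$, and let $\gamma(S)$ be the partial assignment on $T(S)$ they jointly demand; call $S$ \emph{infeasible} if two vertices of $S$ demand conflicting values. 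Define $\bm{V}_S=\bm{0}$ if $S$ is infeasible and $\bm{V}_S=\bm{W}_{(T(S),\gamma(S))}$ otherwise. The factor-$K$ loss in the number of levels is exactly the bound $|T(S)|\le K|S|$. Note this reduction uses nothing about $I$ beyond the existence of a perfect SoS solution.

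First I would record two consequences of \emph{perfect} completeness that drive everything. Since the objective value equals $m$ and each term $\sum_{\alpha}C_i(\alpha)\norm{\bm{W}_{(C_i,\alpha)}}^2$ is at most $\sum_{\alpha}\norm{\bm{W}_{(C_i,\alpha)}}^2=1$, every such term equals $1$, forcing $\bm{W}_{(C_i,\alpha)}=\bm{0}$ for every non-satisfying $\alpha$. Combining this with the variable-normalization constraints, which give $\sum_{a\in[q]}\bm{W}_{(\{j\},\,j\mapsto a)}=\bm{W}_{\phi}$ and, propagating by consistency, $\sum_{\gamma\in[q]^{T}}\bm{W}_{(T,\gamma)}=\bm{W}_{\phi}$ for every $T\in[n]_{\le r}$, one obtains $\sum_{\alpha:\,C_i(\alpha)=1}\bm{W}_{(C_i,\alpha)}=\bm{W}_{\phi}$ for each constraint $C_i$.

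Next I would check that $\{\bm{V}_S\}$ is feasible for the level $r/K$ DkS relaxation of $\Gamma$ and has the claimed value. The constraint $\norm{\bm{V}_{\phi}}^2=1$ is immediate, and $\langle\bm{V}_{S_1},\bm{V}_{S_2}\rangle\ge 0$ is inherited from the CSP solution. For the consistency constraints, observe $T(S_1\cup S_2)=T(S_1)\cup T(S_2)$ and, when $S_1\cup S_2$ is feasible, $\gamma(S_1)\circ\gamma(S_2)=\gamma(S_1\cup S_2)$; so if $S_1\cup S_2=S_3\cup S_4$ is feasible then all four sets are feasible and the CSP consistency constraints give $\langle\bm{V}_{S_1},\bm{V}_{S_2}\rangle=\langle\bm{W}_{(T(S_1),\gamma(S_1))},\bm{W}_{(T(S_2),\gamma(S_2))}\rangle=\langle\bm{W}_{(T(S_3),\gamma(S_3))},\bm{W}_{(T(S_4),\gamma(S_4))}\rangle=\langle\bm{V}_{S_3},\bm{V}_{S_4}\rangle$, while if $S_1\cup S_2$ is infeasible then a conflicting pair either lies inside one of $S_1,\dots,S_4$ (zeroing that $\bm{V}$) or is split across one of the two pairs (zeroing that $\bm{W}$-inner product, since the assignments clash on the intersection), so both sides vanish. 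For the cardinality constraint I would split $\sum_{v\in V(G)}\langle\bm{V}_{\{v\}},\bm{V}_S\rangle$ over constraint vertices and over variable-copy vertices: using $\langle\bm{W}_{\phi},\bm{V}_S\rangle=\norm{\bm{V}_S}^2$ (consistency) and the two consequences above, the first sum is $\sum_{i=1}^{m}\big\langle\sum_{\alpha:\,C_i(\alpha)=1}\bm{W}_{(C_i,\alpha)},\,\bm{V}_S\big\rangle=m\norm{\bm{V}_S}^2$, and the second is $\beta\sum_{j=1}^{n}\big\langle\sum_{a\in[q]}\bm{W}_{(\{j\},\,j\mapsto a)},\,\bm{V}_S\big\rangle=\beta n\norm{\bm{V}_S}^2=\Delta n\norm{\bm{V}_S}^2=m\norm{\bm{V}_S}^2$, since $\beta=\Delta$ and $m=\Delta n$; together these give $2m\norm{\bm{V}_S}^2=k\norm{\bm{V}_S}^2$. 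Finally, every edge of $G$ has the form $\{(C_i,\alpha),(x_j,\alpha(x_j),\ell)\}$ with $x_j\in C_i$, $C_i(\alpha)=1$, $\ell\in[\beta]$, which is a feasible pair with $\bm{V}_{\{(C_i,\alpha),(x_j,\alpha(x_j),\ell)\}}=\bm{W}_{(C_i,\alpha)}$, so the objective equals $\sum_{i}\sum_{\alpha:\,C_i(\alpha)=1}K\beta\norm{\bm{W}_{(C_i,\alpha)}}^2=K\beta\sum_{i=1}^{m}1=K\Delta m=\Delta m K$.

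The inner-product identities are routine; the step that needs care, and the one I expect to be the main obstacle, is the bookkeeping for infeasible sets in the consistency check, together with the realization that the reduction hinges on \emph{perfect} completeness: it is precisely the vanishing of $\bm{W}_{(C_i,\alpha)}$ for non-satisfying $\alpha$ (hence $\sum_{\alpha:\,C_i(\alpha)=1}\bm{W}_{(C_i,\alpha)}=\bm{W}_{\phi}$) that makes the contribution of the constraint vertices to the cardinality constraint come out to exactly $m$, so that $|R'|$ concentrates around $k=2m$; with merely near-perfect completeness this on-the-nose feasibility would fail.
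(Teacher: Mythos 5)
Your proposal is correct and follows essentially the same route as the paper: you define $\bm{V}_S$ for a vertex set $S$ of $H_{I,\Delta}$ as $\bm{W}_{(T(S),\gamma(S))}$ when the demanded partial assignment is consistent and $\bm{0}$ otherwise, which is exactly the construction the paper describes before deferring the verification to \cite{bcv}. Your added feasibility check and the two consequences of perfect completeness (vanishing of $\bm{W}_{(C_i,\alpha)}$ for unsatisfying $\alpha$, and $\sum_{\alpha:\,C_i(\alpha)=1}\bm{W}_{(C_i,\alpha)}=\bm{W}_{\phi}$) match the identities the paper itself establishes and uses later in the SSBVE reduction.
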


We describe the construction of the SoS vectors because that will be used in a subsequent application to proving SoS hardness of Minimum $p$-Union. The complete proof is given in \cite{bcv}. Suppose $\bm{W}_{(T, \alpha)}$ are the optimal SoS vectors for the level $r$ relaxation of $I$, for $\alpha \in [q]^T, |T| \le r$, then the level $r / K$ SoS vectors $\bm{V}_S$ for $\Gamma$ are as follows. Let $S$ be any subset of the vertices $V$ of $G$ with $|S| \le r / K$. Then, define $S_1 = \{(C_t, \alpha)\;|\;(C_t, \alpha) \in S\}$ be the left vertices in $S$ and $S_2 = \{(x_s, \alpha_{x_s}, j)\;|\;(x_s, \alpha_{x_s}, j) \in S\}$ be the right vertices in $S$. Say $(x_s, \alpha_{x_s})$ is contained in $S$ if either
\begin{itemize}
	\item $x_s \in C_t, \alpha(x_s) = \alpha_{x_s}$ for some $(C_t, \alpha) \in S_1$ or
	\item $(x_s, \alpha_{x_s}, j) \in S_2$ for some $j \in [\Delta]$
\end{itemize}
Say $S$ is inconsistent if there exists a variable $x_s$ with two distinct assignments in $S$, that is, there exist $\alpha_{x_s} \neq \alpha'_{x_s} \in [q]$ such that both $(x_s, \alpha_{x_s})$ and $(x_s, \alpha'_{x_s})$ are contained in $S$. If $S$ is inconsistent, we set $\bm{V}_S = \bm{0}$. Else, define $T = (\cup_{(C_t, \alpha) \in S_1} C_t) \cup (\cup_{j \in [\Delta]}\cup_{(x_s, \alpha_{x_s}, j) \in S_2}\{x_s\})$. Note that $|T| \le r$. We define $\beta \in [q]^T$ as follows: for every variable $x_s$ in $T$, choose $\alpha_{x_s}$ such that $(x_s, \alpha_{x_s})$ is contained in $S$ which happens for a unique $\alpha_{x_s}$ since $x_s \in T$ and $S$ is not inconsistent, and set $\beta(x_s) = \alpha_{x_s}$. Finally, we set $\bm{V}_{S} = \bm{W}_{(T, \beta)}$.

The improved soundness result is as below.

\begin{lemma}[\cite{pasin}]\label{pas}
Let $0 < \rho < 1$ be a constant. If $q/2 \le K \le q, q \ge 10000 / \rho, |\mathcal{C}| \le q^{10}$ and $\Delta \ge 100q^{1 + \rho}|\mathcal{C}|/K$, then the optimum solution for $\Gamma$ has at most $4000\Delta mK\ln q / (q\rho)$ edges with probability at least $1 - o(1)$.
\end{lemma}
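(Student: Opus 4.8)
The goal is to show that, with probability $1-o(1)$ over the random instance $I$, \emph{no} set $W$ of $k=2m$ vertices of $G=H_{I,\Delta}$ spans more than $4000\,\Delta m K\ln q/(q\rho)$ edges; combined with the completeness value $\Delta mK$ from Lemma~\ref{completeness}, this yields the claimed $\Omega(q/\ln q)$ integrality gap. Split $W=A\cup B$ into its constraint side $A=W\cap L$ (constraint–assignment pairs) and its variable side $B=W\cap R$ (variable–label–copy triples), with $|A|+|B|=2m$. For each constraint $C_i$ put $A_i=\{\alpha:(C_i,\alpha)\in A\}$ (so $|A_i|\le|\mathcal{C}|$), and for a variable $x$, label $v$ put $b_{x,v}=|\{j:(x,v,j)\in B\}|\le\Delta$. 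Counting edges by their right endpoint gives $e(W)=\sum_{x,v}b_{x,v}\,\deg_A(x,v)$, where $\deg_A(x,v)$ is the number of $(C_i,\alpha)\in A$ with $x\in C_i$ and $\alpha(x)=v$; since each $b_{x,v}\le\Delta$ while $\sum_{x,v}b_{x,v}=|B|$, the best $B$ puts $\Delta$ copies on the pairs of largest $\deg_A$, so $e(W)\le\Delta\sum_{(x,v)\in P}\deg_A(x,v)$ for a ``target'' set $P$ of at most $\lceil|B|/\Delta\rceil$ pairs. Writing $P_x=\{v:(x,v)\in P\}$ and re-summing, $\sum_{(x,v)\in P}\deg_A(x,v)=\sum_{(C_i,\alpha)\in A}|\{x\in C_i:\alpha(x)\in P_x\}|$.

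Next I would bound this using two ingredients. First, the $(D-1)$-wise (in particular $1$-wise) uniformity of $\mathcal{C}$: summing over all satisfying assignments of $C_i$ instead of over $A_i$ gives $\sum_{\alpha\in\mathcal{C}_i}|\{x\in C_i:\alpha(x)\in P_x\}|=\tfrac{|\mathcal{C}|}{q}\sum_{x\in C_i}|P_x|$, hence $e(W)\le\tfrac{\Delta|\mathcal{C}|}{q}\sum_x|P_x|\,d_{\mathcal S}(x)$, where $\mathcal S=\{i:A_i\neq\emptyset\}$ and $d_{\mathcal S}(x)$ is the number of constraints of $\mathcal S$ containing $x$; and one always also has the crude bound $e(W)\le|A|K\Delta$. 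Second, the expansion of the random factor graph $G_I$ — the content of the BCV expansion lemma quoted above (any $c\le\eta n$ constraints touch more than $(K-\delta)c$ variables), together with a Chernoff bound giving $d_x\le(1+o(1))\Delta K$ for all $x$ since $\Delta K\gg\log n$ — which prevents $d_{\mathcal S}$ from being too concentrated: the incidences $\sum_x d_{\mathcal S}(x)=K|\mathcal S|$ of any small $\mathcal S$ must be spread over $\Omega(K|\mathcal S|)$ variables. The heart of the proof is to play these against the budget $|A|+|B|=2m$: a large $\mathcal S$ forces $|A|$, hence the total incidence of $d_{\mathcal S}$, to be large, but then $|B|$ and thus $|P|$ are small; a small $\mathcal S$ keeps $d_{\mathcal S}$ spread by expansion; and in the intermediate regime one must also control, via the randomness of which $K$-subsets the constraints are, how much any constraint can overlap the few variables carrying the large values of $d_{\mathcal S}$ (the number of constraints meeting a fixed set $Z$ in $\ge t$ coordinates is $\lesssim m(eK|Z|/(tn))^t$). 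Balancing these, and using $\Delta\ge100q^{1+\rho}|\mathcal{C}|/K$, $q\ge10000/\rho$ and $|\mathcal{C}|\le q^{10}$, produces $e(W)\le4000\,\Delta mK\ln q/(q\rho)$ for the fixed $W$, the $\ln q$ arising from summing that hypergeometric tail over a dyadic range of thresholds.

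Finally I would take a union bound: there are at most $(m|\mathcal{C}|+nq\Delta)^{2m}$ choices of $W$ (equivalently, of the pattern $(\{A_i\},\{b_{x,v}\})$), and the events that could break the previous step for some $W$ — a degree exceeding $(1+o(1))\Delta K$, or the factor graph meeting some small variable set $Z$ far more often than expected — each have probability super-exponentially small in $2m$ under the stated hypotheses, precisely because the admissible edge count is $O(\Delta mK\ln q/(q\rho))$ and not polynomially larger. Choosing the constants in the tail estimates against the $(m|\mathcal{C}|+nq\Delta)^{2m}$ count makes the union bound $o(1)$, which finishes the lemma.

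The main obstacle is exactly this quantitatively tight balancing: a naive use of the uniformity of $\mathcal{C}$ alone only yields $e(W)\lesssim\Delta mK$, which would give merely a $\Theta(1)$ gap; pushing the bound down to $\Delta mK\cdot\ln q/(q\rho)$ — the improvement of Manurangsi over the earlier estimate of \cite{bcv} — requires invoking the expansion of $G_I$ (not just the uniformity of $\mathcal{C}$) to forbid the adversary from concentrating the $A$-side on a few heavily overlapping variables, and then optimizing the $|A|/|B|$ split and the threshold in the tail bound so sharply that the argument still survives the $(m|\mathcal{C}|+nq\Delta)^{2m}$-way union bound. Each individual estimate — the re-summation via $1$-wise uniformity, the hypergeometric and Chernoff tails, the expansion lemma — is routine; lining up every constant so the final bound is genuinely $4000\,\Delta mK\ln q/(q\rho)$ rather than a logarithmic or polynomial factor worse is the delicate part.
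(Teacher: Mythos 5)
The paper does not actually prove Lemma~\ref{pas}; it is quoted verbatim from \cite{pasin} and used as a black box, so there is no in-paper argument to measure your proposal against. Judged on its own terms, your write-up is a plan rather than a proof: you correctly identify the natural decomposition (split $W$ into its left part $A$ and right part $B$, count edges by right endpoints, cap each multiplicity $b_{x,v}$ at $\Delta$, and re-sum via the $1$-wise uniformity of the shifted code $\mathcal{C}$), but the step you yourself call ``the heart of the proof'' --- balancing the size of $\mathcal{S}=\{i:A_i\neq\emptyset\}$ against the budget $|A|+|B|=2m$ and extracting the factor $\ln q/(q\rho)$ --- is only described, never carried out. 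Since the entire content of the lemma over the trivial bound $e(W)\le|A|K\Delta\le 2\Delta mK$ is precisely that $\ln q/(q\rho)$ saving, and since you also concede that the union bound over the $(m|\mathcal{C}|+nq\Delta)^{2m}$ choices of $W$ must be checked against tail probabilities you have not computed, the proposal has a genuine gap at its center.

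Two more specific concerns. First, the re-summation via $1$-wise uniformity as you state it is an \emph{expectation} over all satisfying $\alpha\in\mathcal{C}_i$, whereas the adversary chooses $A_i$ to be the worst subset of satisfying assignments; passing from ``average over $\mathcal{C}_i$'' to ``every $(C_i,\alpha)\in A$'' requires either taking $A_i$ to be all of $\mathcal{C}_i$ (which changes $|A|$) or a concentration/union-bound argument over the choice of $A_i$, and this is exactly where the randomness of the shifts $b_i$ must be spent --- you never say how. Second, you lean on the expansion lemma for $G_I$ (the plausibility-type statement), but that lemma controls sets of at most $\eta n$ constraints, while here $|\mathcal{S}|$ can be as large as $m=\Delta n\gg\eta n$; you would need to justify why expansion at that scale is available or find a different way to rule out concentration of $d_{\mathcal{S}}$. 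Until the balancing computation and the tail-versus-union-bound bookkeeping are actually done with the stated constants, the proposal should be regarded as a plausible outline of \cite{pasin}'s strategy, not a proof of the lemma.
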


\begin{corollary}
For any $0 < \epsilon < 1/14$, there exists an instance of Densest $k$-subgraph on $N$ vertices that demonstrates an integrality gap of $\Omega(N^{1/14 - \epsilon})$ for the level $N^{\Omega(\epsilon)}$ SoS relaxation.
\end{corollary}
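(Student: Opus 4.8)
The plan is simply to instantiate the integrality gap theorem of \cite{bcv} and \cite{pasin} quoted above with the parameter choice $D = 4$ and then re-express everything in terms of the vertex count $N$. Recall that for a fixed integer $3 \le D \le 10$, a constant $\rho \in (0,1)$, a large prime $q$, and the prescribed parameters $K = q - 1$, $\Delta = 100 q^{D+\rho}/K$, and $\mathcal{C} \subseteq \mathbb{F}_q^{q-1}$ the $(D-1)$-wise uniform code of dimension $D - 1$ from the lemma above, that theorem produces a Densest $k$-subgraph instance on $N = \Theta(n q^{2D-2+\rho})$ vertices whose level $R = \Omega\bigl(n / q^{(4D-2+2\rho)/(D-2) + 1}\bigr)$ SoS relaxation has value $\Delta m K$ (by Lemma~\ref{completeness} together with perfect completeness of the CSP relaxation, Theorem~\ref{cspsos} and Theorem~\ref{plaus}) while the true optimum is a $\Theta(q/\ln q)$ factor smaller (by Lemma~\ref{pas}), hence an integrality gap $\Omega(q/\ln q)$. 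If we ask that $R$ be a fixed power $q^{\delta}$ of $q$ — which forces $n = \Theta(q^{(4D-2+2\rho)/(D-2) + 1 + \delta})$ — then $N = \Theta(q^{f(D) + \rho' + \delta})$, where $f(D) = (4D-2)/(D-2) + 2D - 1$ and $\rho'$ is a small multiple of $\rho$. Since $f(3) = 15$, $f(4) = 14$, $f(5) = 15$, the optimal admissible choice is $D = 4$.

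With $D = 4$, $K = q - 1$, $|\mathcal{C}| = q^3 \le q^{10}$, and $\Delta = 100 q^{4+\rho}/K = 100 q^{1+\rho}|\mathcal{C}|/K$, so the hypotheses of Lemma~\ref{pas} (and, for large $q$, $q \ge 10000/\rho$) are met. Given $\epsilon \in (0, 1/14)$, fix constants $\rho, \delta > 0$ small enough that $1/(14 + 2\rho + \delta) \ge 1/14 - \epsilon/2$; this is possible since $2\rho + \delta$ can be made arbitrarily small. Now let $q$ range over large primes and set $n = \lceil q^{8 + \rho + \delta} \rceil$; one checks $(4\cdot 4 - 2 + 2\rho)/(4-2) + 1 = 8 + \rho$, so indeed $R = \Omega(q^{\delta})$, and $N = \Theta(n q^{6+\rho}) = \Theta(q^{14 + 2\rho + \delta})$, i.e. $q = \Theta(N^{1/(14+2\rho+\delta)})$. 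Substituting,
\[
\text{integrality gap} \;=\; \Omega\!\left(\frac{q}{\ln q}\right) \;=\; \Omega\!\left(\frac{N^{1/(14+2\rho+\delta)}}{\ln N}\right) \;=\; \Omega\!\left(N^{1/14 - \epsilon}\right),
\]
because $N^{\epsilon/2}$ dominates $\ln N$, while the number of levels is $R = \Omega(q^{\delta}) = \Omega\!\bigl(N^{\delta/(14+2\rho+\delta)}\bigr) = N^{\Omega(\epsilon)}$, as claimed.

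There is no new idea beyond the optimization $D = 4$: the substance of the statement is entirely contained in the theorem of \cite{bcv} and \cite{pasin}, which itself rests on the CSP SoS lower bound (Theorem~\ref{cspsos}, Theorem~\ref{plaus}), the completeness reduction (Lemma~\ref{completeness}), and the soundness bound (Lemma~\ref{pas}). Accordingly, the only thing requiring care — and thus the main obstacle — is the bookkeeping: verifying that $f(4) = 14$ really is the minimum of $f$ over the admissible range, confirming that every parameter constraint imposed by the cited lemmas is satisfied by the choices $D = 4$, $K = q-1$, $\Delta = 100 q^{4+\rho}/K$, $n = \lceil q^{8+\rho+\delta}\rceil$, and checking that the polylogarithmic loss in $q/\ln q$ is absorbed into the $N^{-\epsilon}$ slack obtained by taking $\rho$ and $\delta$ small.
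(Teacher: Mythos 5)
Your proposal is correct and follows essentially the same route as the paper, whose entire proof is the instantiation $D = 4$, $q = N^{1/14-\epsilon/2}$, $\rho = \epsilon/1000$ in the quoted theorem of \cite{bcv} and \cite{pasin}. Your version merely makes explicit the bookkeeping the paper leaves implicit (the optimality of $D=4$ via $f(D)$, the absorption of the $\ln q$ loss, and the verification that the level count is $N^{\Omega(\epsilon)}$), all of which checks out.
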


\begin{proof}
The corollary follows from the above theorem by setting $D = 4, q = N^{1/14 - \epsilon / 2}$ and $\rho = \epsilon / 1000$.
\end{proof}

\subsection{Densest $k$-subhypergraph}

This is a natural variant of Densest $k$-subgraph for hypergraphs. An instance of Densest $k$-subhypergraph is an unweighted hypergraph $G = (V, E)$ and a positive integer $k$ and the objective is to find a subset $W$ of $V$ with exactly $k$ vertices such that the number of edges $e \in E$ with $e \subseteq W$, is maximum.

For any constant $\epsilon > 0$, for Densest $k$-subhypergraph on $3$-uniform hypergraphs, Chlamt\'{a}\v{c} et al.\cite{dksh} gave an $O(n^{4(4-\sqrt{3})/13 + \epsilon})$ approximation. Here, we present lower bounds for the natural SoS hierarchy for the general problem.

The SoS relaxation is almost identical to Densest $k$-subgraph but this time, the objective function is $\displaystyle\sum_{F \in E}\prod_{u \in F} x_u$. Assume $V = [n]$. The level $r$ SoS relaxation is as follows.
\begin{align*}
\text{Maximize}\qquad&\displaystyle\sum_{F \in E}\norm{\bm{V}_F}^2&&\\
\text{subject to}\qquad&\displaystyle\sum_{v \in V}\langle \bm{V}_{\{v\}}, \bm{V}_S \rangle = k\norm{\bm{V}_S}^2 &\forall& S \in [n]_{\le r}\\
&\langle \bm{V}_{S_1}, \bm{V}_{S_2}\rangle = \langle \bm{V}_{S_3}, \bm{V}_{S_4}\rangle &\forall& S_1 \cup S_2 = S_3 \cup S_4\text{ and }S_i \in [n]_{\le r}\\
&\langle \bm{V}_{S_1}, \bm{V}_{S_2}\rangle \ge 0 &\forall& S_1, S_2 \in [n]_{\le r}\\
&\norm{\bm{V}_{\phi}}^2 = 1&&
\end{align*}

We reduce integrality gaps for the SoS hierarchy for Densest $k$-subgraph to integrality gaps for the SoS hierarchy for Densest $k$-subhypergraph. The maximum number of vertices in any hyperedge is called the arity of the hypergraph.

\begin{theorem}\label{hyper}
For any positive integer $t$, if the integrality gap of $r \ge 2^t$ levels of the SoS hierarchy for Densest $k$-subgraph is $\alpha(n)$ for instances with $n$ vertices and number of edges that is not bounded as $n$ grows, the integrality gap of $r$ levels of SoS hierarchy for Densest $k$-subhypergraph on $n$ vertices of arity $2^t$ is at least $(\alpha(n) / 2^{t + 2})^{2^{t - 1}}$.
\end{theorem}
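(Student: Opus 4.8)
The plan is to reduce from the Densest $k$-subgraph integrality gap by a \emph{doubling} construction on hyperedges, iterated $t-1$ times, keeping the same vertex set. Suppose $(G,k)$ with $|V(G)|=n$ realizes the gap $\alpha(n)$: there are feasible level-$r$ SoS vectors $\{\bm{V}_S\}_{|S|\le r}$ for Densest $k$-subgraph on $G$ with objective $FRAC=\sum_{(u,v)\in E(G)}\norm{\bm{V}_{\{u,v\}}}^2$, while the integral optimum is $OPT$ and $FRAC/OPT=\alpha(n)$. Fix a linear order on $V$ and build hypergraphs $G=H^{(1)},H^{(2)},\ldots,H^{(t)}$ on $V$, where $H^{(i)}$ has arity $2^i$ and, for $i\ge 1$, a $2^{i+1}$-set $F$ is a hyperedge of $H^{(i+1)}$ exactly when its $2^i$ smallest vertices and its $2^i$ largest vertices are both hyperedges of $H^{(i)}$. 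Unwinding this, $F\in E(H^{(t)})$ iff, listing $F$ in increasing order, every consecutive pair is an edge of $G$; equivalently $F$ is the vertex set of a unique size-$2^{t-1}$ matching of $G$ whose edges are pairwise non-crossing intervals. The reduction outputs $(H^{(t)},k)$.

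\emph{Completeness.} The crucial point is that the SoS feasible region for Densest $k$-subhypergraph on $(H^{(t)},k)$ is governed by exactly the same constraints (the $k$-subset constraint, local consistency, nonnegativity, $\norm{\bm{V}_\phi}^2=1$) as that for Densest $k$-subgraph on $(G,k)$; only the objective differs. Hence the very same vectors $\{\bm{V}_S\}_{|S|\le r}$ are feasible, and because $r\ge 2^t$ the vector $\bm{V}_F$ is available for every hyperedge $F$. It remains to bound the new objective $\sum_{F\in E(H^{(t)})}\norm{\bm{V}_F}^2$ from below. Two facts do the work: local consistency gives $\norm{\bm{V}_{A\cup B}}^2=\langle\bm{V}_A,\bm{V}_B\rangle$ for disjoint $A,B$ (both sides equal $\langle\bm{V}_{A\cup B},\bm{V}_\phi\rangle$), and Cauchy--Schwarz through $\bm{V}_\phi$ gives, for any family $\mathcal{F}$, $\sum_{A,B\in\mathcal{F}}\langle\bm{V}_A,\bm{V}_B\rangle=\norm{\sum_{A\in\mathcal{F}}\bm{V}_A}^2\ge\bigl(\sum_{A\in\mathcal{F}}\norm{\bm{V}_A}^2\bigr)^2$. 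Writing $F_i=\sum_{e\in E(H^{(i)})}\norm{\bm{V}_e}^2$ (so $F_1=FRAC$) and applying both facts with $\mathcal{F}=E(H^{(i)})$, I would derive a recursion of the shape $F_{i+1}\ge c_i^{-1}\bigl(F_i^2-(\text{contribution of pairs of }H^{(i)}\text{-hyperedges that overlap or fail the separation rule})\bigr)$ and unroll it to $F_t\ge(FRAC/2^{t+2})^{2^{t-1}}$.

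\emph{Soundness.} Fix any $k$-subset $W\subseteq V$. By the construction a hyperedge $F\subseteq W$ of $H^{(t)}$ is the vertex set of a size-$2^{t-1}$ matching of the induced graph $G[W]$, and distinct hyperedges give distinct matchings; since $(G,k)$ has integral optimum $OPT$, $G[W]$ has at most $OPT$ edges, so the number of such $F$ is at most $\binom{OPT}{2^{t-1}}\le OPT^{2^{t-1}}$. Thus the integral optimum of $(H^{(t)},k)$ is at most $OPT^{2^{t-1}}$, and combining with completeness the integrality gap of $r$ levels of SoS for $(H^{(t)},k)$ is at least $F_t/OPT^{2^{t-1}}\ge\bigl(FRAC/(2^{t+2}\,OPT)\bigr)^{2^{t-1}}=(\alpha(n)/2^{t+2})^{2^{t-1}}$. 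The hypothesis that $|E(G)|$ grows with $n$ is used to keep the construction non-degenerate (so $OPT\ge 2^{t-1}$ and $E(H^{(t)})$ is nonempty).

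The main obstacle is the completeness lower bound, in two places: first, identifying the right convention for which splits $F=e\cup e'$ to admit, so that each hyperedge is produced with small multiplicity yet enough pairs survive for the Cauchy--Schwarz step to retain a $FRAC^{2^{t-1}}$-sized main term; and second, bounding the ``bad'' terms (pairs of hyperedges that intersect, or that are not separated in the linear order), which one would control using monotonicity $\norm{\bm{V}_S}^2\le\norm{\bm{V}_{S'}}^2$ for $S'\subseteq S$ (itself a consequence of consistency and Cauchy--Schwarz) together with $\norm{\bm{V}_{\{v\}}}^2\le 1$, and then absorb into the factor $2^{t+2}$. The remaining ingredients --- feasibility, the matching count in soundness, and the unrolling of the recursion --- are routine once these losses are pinned down.
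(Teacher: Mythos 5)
There is a genuine gap in the completeness step, and it is exactly the part you flag as ``the main obstacle'': with your restrictive hyperedge definition it is not merely unfinished but likely unprovable. By admitting into $E(H^{(i+1)})$ only unions $A\cup B$ of $H^{(i)}$-hyperedges that are disjoint \emph{and separated} in the linear order, you can harvest from the Cauchy--Schwarz inequality $\sum_{A,B\in E(H^{(i)})}\langle\bm{V}_A,\bm{V}_B\rangle\ge F_i^2$ only the ``good'' pairs, and you must subtract the contribution of all overlapping and interleaved pairs. Every term $\langle\bm{V}_A,\bm{V}_B\rangle=\norm{\bm{V}_{A\cup B}}^2$ is nonnegative, and nothing prevents essentially all of the mass from sitting on bad pairs: if, say, $G$ is a perfect matching whose edges pairwise cross in the chosen order (vertices $[2N]$, edges $(i,N+i)$), then $E(H^{(2)})$ is empty while $F_1$ equals the full SoS objective, so no recursion of the shape $F_{i+1}\ge c_i^{-1}(F_i^2-\text{bad})$ can hold unconditionally. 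Since the reduction must work for whatever instance realizes the gap $\alpha(n)$, you cannot rule such behaviour out, and the tools you propose for the bad terms (monotonicity of $\norm{\bm{V}_S}^2$ under taking supersets, $\norm{\bm{V}_{\{v\}}}^2\le 1$) give absolute bounds, not bounds that are a small fraction of $F_i^2$. (A minor side issue: your ``unwinding'' should read that the \emph{alternate} consecutive pairs $(v_1,v_2),(v_3,v_4),\ldots$ are edges, not every consecutive pair, though your restatement as a non-crossing interval matching is the intended one.)

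The paper's proof sidesteps this entirely by making the hyperedge set permissive rather than restrictive: $E'=\{f_1\cup\cdots\cup f_\rho \mid f_i\in E\}$ with $\rho=2^{t-1}$, i.e.\ \emph{every} union of $\rho$ edges is a hyperedge (of arity at most $2^t$, and exactly $2^t$ for large $n$ by the unboundedness hypothesis). Then every tuple of edges yields a valid hyperedge, the single-shot bound $\sum_{(f_1,\ldots,f_\rho)\in E^\rho}\norm{\bm{V}_{f_1\cup\cdots\cup f_\rho}}^2\ge FRAC^\rho$ --- proved by precisely your Cauchy--Schwarz-through-$\bm{V}_\phi$ induction, doubling the tuple length at each step --- is distributed over hyperedges with multiplicity at most $(4\rho^2)^\rho$, and no subtraction of bad terms is ever needed; soundness is the same $OPT'\le OPT^\rho$ count you give (any $k$-set spanning $l$ edges of $G$ spans at most $l^\rho$ hyperedges). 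Your feasibility observation, the consistency identity $\norm{\bm{V}_{A\cup B}}^2=\langle\bm{V}_A,\bm{V}_B\rangle$, and the soundness argument are all correct and match the paper; the fix for the missing piece is to change the construction so that all pairs count (paying only a multiplicity factor), rather than to try to bound the excluded pairs.
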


\begin{proof}
Let $\rho = 2^{t - 1}$. Consider instances $I = (G, k)$ for Densest $k$-subgraph that demonstrate an integrality gap of $\alpha(n)$ for $r$ levels of the SoS Hierarchy. Let $G = (V, E)$ and here, we have $n = |V|$. Consider the elements of $E$ as sets of size $2$. We will construct an hypergraph $G' = (V, E')$ of arity $2\rho$ as follows. We set $E' = \{\displaystyle\cup_{i \le \rho} f_i \;|\; f_i \in E\}$. Note that the arity of $G'$ is at most $2\rho$ by construction. For sufficiently large $n$, since the number of edges is not bounded, we have that the arity of $G'$ is exactly $2\rho = 2^t$. We consider the instance $J = (G', k)$ on $n$ vertices.

Let $\bm{V}_S$ be the optimal SoS vectors for $I$ and let $FRAC, OPT$ be the optimum SoS relaxation value and actual optimum for $I$ respectively. So, $FRAC = \displaystyle\sum_{e \in E} \norm{\bm{V}_e}^2 \ge \alpha(n) OPT$.

We use the same SoS vectors for this new instance. Note that they are trivially a feasible solution. Let $FRAC', OPT'$ be the optimum level $r$ SoS relaxation value and actual optimum for $J$ respectively. First, observe that $OPT' \le OPT^{\rho}$. This is because, if we consider any $k$ vertices in $G'$, if the induced subgraph on these vertices of $G$ contains $l$ edges, then, by construction, the induced subgraph on these vertices of $G'$ contain at most $l^{\rho}$ edges. But we have $l \le OPT$ which implies that any $k$ vertices in $G'$ have at most $OPT^{\rho}$ edges and hence, $OPT' \le OPT^{\rho}$.

We will use the following claim which will be proved later.

\begin{claim}
For an integer $p \ge 0$, let $T = E^{2^p}$ be the set of ordered tuples of $2^p$ edges. Then, $\displaystyle\sum_{(f_1, \ldots, f_{2^p}) \in T} \norm{\bm{V}_{f_1 \cup \ldots \cup f_{2^p}}}^2 \ge FRAC^{2^p}$.
\end{claim}

Now, consider the set $T = E^{\rho}$. For each element $(f_1, \ldots, f_{\rho})$ of $T$, by construction, there is at least one hyperedge $F$ in $G'$ with $F = f_1\cup \ldots f_{\rho}$. Also, each element $F$ of $E'$ is the union of $\rho$ edges and so, can be written as $f_1 \cup \ldots \cup f_{\rho}$ for some $(f_1, \ldots, f_{\rho}) \in T$. Moreover, there are at most $(4\rho^2)^{\rho}$ such elements in $T$ for a fixed $F$. This is because each $f_i$ has at most $|F|(|F| - 1) \le (2\rho)(2\rho - 1) \le 4\rho^2$ choices. So, we have
\begin{align*}
FRAC' = \sum_{F \in E'} \norm{\bm{V}_F}^2 &\ge \frac{1}{((4\rho^2)^{\rho}} \times \displaystyle\sum_{(f_1, \ldots, f_{\rho}) \in T} \norm{\bm{V}_{f_1 \cup \ldots \cup f_{\rho}}}^2\\
&\ge \frac{FRAC^{\rho}}{4^{\rho}\rho^{2\rho}}
\end{align*}

So, we have that the integrality gap of $J$ is at least
\[\frac{FRAC'}{OPT'} \ge \frac{FRAC^{\rho}}{4^{\rho}\rho^{2\rho}OPT^{\rho}} \ge \left(\frac{\alpha(n)}{2^{t + 2}}\right)^{2^{t - 1}}\]
This completes the proof of the theorem.
\end{proof}

It remains to prove the claim.

\begin{proof}[Proof of Claim]
The proof will be by induction on $p$. When $p = 0$, we have $\displaystyle\sum_{f \in E}\norm{\bm{V}_f}^2 = FRAC$ by definition. Let $T' = E^{2^{p - 1}}$. Fix an integer $p \ge 1$. Assume $\displaystyle\sum_{(f_1, \ldots, f_{2^{p - 1}}) \in T'} \norm{\bm{V}_{f_1 \cup \ldots \cup f_{2^{p - 1}}}}^2 \ge FRAC^{2^{p - 1}}$ as the induction hypothesis and consider
\begin{align*}
\displaystyle\sum_{(f_1, \ldots, f_{2^p}) \in T} \norm{\bm{V}_{f_1 \cup \ldots \cup f_{2^p}}}^2 &= \displaystyle\sum_{(f_1, \ldots, f_{2^p}) \in T} \langle\bm{V}_{f_1 \cup \ldots \cup f_{2^p}}, \bm{V}_{f_1 \cup \ldots \cup f_{2^p}}\rangle\\
&= \displaystyle\sum_{(f_1, \ldots, f_{2^p}) \in T} \langle\bm{V}_{f_1 \cup \ldots \cup f_{2^{p - 1}}}, \bm{V}_{f_{2^{p - 1} + 1} \cup \ldots \cup f_{2^p}}\rangle\\
&= \langle\displaystyle\sum_{(f_1, \ldots, f_{2^{p - 1}}) \in T'} \bm{V}_{f_1 \cup \ldots \cup f_{2^{p - 1}}}, \displaystyle\sum_{(f_1, \ldots, f_{2^{p - 1}}) \in T'} \bm{V}_{f_1 \cup \ldots \cup f_{2^{p - 1}}}\rangle\\
&\ge \left(\frac{\langle\displaystyle\sum_{(f_1, \ldots, f_{2^{p - 1}}) \in T'} \bm{V}_{f_1 \cup \ldots \cup f_{2^{p - 1}}}, \bm{V}_{\phi}\rangle}{\norm{\bm{V}_{\phi}}^2}\right)^2\\
&= \langle\displaystyle\sum_{(f_1, \ldots, f_{2^{p - 1}}) \in T'} \bm{V}_{f_1 \cup \ldots \cup f_{2^{p - 1}}}, \bm{V}_{\phi}\rangle^2\\
&= \left(\displaystyle\sum_{(f_1, \ldots, f_{2^{p - 1}}) \in T'} \langle\bm{V}_{f_1 \cup \ldots \cup f_{2^{p - 1}}}, \bm{V}_{\phi}\rangle\right)^2\\
&= \left(\displaystyle\sum_{(f_1, \ldots, f_{2^{p - 1}}) \in T'} \langle\bm{V}_{f_1 \cup \ldots \cup f_{2^{p - 1}}}, \bm{V}_{f_1 \cup \ldots \cup f_{2^{p - 1}}}\rangle\right)^2\\
&= \left(\displaystyle\sum_{(f_1, \ldots, f_{2^{p - 1}}) \in T'} \norm{\bm{V}_{f_1 \cup \ldots \cup f_{2^{p - 1}}}}^2\right)^2\\
&\ge (FRAC^{2^{p - 1}})^2 = FRAC^{2^p}
\end{align*}
Here, the second and second last equalities follow from properties of SoS vectors; and the first inequality follows from Cauchy-Schwarz and we used the fact that $\norm{\bm{V}_{\phi}}^2 = 1$. This completes the proof of the claim.
\end{proof}

Note in particular that when $t$ is constant, we get an $\Omega(\alpha(n)^{2^{t - 1}})$ integrality gap for an instance with arity $2^t$.

Using the SoS hardness result for Densest $k$-subgraph described in the previous section and our theorem, we arrive at the following SoS hardness result for Densest $k$-subhypergraph for any arbitary arity $\rho \ge 2$ where we apply the theorem to construct hypergraphs with arity $2^{\lfloor \log \rho \rfloor}$.

\textbf{Corollary 3.13:} For any integer $\rho \ge 2$, $n^{\Omega(\epsilon)}$ levels of the SoS hierarchy has an integrality gap of at least $\Omega(n^{(2^{\lfloor \log \rho \rfloor} / 28)}) \ge \Omega(n^{\rho / 56})$ for Densest k-subhypergraph on $n$ vertices of arity $\rho$.

\subsection{Minimum p-Union}

An instance of Minimum $p$-Union is a positive integer $p$ and a collection of $m$ subsets $S_1, \ldots, S_m$ of an universe of $n$ elements. The objective is to choose exactly $p$ of these sets such that the size of their union is minimized. This problem was first studied by Chlamt\'{a}\v{c} et al.\cite{dksh} and the current best known approximation algorithm is an $O(m^{1/4})$ approximation by Chlamt\'{a}\v{c} et al.\cite{yury}.

This can be thought of as a variant of the Densest $k$-subgraph problem. The relation to Densest $k$-subgraph comes from an intermediate problem also known as the Smallest $m$-Edge Subgraph problem, where we are given a graph $G$ and an integer $m$, the objective is to choose exactly $m$ edges so that the number of vertices that are contained in these chosen edges is minimum. Intuitively, if the number of vertices in the final edge induced subgraph is small, then the subgraph should be dense. Indeed, we will exploit this intuition in our integrality gap construction. Smallest $m$-Edge Subgraph problem can be thought of as the restricted version of Minimum $p$-Union where each set has size $2$. Minimum $p$-Union can also be viewed as a variant of the Maximum $k$-coverage problem where we have the same input but the objective is to maximize the size of the union. This problem is completely understood in the sense that there is a $1 - 1/e$ approximation and Feige\cite{coverage} showed it is also tight.

This problem has an equivalent formulation in terms of bipartite graphs, known as the Small Set Bipartite Vertex Expansion (SSBVE) problem which can also be viewed as the bipartite version of the Small Set Expansion problem. In SSBVE, we are given an integer $l$ and a bipartite graph $G = (L, R, E)$ with $n$ vertices, with labelled left and right partitions $L$ and $R$. The objective is to choose exactly $l$ vertices from $L$ such that the size of the neighborhood of these $l$ vertices is minimized. The connection with Minimum $p$-Union is straightforward and comes by identifying the sets with $L$ and the universe with $R$. So, we can interchangeably work with either problem.

In the basic program for SSBVE, we have variables $x_u$ for every vertex $u$, where $x_u$ for $u \in L$ indicates whether $u$ is picked among the $l$ vertices and $x_v$ for $v \in R$ indicates whether any neighbor of $v$ is picked among the $l$ vertices. Then, $\displaystyle\sum_{u \in L} x_u = l$ since exactly $l$ vertices from $L$ have to be picked. We set $x_u \le x_v$ for all edges $(u, v)$ with $u \in L, v \in R$ so that whenever $u$ is picked, $x_v$ for all neighbors $v$ of $u$ are assigned $1$. With these constraints, it is clear that if we try to minimize $\displaystyle\sum_{v \in R} x_v$, it will also be the size of the neighborhood. The SoS relaxation for $r$ levels for SSBVE is as follows:
\begin{align*}
\text{Minimize}\qquad&\displaystyle\sum_{v \in R}\norm{\bm{V}_{\{v\}}}^2&&\\
\text{subject to}\qquad&\displaystyle\sum_{u \in L}\langle \bm{V}_{\{u\}}, \bm{V}_S \rangle = l\norm{\bm{V}_S}^2 &\forall& S \in [n]_{\le r}\\
&\langle \bm{V}_{\{u\}}, \bm{V}_S \rangle \le \langle \bm{V}_{\{v\}}, \bm{V}_S \rangle&\forall& (u, v) \in E, u\in L, v \in R, S \in [n]_{\le r}\\
&\langle \bm{V}_{S_1}, \bm{V}_{S_2}\rangle = \langle \bm{V}_{S_3}, \bm{V}_{S_4}\rangle &\forall& S_1 \cup S_2 = S_3 \cup S_4\text{ and }S_i \in [n]_{\le r}\\
&\langle \bm{V}_{S_1}, \bm{V}_{S_2}\rangle \ge 0 &\forall& S_1, S_2 \in [n]_{\le r}\\
&\norm{\bm{V}_{\phi}}^2 = 1&&
\end{align*}

Chlamt\'{a}\v{c} et al.\cite{yury} showed an integrality gap of $\tilde{\Omega}(\min(l, n / l))$ for a basic SDP relaxation of this problem. We obtain integrality gaps for the general SoS relaxation for SSBVE.

\begin{theorem}
Fix $0 < \rho < 1$. For all sufficiently large $n, q$ and integer $3 \le D \le 10$, there exist instances of SSBVE on $N = O(nq^{3D - 2 + \rho})$ vertices that demonstrate an integrality gap of $\Omega(q^{1/2 - o(1)})$ for the level $\Omega(n / (q^{5 + 6 / (D - 2) + 2\rho/(D - 2)}))$ SoS relaxation.
\end{theorem}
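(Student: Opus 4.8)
The plan is to obtain this gap by a reduction from the Densest $k$-subgraph SoS integrality gap of the previous subsection, read as a set system. Recall that hard instance is $(G,k)$ with $G=H_{I,\Delta}$ built from a random Max $K$-CSP instance $I$ (with $K=q-1$, $\Delta=100q^{D+\rho}/K$, $k=2m$, and $\mathcal{C}$ a $(D-1)$-wise uniform code of dimension $D-1$): by Lemma~\ref{completeness}, fed a perfectly complete CSP solution, it admits a level-$R$ SoS solution of value $\Delta mK$, while by Lemma~\ref{pas} its true optimum spans at most $\theta:=4000\Delta mK\ln q/(q\rho)$ edges. I would take as the SSBVE instance $J$ the incidence bipartite graph of $G$: left part $L=E(G)$, each edge being a $2$-element ``set''; right part $R=V(G)$, the ``universe''; and $l=\Delta mK$ (so $J$ is the ``smallest $l$-edge subgraph'' special case of SSBVE, which only strengthens the conclusion). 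Then $J$ has $N=|E(G)|+|V(G)|=O(nq^{3D-2+O(\rho)})$ vertices, matching the stated bound after rescaling the free parameter $\rho$.

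\textbf{Completeness.} I would transport the Densest $k$-subgraph vectors $\{\bm{V}_S\}$ to $J$ along the union-homomorphism $\psi$ sending $S\subseteq L\cup R$ to $\psi(S)=\big(\bigcup_{e\in S\cap E(G)}e\big)\cup(S\cap V(G))\subseteq V(G)$, by setting $\bm{V}^{J}_S:=\bm{V}_{\psi(S)}$. Since $\psi(S\cup S')=\psi(S)\cup\psi(S')$, the consistency and nonnegativity constraints and $\norm{\bm{V}^{J}_\phi}^2=1$ are inherited verbatim, and the (routine) monotonicity $\norm{\bm{V}_A}^2\ge\norm{\bm{V}_{A\cup\{v\}}}^2$ of the construction gives the covering constraints $\langle\bm{V}^{J}_{\{e\}},\bm{V}^{J}_S\rangle\le\langle\bm{V}^{J}_{\{v\}},\bm{V}^{J}_S\rangle$ for $v\in e$. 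The one substantive check is the set-cardinality constraint $\sum_{e\in E(G)}\langle\bm{V}_e,\bm{V}_W\rangle=l\norm{\bm{V}_W}^2$ for every $W$: unwinding the construction behind Lemma~\ref{completeness}, each satisfying pair $(C_i,\alpha)$ indexes exactly $K\Delta$ edges $e$ of $H_{I,\Delta}$ with $\bm{V}_e=\bm{W}_{(C_i,\alpha)}$, and perfect completeness forces $\bm{W}_{(C_i,\alpha)}=\bm{0}$ for unsatisfying $\alpha$ and $\sum_{\alpha:C_i(\alpha)=1}\bm{W}_{(C_i,\alpha)}=\bm{V}_\phi$, so the left side collapses to $K\Delta m\langle\bm{V}_\phi,\bm{V}_W\rangle=\Delta mK\norm{\bm{V}_W}^2$. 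Accounting for levels, a level-$r$ SSBVE solution needs a level-$2r$ Densest $k$-subgraph solution (as $\psi$ at most doubles set sizes), hence a level-$2rK$ perfectly complete CSP solution; the latter exists for $r=O(\eta n/K)$, which on plugging in $\eta$, $\Delta$, $K$ equals $O(n/q^{5+6/(D-2)+2\rho/(D-2)})$, the stated level. The objective value of the transported solution is $\sum_{v\in V(G)}\norm{\bm{V}_{\{v\}}}^2=k=2m$.

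\textbf{Soundness.} I would argue directly on $G$, writing $e(W)$ for the number of edges of $G$ with both endpoints in $W$. If $F\subseteq E(G)$, $|F|=l$, is any feasible solution of $J$ and $W=\bigcup F\subseteq V(G)$ is its union, then every edge of $F$ lies inside $W$, so $e(W)\ge l$. If $|W|\le 2m$ this contradicts Lemma~\ref{pas} since $l>\theta$ for $q$ large; otherwise let $s=|W|>2m$ and average over a uniformly random $2m$-subset $W_0\subseteq W$: then $\mathbb{E}[e(W_0)]=e(W)\cdot\tfrac{2m(2m-1)}{s(s-1)}$, while $e(W_0)\le\theta$ for every realization (Lemma~\ref{pas}), so $e(W)\le\theta s^2/(2m^2)$. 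With $e(W)\ge l$ this forces $s\ge m\sqrt{2l/\theta}=\Omega(m\sqrt{q/\ln q})$ (treating $\rho$ as a constant). Hence the true optimum of $J$ is $\Omega(m\sqrt{q/\ln q})$ while the relaxation has a feasible solution of value $2m$, so the integrality gap is $\Omega(\sqrt{q/\ln q})=\Omega(q^{1/2-o(1)})$ at the stated level. (An instance $I$ simultaneously satisfying the Plausibility assumption and the hypotheses of Lemma~\ref{pas} exists by a union bound, since each holds with probability $1-o(1)$.)

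The main obstacle is the completeness bookkeeping: verifying that transporting the Densest $k$-subgraph solution along $\psi$ really yields a \emph{feasible} SSBVE solution --- most delicately the exact count $l=\Delta mK$ of internal edges under \emph{every} conditioning, which is what makes the set-cardinality constraint hold and which relies on the perfect completeness built into Lemma~\ref{completeness} --- together with carrying the level loss cleanly through the two composed reductions. The soundness, by contrast, is short once Lemma~\ref{pas} is used as a black box, and it is exactly the $s^2$-rather-than-$s$ growth coming from averaging over $2m$-subsets that yields a $q^{1/2}$ gap here in place of the $q$ gap for Densest $k$-subgraph.
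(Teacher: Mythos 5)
Your proposal is correct and follows essentially the same route as the paper: the same subdivided/incidence graph of $H_{I,\Delta}$ with $l=\Delta mK$, the same transport of the Densest $k$-subgraph vectors along the edge-to-endpoints map (the paper's $\mathcal{B}(S)$ is exactly your $\psi$), and the same key identity $\sum_{\alpha}\bm{W}_{(C_i,\alpha)}=\bm{W}_{(\phi,\phi)}$ for the cardinality constraint. The only cosmetic difference is in soundness, where you average over random $2m$-subsets of the neighborhood while the paper partitions it into blocks of size $m$ and sums over pairs of blocks; both yield the same quadratic scaling and hence the same $\Omega(q^{1/2-o(1)})$ gap.
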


\begin{proof}
We will use a modification of the integrality gap instance for Densest $k$-subgraph obtained from random CSPs as was illustrated earlier.

Take a random instance $I$ of Max $K$-CSP with $m$ constraints on variables $\{x_1, \ldots, x_n\}$, alphabet $[q]$ and with optimum value of the level $r = O(\eta n)$ SoS relaxation being $m$ (perfect completeness). The parameters are as before, $K = q - 1, \Delta = 100q^{D + \rho}/K, \eta = 1/(10^8(\Delta K^D)^{2 / (D - 2)}$ and $\mathcal{C} \subseteq \mathbb{F}_q^K$ has dimension $D - 1$ and is $(D - 1)$-wise uniform.

Consider the label extended factor graph $G = H_{I, \Delta}$ and construct the instance $J = (H, l)$ of SSBVE as follows. $H$ is the bipartite graph obtained from $G$ by subdividing the edges of $G$. That is, $H = (L, R, E')$ where $L$ corresponds to the edges of $G$; $R$ corresponds to the vertices of $G$; and $E'$ contains the edge $(e, u)$ for $e \in L, u \in R$ if and only if the edge $e$ contains $u$ in $G$. Finally, set $l = \Delta mK$. We will argue that $J$ exhibits the desired integrality gap.

Suppose $G = (V, E)$ with $V = [n]$. From lemma \ref{completeness}, we have SoS vectors $\bm{V}_S$ for subsets $S$ of $V$ of size at most $r' = r / K$ that satisfy the following properties.
\begin{itemize}
	\item $\displaystyle\sum_{u \in V}\langle \bm{V}_{\{u\}}, \bm{V}_S \rangle = 2m\norm{\bm{V}_S}^2$ for all $S \in [n]_{\le r'}$
	\item $\langle \bm{V}_{S_1}, \bm{V}_{S_2}\rangle = \langle \bm{V}_{S_3}, \bm{V}_{S_4}\rangle$ for all $S_1 \cup S_2 = S_3 \cup S_4$ and $S_i \in [n]_{\le r'}$
	\item $\langle \bm{V}_{S_1}, \bm{V}_{S_2}\rangle \ge 0$ for all $S_1, S_2 \in [n]_{\le r'}$
	\item $\norm{\bm{V}_{\phi}}^2 = 1$
\end{itemize}

It is important that the vectors $\bm{V}_S$ are the same vectors as constructed in the proof of lemma \ref{completeness}. Remember that they are constructed from $\bm{W}_{(S, \alpha)}$, the SoS vectors for the level $r$ relaxation of the Max $K$-CSP instance we are reducing from, for $\alpha \in [q]^S, |S| \le r$.

We will describe level $(r' / 2 - 4)$ SoS vectors for SSBVE, $\bm{U}_S$, as follows. Consider any subset $S$ of $L \cup R$ with at most $(r'/2 - 4)$ vertices. For $T \subseteq L$, let $\mathcal{N}(T)$ denote the set of neighbors of $T$. Define $\mathcal{B}(S) = (R \cap S) \cup \mathcal{N}(L \cap S)$. Note that $\mathcal{B}(S) \subseteq R = V$. Define $\bm{U}_S = \bm{V}_{\mathcal{B}(S)}$. Note that this is well defined since $|S| \le r'/2 -  4 \Longrightarrow |\mathcal{B}(S)| \le r' - 2$ which follows since $|N(\{u\})| = 2$ for any $u \in L$.

We first prove that these vectors $\bm{U}_S$ form a feasible solution. For any $S_1, S_2 \subseteq L \cup R$ with $|S_1|, |S_2| \le r'/2 - 4$, we have $\langle \bm{U}_{S_1}, \bm{U}_{S_2}\rangle = \langle \bm{V}_{\mathcal{B}(S_1)}, \bm{V}_{\mathcal{B}(S_2)}\rangle \ge 0$. Consider $S_1, S_2, S_3, S_4 \subseteq L \cup R$ with $S_1 \cup S_2 = S_3 \cup S_4$ and $|S_1|, |S_2|, |S_3|, |S_4| \le r'/2 - 4$. If $S_1 \cup S_2 = S_3 \cup S_4 = L' \cup R'$ for $L' \subseteq L, R' \subseteq R$, then $\mathcal{B}(S_1) \cup \mathcal{B}(S_2) = R' \cup \mathcal{N}(L') = \mathcal{B}(S_3) \cup \mathcal{B}(S_4)$. So, we get $\langle \bm{U}_{S_1}, \bm{U}_{S_2}\rangle = \langle \bm{V}_{\mathcal{B}(S_1)}, \bm{V}_{\mathcal{B}(S_2)}\rangle = \langle \bm{V}_{\mathcal{B}(S_3)}, \bm{V}_{\mathcal{B}(S_4)}\rangle = \langle \bm{U}_{S_3}, \bm{U}_{S_4}\rangle$. We also have $\norm{\bm{U}_{\phi}}^2 = \norm{\bm{V}_{\phi}}^2 = 1$.

Fix any subset $S \subseteq L \cup R$ with $|S| \le r' / 2 - 4$. For any edge $(u, v)$ in $H$ with $u \in L, v \in R$, suppose $(u, w)$ with $w \neq v$ is the other unique edge in $H$, then we have $\langle \bm{U}_{\{u\}}, \bm{U}_S\rangle = \langle \bm{V}_{\{v, w\}}, \bm{V}_{\mathcal{B}(S)}\rangle = \norm{\bm{V}_{\{v, w\} \cup \mathcal{B}(S)}}^2$ and similarly, $\langle \bm{U}_{\{v\}}, \bm{U}_S\rangle = \norm{\bm{V}_{\{v\} \cup \mathcal{B}(S)}}^2$. Here, note that $|\{v, w\} \cup \mathcal{B}(S)|, |\{v\} \cup \mathcal{B}(S)| \le r'$. Using the inequality $\norm{\bm{V}_{S_2}} \le \norm{\bm{V}_{S_1}}$ for $S_1 \subseteq S_2 \subseteq [n]_{\le r'}$ (Indeed, $\norm{\bm{V}_{S_2}}^2 = \langle\bm{V}_{S_2}, \bm{V}_{S_1}\rangle \le \norm{\bm{V}_{S_2}}\cdot\norm{\bm{V}_{S_1}}$ by the Cauchy Schwarz inequality), we get $\langle \bm{U}_{\{u\}}, \bm{U}_S\rangle = \norm{\bm{V}_{\{v, w\} \cup \mathcal{B}(S)}}^2 \le \norm{\bm{V}_{\{v\} \cup \mathcal{B}(S)}}^2 = \langle \bm{U}_{\{v\}}, \bm{U}_S\rangle$ for all edges $(u, v) \in H$.

Finally, we need to show that $\displaystyle\sum_{u \in L}\langle \bm{U}_{\{u\}}, \bm{U}_S \rangle = l\norm{\bm{U}_S}^2$. We have $\displaystyle\sum_{u \in L}\langle \bm{U}_{\{u\}}, \bm{U}_S \rangle = \displaystyle\sum_{u \in L}\langle \bm{V}_{\mathcal{B}(\{u\})}, \bm{V}_{\mathcal{B}(S)} \rangle = \displaystyle\sum_{(v, w) \in E}\langle \bm{V}_{\{v, w\}}, \bm{V}_{\mathcal{B}(S)} \rangle$. Note that each edge $(v, w) \in E$ is between vertices of the form $(C_i, \alpha)$ where $i \le m, \alpha \in [q]^K, C_i(\alpha) = 1$, and $(x_j, \alpha_{x_j}, j')$ where $j \le n, \alpha_{x_j} \in [q], j' \in [\Delta]$ such that $x_j \in C_i, \alpha(x_j) = \alpha_{x_j}$. Then, by construction, $\bm{V}_{\{v, w\}} = \bm{W}_{(C_i, \alpha)}$ and this term appears $K\Delta$ times for each $(C_i, \alpha)$. Also, we have $\bm{U}_S = \bm{V}_{\mathcal{B}(S)} = \bm{W}_{(T, \beta)}$ for some $T, \beta$ with $\beta \in [q]^T, |T| \le r$. So, we get $\displaystyle\sum_{(v, w) \in E}\langle \bm{V}_{\{v, w\}}, \bm{V}_{\mathcal{B}(S)} \rangle = K\Delta\displaystyle\sum_{(C_i, \alpha) \in V} \langle \bm{W}_{(C_i, \alpha)}, \bm{W}_{(T, \beta)} \rangle$. Now, we use the fact that, for any $i \le m$, if $\mathcal{A}_i$ is the set of satisfying partial assignments $\alpha\in [q]^{C_i}$ with $C_i(\alpha) = 1$, that is $\mathcal{A}_i = \{\alpha \;|\; (C_i, \alpha) \in G\}$, then $\displaystyle \sum_{\alpha \in \mathcal{A}_i} \bm{W}_{(C_i, \alpha)} = \bm{W}_{(\phi, \phi)}$ which is true because
\begin{align*}
\norm{\displaystyle \sum_{\alpha \in \mathcal{A}_i} \bm{W}_{(C_i, \alpha)} - \bm{W}_{(\phi, \phi)}}^2 &= \langle \displaystyle \sum_{\alpha \in \mathcal{A}_i} \bm{W}_{(C_i, \alpha)} - \bm{W}_{(\phi, \phi)}, \displaystyle \sum_{\alpha \in \mathcal{A}_i} \bm{W}_{(C_i, \alpha)} - \bm{W}_{(\phi, \phi)}\rangle\\
&= \displaystyle \sum_{\alpha_1 \in \mathcal{A}_i}\sum_{\alpha_2 \in \mathcal{A}_i} \langle\bm{W}_{(C_i, \alpha_1)}, \bm{W}_{(C_i, \alpha_2)}\rangle - 2\displaystyle\sum_{\alpha \in \mathcal{A}_i} \langle \bm{W}_{(C_i, \alpha)}, \bm{W}_{(\phi, \phi)}\rangle + \norm{\bm{W}_{(\phi, \phi)}}^2\\
&= \displaystyle \sum_{\alpha \in \mathcal{A}_i} \langle\bm{W}_{(C_i, \alpha)}, \bm{W}_{(C_i, \alpha)}\rangle - 2\displaystyle\sum_{\alpha \in \mathcal{A}_i} \norm{\bm{W}_{(C_i, \alpha)}}^2 + 1\\
&= 1 - \displaystyle\sum_{\alpha \in \mathcal{A}_i} \norm{\bm{W}_{(C_i, \alpha)}}^2 = 0
\end{align*}
Here, we used the facts that $\langle\bm{W}_{(C_i, \alpha_1)}, \bm{W}_{(C_i, \alpha_2)}\rangle = 0$ for $\alpha_1 \neq \alpha_2$, $\langle \bm{W}_{(C_i, \alpha)}, \bm{W}_{(\phi, \phi)}\rangle = \norm{\bm{W}_{(C_i, \alpha)}}^2$ and since we have a perfect solution, $\displaystyle\sum_{\alpha \in \mathcal{A}_i} \norm{\bm{W}_{(C_i, \alpha)}}^2 = 1$ for all $i \le m$. So, we get
\begin{align*}
\displaystyle\sum_{u \in L}\langle \bm{U}_{\{u\}}, \bm{U}_S \rangle &= \displaystyle\sum_{(v, w) \in E}\langle \bm{V}_{\{v, w\}}, \bm{V}_{\mathcal{B}(S)} \rangle\\
&= K\Delta\displaystyle\sum_{(C_i, \alpha) \in V} \langle \bm{W}_{(C_i, \alpha)}, \bm{W}_{(T, \beta)} \rangle\\
&= K\Delta\displaystyle\sum_{i = 1}^m \langle \bm{W}_{(\phi, \phi)}, \bm{W}_{(T, \beta)} \rangle\\
&= \Delta mK \norm{\bm{W}_{(T, \beta)}}^2\\
&= l\norm{\bm{W}_{(T, \beta)}}^2 = l\norm{\bm{U}_S}^2
\end{align*}
as required.

So, we have shown that the vectors $\bm{U}_S$ form a feasible solution for the level $r'/2 - 4 = \Omega(r/K)$  SoS relaxation. The objective value of this solution is $FRAC' = \displaystyle\sum_{v \in R}\norm{\bm{U}_{\{v\}}}^2 = \displaystyle\sum_{v \in V}\norm{\bm{V}_{\{v\}}}^2 = 2m$.

Let $OPT'$ be the value of the actual optimum solution for $J$. The following claim guarantees soundness of our instance.

\begin{claim}
Fix a constant $0 < \rho < 1$. If $q \ge 10000/\rho, |\mathcal{C}| \le q^{10}$, then $OPT' \ge m\sqrt{q\rho} / (80\sqrt{\ln q})$.
\end{claim}

So, we get an integrality gap of at least $FRAC' / OPT' = \sqrt{q\rho}/(160\sqrt{\ln q}) = \Omega(q^{1/2 - o(1)})$ for the instance $J$ with $N = m|\mathcal{C}|K\Delta + m|\mathcal{C}| + nq\Delta = O(nq^{3D - 2 + 2\rho})$ vertices where the number of levels of the SoS relaxation is \[\Omega\left(\frac{r}{K}\right) = \Omega\left(\frac{n}{K(\Delta K^D)^{2/(D-2)}}\right) = \Omega\left(\frac{n}{q^{5 + 6 / (D - 2) + 2\rho/(D - 2)}}\right)\]
which proves the theorem.
\end{proof}

It remains to prove the claim.

\begin{proof}[Proof of Claim:]
Assume for the sake of contradiction that there exists a set of $l = \Delta mK$ vertices in $L$ that has a neighborhood of size $m' < m\sqrt{q\rho} / (80\sqrt{\ln q})$. Partition the set of $m'$ vertices arbitrarily into $m' / m$ subsets of size $m$, denoted $R_1, \ldots, R_{m' / m}$. The neighbors of any vertex $u$ among the chosen $l$ vertices of $L$ have their endpoints in $R_i, R_j$ for some $1 \le i \le j \le m'/m$, not necessarily distinct. So, an upper bound on $l$ is $\displaystyle\sum_{1 \le i \le j \le m'/m} E(R_i, R_j)$ where $E(R_i, R_j)$ is the number of edges (think of a pre-fixed edge orientation to avoid overcounting) with their endpoints being in $R_i, R_j$ respectively. But note that $|R_i \cup R_j| \le 2m$ and so, by Lemma \ref{pas}, we have that $|E(R_i, R_j)| \le 4000\Delta mK \ln q / (q\rho)$ for all $i, j$. Therefore, we get \[l \le \displaystyle\sum_{1 \le i \le j \le m'/m}\frac{4000\Delta mK \ln q}{q\rho} \le \left(\frac{m'}{m}\right)^2\frac{4000\Delta mK \ln q}{q\rho} < \Delta mK\]
which is a contradiction.
\end{proof}

\begin{corollary}\label{union}
For any $0 < \epsilon < 1/18$, there exists an instance of SSBVE with $N$ vertices, or equivalently, an instance of Minimum $p$-Union with $O(N)$ sets and $O(N)$ elements in the universe, that demonstrates an integrality gap of $\Omega(N^{1/18 - \epsilon})$ for the level $N^{\Omega(\epsilon)}$ SoS relaxation.
\end{corollary}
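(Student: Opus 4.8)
The plan is to read the corollary off the theorem above by instantiating its free parameters, mirroring the one-line derivation of the corresponding statement for Densest $k$-subgraph. Concretely, I would fix a small integer $D$ (taking $D \in \{3,4\}$, the choice that minimizes the combined exponent of $q$ appearing across the $N$-bound and the level bound), set $\rho = \epsilon/1000$ so that $\rho$ only contributes lower-order terms to every exponent, and let $q = N^{\gamma}$ for a value $\gamma$ chosen so that the guaranteed gap $\Omega(q^{1/2 - o(1)}) = \Omega(N^{\gamma(1/2 - o(1))})$ dominates $N^{1/18 - \epsilon}$; the slack between the two absorbs the $q^{o(1)} = N^{o(1)}$ loss coming from the $\sqrt{\ln q}$ factor in the soundness bound.

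Once the parameters are fixed, what remains is pure exponent-tracking. One writes $n = \Theta(N^{\,1 - \gamma(3D - 2 + \rho)})$ from $N = \Theta(n q^{3D - 2 + \rho})$, substitutes this into the level count $\Omega\!\left(n / q^{\,5 + 6/(D-2) + 2\rho/(D-2)}\right)$ to obtain a bound of the form $\Omega(N^{\,1 - \gamma E})$ with $E = 3D - 2 + \rho + 5 + 6/(D-2) + 2\rho/(D-2)$ the total $q$-exponent, and then verifies two things: (i) $\gamma(1/2 - o(1)) \ge 1/18 - \epsilon$ for all sufficiently large $N$, which follows from the choice of $\gamma$; and (ii) $1 - \gamma E = \Omega(\epsilon) > 0$, so that the level count is a genuinely growing power $N^{\Omega(\epsilon)}$ rather than a constant. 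Here (ii) is what pins $\gamma$ down from above, and it is the only place the hypothesis $0 < \epsilon < 1/18$ enters. The passage to Minimum $p$-Union is then immediate and costs nothing: identifying the left vertices of the SSBVE instance with the $p = l$ chosen sets and the right vertices with the ground set turns an $N$-vertex SSBVE instance into an instance with $O(N)$ sets over $O(N)$ elements, with the optimum value and the SoS objective unchanged, hence with the same integrality gap.

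The step I expect to be the main obstacle is reconciling (i) and (ii) simultaneously: because the theorem only delivers a gap of order $q^{1/2}$ (rather than $q$), the gap exponent in $N$ is only half of $\gamma$, so $\gamma$ is squeezed from below (to keep the gap at least $N^{1/18 - \epsilon}$) and from above (to keep $1 - \gamma E$ positive). Verifying that this window is nonempty for the whole stated range of $\epsilon$ — equivalently, that with the right choice of $D$ the combined exponent $E$ is small enough — is the actual content of the argument; everything else is routine substitution, together with invoking the completeness ($\bm{U}_S$) construction and the soundness claim bounding $OPT'$ that the theorem already supplies.
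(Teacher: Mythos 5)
Your route is the same as the paper's: the paper's entire proof is the one--line instantiation ``$D=3$, $q=N^{1/18-\epsilon/2}$, $\rho=\epsilon/1000$'' of the preceding theorem, and your plan is that instantiation plus the exponent bookkeeping. The translation to Minimum $p$-Union is indeed free, as you say. The problem is the step you yourself flag as ``the actual content of the argument'' and then leave unverified: whether the window for $\gamma$ is nonempty. It is not. Carrying out your own bookkeeping: with $N=\Theta(nq^{3D-2+2\rho})$ and level $\Omega\bigl(n/q^{(5D-4)/(D-2)+2\rho/(D-2)}\bigr)$, the combined exponent is $E=18+4\rho$ at $D=3$ and $E=18+3\rho$ at $D=4$ (and larger for $D\ge 5$), so condition (ii) forces $\gamma<1/18$. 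But the theorem only supplies a gap of $\Omega(q^{1/2-o(1)})$, so condition (i) forces $\gamma(1/2-o(1))\ge 1/18-\epsilon$, i.e.\ $\gamma\ge 1/9-2\epsilon-o(1)$, which exceeds $1/18$ for every $\epsilon<1/36$. The two constraints are therefore incompatible throughout most of the stated range $0<\epsilon<1/18$, and your proposal cannot be completed to a proof of the corollary as written.

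What the argument does yield is a gap of $\Omega(N^{1/36-\epsilon})$ at level $N^{\Omega(\epsilon)}$ for $0<\epsilon<1/36$ (e.g.\ take $q=N^{1/18-\epsilon}$, so the level is $N^{1-(1/18-\epsilon)(18+O(\rho))}=N^{\Omega(\epsilon)}$ and the gap is $q^{1/2-o(1)}=N^{1/36-\epsilon/2-o(1)}$). Note that the paper's own choice $q=N^{1/18-\epsilon/2}$ satisfies the level constraint but gives gap $q^{1/2-o(1)}=N^{1/36-\epsilon/4-o(1)}$, not $N^{1/18-\epsilon}$; the exponent $1/18$ in the corollary appears to come from treating the $q^{1/2}$ gap as if it were $q$ (as it is in the Densest $k$-subgraph corollary, where that template does work). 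So the obstacle you identified is real and fatal to the stated constant: either the corollary's exponent must be corrected to $1/36$, or one needs a strengthening of the soundness claim (a lower bound on $OPT'$ of order $mq^{1-o(1)}$ rather than $m\sqrt{q\rho}/(80\sqrt{\ln q})$) that the theorem does not provide.
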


\begin{proof}
The corollary follows from the above theorem by setting $D = 3, q = N^{1/18 - \epsilon / 2}$ and $\rho = \epsilon / 1000$.
\end{proof}

\section{Pseudocalibration}

Barak et al.\cite{pseudo} developed pseudocalibration, a heuristic to construct integrality gaps for SoS relaxations in a structured manner. In various works that have appeared on SoS lower bounds since this manuscript was first published, pseudocalibration has become a fundamental technique in this field, see e.g. \cite{ghosh2020sum, mohanty2020lifting, potechin2020machinery, sparsesos, rajendran2022nonlinear}. We will describe the heuristic and show its applications to construct integrality gaps for Planted Clique and Max K-CSP.

To explain it, we first need the notion of pseudoexpectation, which presents a dual view of the SoS hierarchy that will give us more insight. This view will be very useful for constructing integrality gaps.

\subsection{Pseudoexpectations}

Let $P^{\le r}[x_1, \ldots, x_n]$ be the set of polynomials of degree at most $r$ in $\mathbb{R}[x_1, \ldots, x_n]$. A degree $2r$ pseudoexpectation operator $\tilde{E}$ is a function from $P^{\le 2r}[x_1, \ldots, x_n]$ to $\mathbb{R}$ that satisfies the following conditions.
\begin{itemize}
	\item $\tilde{E}[1] = 1$
	\item $\tilde{E}$ is linear, that is, for any two polynomials $p, q$ of degree at most $2r$, we have $\tilde{E}(\alpha p + \beta q) = \alpha\tilde{E}(p) + \beta\tilde{E}(q)$ for all $\alpha, \beta \in \mathbb{R}$.
	\item For every polynomial $p$ of degree at most $r$, $\tilde{E}[p^2] \ge 0$
\end{itemize}

We will now show that the existence of SoS vectors with some desired objective value is equivalent, up to a constant factor in the number of levels, to the existence of a pseudoexpectation operator with the same objective value. We will show this for a slightly restricted system where we do not allow inequalities but it holds in general even if we have inequalities. This duality allows us to work with pseudoexpectation operators instead of SoS vectors to construct integrality gaps.

Consider the problem $\Gamma$ of maximizing a polynomial $p(x_1, \ldots, x_n)$ over boolean variables $x_1, \ldots, x_n \in \{0, 1\}$ subject to $q_i(x_1, \ldots, x_n) = 0$ for $i = 1, 2, \ldots, m$. Since $x_i$ are boolean, assume without loss of generality that $p, q_i$ are multilinear. For all $T \subseteq [n]$, denote $\displaystyle\prod_{i \in T} x_i$ by $\bm{x}_T$ and for any multilinear polynomial $h$, for all $T \subseteq [n]$, denote the corresponding coefficient of $h$ by $h_T$, that is, $h = \displaystyle\sum_{T \subseteq [n]} h_T \bm{x}_T$. Suppose $p, q_1, \ldots, q_m$ have degree at most $r$, then $p = \displaystyle\sum_{T \in [n]_{\le r}} p_T \bm{x}_T$ and $q_i = \displaystyle\sum_{T \in [n]_{\le r}} (q_i)_T \bm{x}_T$. The SoS relaxation for $r$ levels, which we denote by $\mathcal{P}_{r}$, is the following program:
\begin{align*}
\text{Maximize}\qquad&\displaystyle\sum_{T \in [n]_{\le r}}p_T\norm{\bm{V}_T}^2&&\\
\text{subject to}\qquad&\displaystyle\sum_{T \in [n]_{\le r}}(q_i)_T\langle \bm{V}_T, \bm{V}_S\rangle = 0 & \forall& S \in [n]_{\le r}, i = 1, \ldots, m\\
&\langle \bm{V}_{S_1}, \bm{V}_{S_2}\rangle = \langle \bm{V}_{S_3}, \bm{V}_{S_4}\rangle &\forall& S_1 \cup S_2 = S_3 \cup S_4\text{ and }S_i \in [n]_{\le r}\\
&\langle \bm{V}_{S_1}, \bm{V}_{S_2}\rangle \ge 0 &\forall& S_1, S_2 \in [n]_{\le r}\\
&\norm{\bm{V}_{\phi}}^2 = 1&&
\end{align*}

Now, consider the following program which optimizes over degree $2r$ pseudoexpectation operators $\tilde{E}$, which we denote by $\mathcal{Q}_{2r}$: Let $H_i = \{h(x_1, \ldots, x_n) \;|\; q_ih \in P^{\le 2r}[x_1, \ldots, x_n]\}$.
\begin{align*}
\text{Maximize}~~&\tilde{E}[p(x_1, \ldots, x_n)]&&\\
\text{subject to}~~&\tilde{E}[q_i(x_1, \ldots, x_n)h(x_1, \ldots, x_n)] = 0\qquad \forall h \in H_i, i = 1, 2, \ldots, m&\\
&\tilde{E}[(x_i^2 - x_i)h(x_1, \ldots, x_n)] = 0 \qquad\qquad~\forall h \in P^{\le 2r - 2}[x_1, \ldots, x_n], i = 1, 2, \ldots, n&\\
&\tilde{E}\text{ is a degree }2r\text{ pseudoexpectation operator}&
\end{align*}

Here, we enforce $\tilde{E}[q_ih] = 0$ for all polynomials $h$ such that $\tilde{E}[q_ih]$ is defined and also enforce $\tilde{E}[(x_i^2 - x_i)h] = 0$ for all $h$ such that $\tilde{E}[(x_i^2 - x_i)h]$ is defined. And under these constraints, we try to optimize $\tilde{E}[p]$.

\begin{theorem}
For $\Gamma$, if $\mathcal{P}_{2r}$ has a feasible solution of objective value $FRAC$, then there exists a feasible solution for $\mathcal{Q}_{2r}$ with objective value $FRAC$.
\end{theorem}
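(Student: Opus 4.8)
The plan is to build the pseudoexpectation operator directly from the optimal SoS vectors. Let $\{\bm{V}_S\}_{S \in [n]_{\le 2r}}$ be a feasible solution of $\mathcal{P}_{2r}$ with objective value $FRAC$. For a monomial $\bm{x}^\alpha = \prod_i x_i^{\alpha_i}$ of total degree at most $2r$, let $\mathrm{supp}(\alpha) = \{i : \alpha_i \ge 1\}$; since $|\mathrm{supp}(\alpha)| \le \deg \bm{x}^\alpha \le 2r$, the vector $\bm{V}_{\mathrm{supp}(\alpha)}$ is available, and I would set $\tilde{E}[\bm{x}^\alpha] := \norm{\bm{V}_{\mathrm{supp}(\alpha)}}^2$ and extend $\tilde{E}$ to all of $P^{\le 2r}[x_1, \ldots, x_n]$ by linearity. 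The crucial feature of this definition is that $\tilde{E}$ sees only the multilinearization of its argument: replacing $x_i^2$ by $x_i$ never alters the support of a monomial, so $\tilde{E}[f] = \tilde{E}[\widetilde{f}]$ for all $f$ of degree at most $2r$, where $\widetilde{f}$ is the multilinear polynomial congruent to $f$ modulo the ideal $(x_1^2 - x_1, \ldots, x_n^2 - x_n)$.

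Next I would verify the three axioms of a degree $2r$ pseudoexpectation. Linearity holds by construction and $\tilde{E}[1] = \norm{\bm{V}_\phi}^2 = 1$. For positivity, take $p = \sum_\alpha c_\alpha \bm{x}^\alpha$ with $\deg p \le r$; then $\tilde{E}[p^2] = \sum_{\alpha, \beta} c_\alpha c_\beta \norm{\bm{V}_{\mathrm{supp}(\alpha) \cup \mathrm{supp}(\beta)}}^2$, and because $\mathrm{supp}(\alpha)$, $\mathrm{supp}(\beta)$ and their union all lie in $[n]_{\le 2r}$, the consistency constraints of $\mathcal{P}_{2r}$ turn each term into $\langle \bm{V}_{\mathrm{supp}(\alpha)}, \bm{V}_{\mathrm{supp}(\beta)}\rangle$, whence $\tilde{E}[p^2] = \norm{\sum_\alpha c_\alpha \bm{V}_{\mathrm{supp}(\alpha)}}^2 \ge 0$. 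This positivity step is the heart of the matter, though it comes out cleanly.

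Then I would check the equality constraints of $\mathcal{Q}_{2r}$. The constraint $\tilde{E}[(x_i^2 - x_i)h] = 0$ is immediate from support-invariance and linearity, since $(x_i^2 - x_i)\bm{x}^\alpha$ multilinearizes to $\bm{x}_{\{i\}\cup\mathrm{supp}(\alpha)} - \bm{x}_{\{i\}\cup\mathrm{supp}(\alpha)} = 0$. For $\tilde{E}[q_i h] = 0$ with $h \in H_i$, by linearity and support-invariance it suffices to take $h = \bm{x}_S$ a multilinear monomial; a leading-form argument (the top-degree parts of $q_i$ and $h$ multiply to something nonzero over $\mathbb{R}$) shows $|S| \le 2r - \deg q_i$, so $|T \cup S| \le 2r$ for every $T$ with $(q_i)_T \ne 0$. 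Writing $q_i = \sum_T (q_i)_T \bm{x}_T$, this gives $\tilde{E}[q_i \bm{x}_S] = \sum_T (q_i)_T \norm{\bm{V}_{T \cup S}}^2 = \sum_{T \in [n]_{\le 2r}} (q_i)_T \langle \bm{V}_T, \bm{V}_S\rangle = 0$, the last equality being precisely the first constraint of $\mathcal{P}_{2r}$ at the set $S$. Finally the objective is matched: $\tilde{E}[p] = \sum_{T \in [n]_{\le r}} p_T \norm{\bm{V}_T}^2 = FRAC$. (The SDP's own nonnegativity constraints $\langle \bm{V}_{S_1}, \bm{V}_{S_2}\rangle \ge 0$ are not used in this direction.)

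The most error-prone, and hence the real, part of the write-up will be the degree bookkeeping around non-multilinear polynomials: checking that $\tilde{E}$ is genuinely well-defined, that each product $q_i h$ and $(x_i^2 - x_i)h$ under consideration stays within degree $2r$ so that all the $\bm{V}_S$ invoked actually exist as variables of $\mathcal{P}_{2r}$, and that the consistency constraints are legitimately applicable at each rewriting step. Individually these are routine, but keeping the quantifiers and degree bounds straight is where the work lies.
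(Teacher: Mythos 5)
Your proposal is correct and follows essentially the same route as the paper: define $\tilde{E}$ on monomials via multilinearization and the SoS vectors (your $\norm{\bm{V}_{\mathrm{supp}(\alpha)}}^2$ equals the paper's $\langle \bm{V}_{\phi}, \bm{V}_{\mathrm{supp}(\alpha)}\rangle$ by the consistency constraints), then verify positivity by rewriting $\tilde{E}[p^2]$ as $\norm{\sum_\alpha c_\alpha \bm{V}_{\mathrm{supp}(\alpha)}}^2$ and check the ideal constraints exactly as the paper does. The degree bookkeeping you flag is handled identically in the paper (reducing to monomials $h=\bm{x}_S$ with $|T\cup S|\le 2r$), so there is nothing further to add.
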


\begin{proof}
Let $\{\bm{V}_S\}_{S \in [n]_{\le 2r}}$ be the level $2r$ SoS vectors that achieve objective value $FRAC$. For any polynomial $h \in P^{\le 2r}[x_1, \ldots, x_n]$, denote by $\overline{h}$ the multilinearization of the polynomial $h$, which means $\overline{h}$ is obtained from $h$ by syntactically replacing any occurence of $x_i^k$ in any term of $h$ by $x_i$ for any $i \le n, k \ge 2$. So, using the assumption that $p, q_i$ are multilinear, we have $p_T = \overline{p}_T, (q_i)_T = (\overline{q_i})_T$. For any polynomial $h \in P^{\le 2r}[x_1, \ldots, x_n]$, define $\tilde{E}[h] = \displaystyle\sum_{T \in [n]_{\le 2r}} \overline{h}_T \langle \bm{V}_{\phi}, \bm{V}_T\rangle$.

First, observe that this operator is well defined and linear. We have $\tilde{E}[1] = \norm{\bm{V}_{\phi}}^2 = 1$. For any $h \in P^{\le 2r - 2}[x_1, \ldots, x_n]$, $\tilde{E}[(x_i^2 - x_i)h]$ is $0$ by definition of $\tilde{E}$. For any $i \le m$, to prove that $\tilde{E}[q_ih] = 0$ for all $h$ such that $q_ih \in P^{\le 2r}[x_1, \ldots, x_n]$, by linearity, it suffices to prove that $\tilde{E}[q_ih] = 0$ for all $h = \bm{x}_S$ with $deg(q_i) + deg(h) \le 2r$, but in that case, we have $\tilde{E}[q_ih] = \displaystyle\sum_{T \in [n]_{\le 2r}}(\overline{q_i})_T\tilde{E}[\bm{x}_{T \cup S}] = \displaystyle\sum_{T \in [n]_{\le 2r}}(\overline{q_i})_T\langle \bm{V}_{\phi}, \bm{V}_{T \cup S}\rangle = \displaystyle\sum_{T \in [n]_{\le 2r}}(\overline{q_i})_T\langle \bm{V}_{T}, \bm{V}_{S}\rangle = 0$. Here, note that $|T \cup S| \le 2r$ by degree conditions.

We need to prove that $\tilde{E}[h^2] \ge 0$ for all polynomials $h \in P^{\le r}[x_1, \ldots, x_n]$. We can again assume $h$ is multilinear by the definition of $\tilde{E}$. Then
\begin{align*}
\tilde{E}[h(x_1, \ldots, x_n)^2] &= \displaystyle\sum_{T_1 \subseteq [n]_{\le r}} \displaystyle\sum_{T_2 \subseteq [n]_{\le r}} h_{T_1}h_{T_2}\tilde{E}[\bm{x}_{T_1 \cup T_2}]\\
&= \displaystyle\sum_{T_1 \subseteq [n]_{\le r}} \displaystyle\sum_{T_2 \subseteq [n]_{\le r}} h_{T_1}h_{T_2}\langle \bm{V}_{\phi}, \bm{V}_{T_1 \cup T_2}\rangle\\
&= \displaystyle\sum_{T_1 \subseteq [n]_{\le r}} \displaystyle\sum_{T_2 \subseteq [n]_{\le r}} h_{T_1}h_{T_2}\langle \bm{V}_{T_1}, \bm{V}_{T_2}\rangle\\
&= \norm{\displaystyle\sum_{T \subseteq [n]_{\le r}} h_T\bm{V}_T}^2 \ge 0
\end{align*}

Finally, observe that $\tilde{E}[p(x_1 \ldots, x_n)] = \displaystyle\sum_{T \in [n]_{\le 2r}} \overline{p}_T\langle \bm{V}_{\phi}, \bm{V}_T\rangle = \displaystyle\sum_{T \in [n]_{\le 2r}} p_T\norm{\bm{V}_T}^2 = FRAC$.
\end{proof}

In particular, we get that the optimum value of $\mathcal{Q}_{2r}$ is at least the optimum value of $\mathcal{P}_{2r}$.

\begin{theorem}
For $\Gamma$, if $\mathcal{Q}_{4r}$ has a feasible solution of objective value $FRAC$, then there exists a feasible solution for $\mathcal{P}_{r}$ with objective value $FRAC$.
\end{theorem}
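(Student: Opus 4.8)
The plan is to manufacture the SoS vectors for $\mathcal{P}_r$ out of the moment matrix of the pseudoexpectation operator, via a Gram/Cholesky decomposition. Fix a feasible $\tilde{E}$ for $\mathcal{Q}_{4r}$ with $\tilde{E}[p] = FRAC$, and recall we may take $p, q_1, \ldots, q_m$ multilinear. First I would define the symmetric matrix $M$ indexed by $[n]_{\le r}$ by $M_{S,T} := \tilde{E}[\bm{x}_{S \cup T}]$. Using the booleanity constraints $\tilde{E}[(x_i^2 - x_i)h] = 0$ of $\mathcal{Q}_{4r}$ (valid since all the test polynomials involved have degree at most $4r-2$), I would record the fact that for any polynomial $g$ of degree at most $4r$, $\tilde{E}[g]$ equals $\tilde{E}$ of the multilinearization of $g$, obtained by reducing one factor $x_i^2 \mapsto x_i$ at a time; in particular $M_{S,T} = \tilde{E}[\bm{x}_S \bm{x}_T]$, since $\bm{x}_S \bm{x}_T$ has degree at most $2r$ and multilinearizes to $\bm{x}_{S \cup T}$.

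Next I would argue $M \succeq 0$: for any vector $c = (c_S)_{S \in [n]_{\le r}}$, set $h = \sum_S c_S \bm{x}_S \in P^{\le r}[x_1, \ldots, x_n]$; then $c^{T} M c = \sum_{S,T} c_S c_T \tilde{E}[\bm{x}_S \bm{x}_T] = \tilde{E}[h^2] \ge 0$ by the positivity axiom (which applies as $\deg h \le r \le 2r$). Since $M$ is symmetric positive semidefinite, it has a Gram decomposition $M = W^{T} W$ (the Cholesky decomposition), yielding vectors $\{\bm{V}_S\}_{S \in [n]_{\le r}}$ with $\langle \bm{V}_S, \bm{V}_T\rangle = M_{S,T} = \tilde{E}[\bm{x}_{S \cup T}]$ for all $S, T \in [n]_{\le r}$. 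This $\{\bm{V}_S\}$ is the candidate solution for $\mathcal{P}_r$.

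Then I would verify the constraints of $\mathcal{P}_r$ one by one. We have $\norm{\bm{V}_\phi}^2 = \tilde{E}[1] = 1$. If $S_1 \cup S_2 = S_3 \cup S_4$ with all four sets in $[n]_{\le r}$, then $\langle \bm{V}_{S_1}, \bm{V}_{S_2}\rangle = \tilde{E}[\bm{x}_{S_1 \cup S_2}] = \tilde{E}[\bm{x}_{S_3 \cup S_4}] = \langle \bm{V}_{S_3}, \bm{V}_{S_4}\rangle$, which gives the consistency constraints. For nonnegativity, write $g = \bm{x}_{S_1 \cup S_2}$, a monomial of degree $|S_1 \cup S_2| \le 2r$; then $\langle \bm{V}_{S_1}, \bm{V}_{S_2}\rangle = \tilde{E}[g] = \tilde{E}[g^2] \ge 0$, using multilinearization (note $g^2$ has degree at most $4r$) and the positivity axiom. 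For each polynomial constraint $i$ and each $S \in [n]_{\le r}$, $\sum_{T \in [n]_{\le r}} (q_i)_T \langle \bm{V}_T, \bm{V}_S\rangle = \sum_T (q_i)_T \tilde{E}[\bm{x}_{T \cup S}] = \tilde{E}[q_i \bm{x}_S] = 0$, since $q_i \bm{x}_S$ has degree at most $2r \le 4r$, hence $\bm{x}_S \in H_i$, and feasibility of $\tilde{E}$ for $\mathcal{Q}_{4r}$ forces $\tilde{E}[q_i h] = 0$ for all $h \in H_i$. Finally, the objective: $\sum_{T \in [n]_{\le r}} p_T \norm{\bm{V}_T}^2 = \sum_T p_T \tilde{E}[\bm{x}_T] = \tilde{E}\!\left[\sum_T p_T \bm{x}_T\right] = \tilde{E}[p] = FRAC$.

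The only delicate point — the main obstacle, modest as it is — is the degree accounting: one must ensure that every polynomial ever handed to $\tilde{E}$ has degree at most $4r$ (this is used for the Gram entries $\bm{x}_S \bm{x}_T$, for the squared monomials $g^2 = \bm{x}_{S_1 \cup S_2}^2$ appearing in the nonnegativity step, for the products $q_i \bm{x}_S$, and for the booleanity-reduction test polynomials), and that every polynomial sitting inside a positivity invocation has degree at most $2r$. This is exactly why the statement costs a factor in the number of levels: the Gram vectors are indexed by monomials of degree up to $r$, but certifying the constraint $\langle \bm{V}_{S_1}, \bm{V}_{S_2}\rangle \ge 0$ forces us to evaluate $\tilde{E}$ on the square of a degree-$2r$ monomial, i.e.\ at degree $4r$. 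Everything beyond this bookkeeping is a routine moment-matrix computation, so I do not expect any genuine difficulty.
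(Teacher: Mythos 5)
Your proposal is correct and follows essentially the same route as the paper: build the moment matrix $M_{S,T}=\tilde{E}[\bm{x}_{S\cup T}]$, use the booleanity constraints to identify it with $\tilde{E}[\bm{x}_S\bm{x}_T]$, certify positive semidefiniteness via $\tilde{E}[h^2]\ge 0$, take a Gram decomposition, and check each constraint of $\mathcal{P}_r$ by the same degree bookkeeping. The paper's proof is the same argument, including the step $\langle \bm{V}_{S_1},\bm{V}_{S_2}\rangle = \tilde{E}[\bm{x}_{S_1\cup S_2}] = \tilde{E}[\bm{x}_{S_1\cup S_2}^2]\ge 0$ that motivates the degree-$4r$ hypothesis.
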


\begin{proof}
Let $\tilde{E}$ be the degree $4r$ pseudoexpectation operator with $\tilde{E}[p(x_1, \ldots, x_n)] = FRAC$. Consider the $n^{O(r)} \times n^{O(r)}$ matrix $M$ with rows and columns indexed by elements of $[n]_{\le r}$ such that $M_{S, T} = \tilde{E}[\bm{x}_{S \cup T}]$ for all $S, T \in [n]_{\le r}$. Clearly, $M_{S, T}$ is symmetric. We have $\tilde{E}[\bm{x}_{T_1}\bm{x}_{T_2}] = \tilde{E}[\bm{x}_{T_1\cup T_2}]$ for all $T_1, T_2 \in [n]_{\le r}$ because $\tilde{E}[x_i^2h] = \tilde{E}[x_ih]$ for all $h \in P^{\le 4r-2}[x_1,\ldots, x_n]$. So, we get that for any vector $\bm{v} \in \mathbb{R}^{[n]_{\le r}}$, we have $\bm{v}^TM\bm{v} = \displaystyle\sum_{T_1 \in [n]_{\le r}} \displaystyle\sum_{T_2 \in [n]_{\le r}} \bm{v}_{T_1}\bm{v}_{T_2}\tilde{E}[\bm{x}_{T_1 \cup T_2}] = \tilde{E}[(\displaystyle\sum_{T \in [n]_{\le r}}\bm{v}_T\bm{x}_T)^2] \ge 0$. This means that $M$ is positive semidefinite and therefore, there exist vectors $\bm{V}_S$ for $S \in [n]_{\le r}$ such that $\langle \bm{V}_S, \bm{V}_T\rangle = \tilde{E}[\bm{x}_{S \cup T}]$ for all $S, T \in [n]_{\le r}$. We will prove that these vectors give a feasible solution to $\mathcal{P}_r$ with objective value $FRAC$.

We have $\norm{\bm{V}_{\phi}}^2 = \tilde{E}[1] = 1$. And for $S_1, S_2, S_3, S_4\in [n]_{\le r}$ such that $S_1 \cup S_2 = S_3 \cup S_4$, we have $S_1 \cup S_2, S_3 \cup S_4 \in [n]_{\le 2r}$ which means $\tilde{E}[\bm{x}_{S_1 \cup S_2}], \tilde{E}[\bm{x}_{S_3 \cup S_4}], \tilde{E}[\bm{x}_{S_1 \cup S_2}^2]$ are defined and so, $\langle \bm{V}_{S_1}, \bm{V}_{S_2}\rangle = \tilde{E}[\bm{x}_{S_1 \cup S_2}] = \tilde{E}[\bm{x}_{S_3 \cup S_4}] = \langle \bm{V}_{S_3}, \bm{V}_{S_4}\rangle$. Also, $\langle \bm{V}_{S_1}, \bm{V}_{S_2}\rangle = \tilde{E}[\bm{x}_{S_1 \cup S_2}] = \tilde{E}[\bm{x}_{S_1 \cup S_2}^2] \ge 0$.

For all $i \le m$ and $S \in [n]_{\le r}$, $\displaystyle\sum_{T \in [n]_{\le r}}(q_i)_T\langle \bm{V}_T, \bm{V}_S\rangle = \displaystyle\sum_{T \in [n]_{\le r}}(q_i)_T\tilde{E}[\bm{x}_{T\cup S}] = \displaystyle\sum_{T \in [n]_{\le r}}(q_i)_T\tilde{E}[\bm{x}_T\bm{x}_S] = \tilde{E}[q_i(x_1, \ldots, x_n)\bm{x}_S] = 0$. Finally, we have the objective value $\displaystyle\sum_{T \in [n]_{\le r}}p_T\norm{\bm{V}_T}^2 = \displaystyle\sum_{T \in [n]_{\le r}}p_T\tilde{E}[\bm{x}_T] = \tilde{E}[p(x_1, \ldots, x_n)] = FRAC$.
\end{proof}

In particular, we get that the optimum value of $\mathcal{P}_{r}$ is at least the optimum value of $\mathcal{Q}_{4r}$.

\subsection{Maximum Clique}

An instance of Maximum Clique is a graph $G = (V, E)$ and the objective is to find the size of the largest clique in $G$. The basic program has boolean variables $x_u$ for $u \in V$ where $x_u$ indicates whether $u$ is in the largest clique:
\begin{eqnarray*}
	\text{Maximize }&\displaystyle\sum_{u \in V} x_u&\\
	\text{subject to }&x_ux_v = 0& \forall (u, v) \not\in E, u \neq v\\
	&x_u \in \{0, 1\}&
\end{eqnarray*}

Note that the constraint means that if $(u, v)$ is not an edge, then both $u, v$ are not picked in the final solution and vice versa. So, this program precisely solves the Maximum Clique problem.

In the previous chapter, we studied approximation guarantees of the SoS relaxation of this problem on Erd\"{o}s-R\'{e}nyi random graphs. Now, we study integrality gaps for the relaxation. The integrality gap construction by Barak et al.\cite{pseudo} are Erd\"{o}s-R\'{e}nyi random graphs $G \sim G(n, 1/2)$ which is a graph $G = (V, E)$ on $n$ vertices where for each $u \neq v$, the edge $(u, v)$ is present in $E$ with probability $1/2$. In such graphs, it can be shown that there are no cliques of size more than $2 \log n$ with high probability.


\begin{theorem}[\cite{pseudo}]\label{clique}
For any $r = o(\log n)$, the optimum value of the level $r$ SoS relaxation for maximum clique on $G\sim G(n, 1/2)$ is at least $k = n^{1 / 2 - O(\sqrt{r / \log n})}$ with high probability.
\end{theorem}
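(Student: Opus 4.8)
The plan is to construct, with high probability over $G \sim G(n,1/2)$, a degree-$O(r)$ pseudoexpectation operator $\tilde E$ that respects the clique constraints $\tilde E[x_u x_v\, h] = 0$ for every non-edge $(u,v)$, respects booleanity, satisfies $\tilde E[1]=1$, and achieves $\tilde E[\sum_{u} x_u] \ge k$; by the equivalence between pseudoexpectation operators and SoS vectors proved above (applied with a constant blow-up in the number of levels), this gives the theorem. I would obtain $\tilde E$ by \emph{pseudocalibration} against the planted distribution $G(n,1/2,k)$, in which a uniformly random $k$-subset $W$ of the vertices is turned into a clique. Writing each edge indicator as $G_e \in \{\pm 1\}$, $\chi_T(G) = \prod_{e\in T} G_e$ for a set $T$ of potential edges, and $V(T)$ for the set of endpoints of the edges of $T$, a one-line computation gives the planted Fourier coefficients
\[
\mathbb E_{\mathrm{planted}}\big[x_S\,\chi_T(G)\big] \;=\; \Pr\big[\,S \cup V(T) \subseteq W\,\big] \;=\; (1 \pm o(1))\,(k/n)^{\,|S \cup V(T)|}.
\]
I then \emph{define} $\tilde E$ on monomials by truncating this Fourier expansion:
\[
\tilde E[x_S] \;=\; \sum_{\substack{T \subseteq \binom{[n]}{2} \\ |V(T)\setminus S| \le \tau}} (k/n)^{\,|S \cup V(T)|}\,\chi_T(G),
\]
for all $|S| = O(r)$, with a truncation parameter $\tau = \Theta(r)$ (plus, if needed, a mild cap on $|T|$), and extend $\tilde E$ linearly after multilinearizing each monomial, which makes the booleanity constraints $\tilde E[(x_i^2-x_i)h]=0$ automatic.

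Next I would verify the remaining linear constraints and the objective value. The identity $\tilde E[1]=1$ is just the $S=T=\emptyset$ term. For a non-edge $(u,v)$ and any $S$ containing both $u$ and $v$, I would pair each $T$ in the expansion of $\tilde E[x_S]$ with $T \,\Delta\, \{uv\}$: since $u,v \in S$, toggling the edge $uv$ changes neither the coefficient $(k/n)^{|S\cup V(T)|}$ nor the truncation quantity $|V(T)\setminus S|$, while $\chi_{T \,\Delta\, \{uv\}}(G) = G_{uv}\,\chi_T(G) = -\chi_T(G)$ because $(u,v)\notin E$, so the paired terms cancel and $\tilde E[x_S]=0$ exactly; by linearity (after multilinearizing) this gives $\tilde E[x_u x_v h]=0$ for all $h$, which is the clique constraint. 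For the objective, the $T=\emptyset$ part of $\sum_u \tilde E[x_u]$ equals $n\cdot (k/n) = k$, while the rest is a mean-zero low-degree polynomial in the independent bits $G_e$, which a hypercontractivity or concentration argument shows to be $o(k)$ with high probability; hence $\tilde E[\sum_u x_u] \ge (1-o(1))k$, which suffices after absorbing the $1-o(1)$ into the $O(\sqrt{r/\log n})$ in the exponent of $k$.

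The heart of the argument, and the step I expect to be the main obstacle, is the positivity condition $\tilde E[p^2] \ge 0$ for all degree-$O(r)$ polynomials $p$ — equivalently, that the moment matrix $M$ with $M[S,T] = \tilde E[x_{S\cup T}]$, indexed by sets of size $O(r)$, is positive semidefinite with high probability. Writing $N$ for the deterministic, strictly positive definite matrix $N[S,T] = (k/n)^{|S\cup T|}$ (a genuine Gram matrix of the degree-$O(r)$ monomials on i.i.d.\ $\mathrm{Ber}(k/n)$ variables), it suffices to show $\big\| N^{-1/2}(M-N)N^{-1/2} \big\| < 1$ with high probability. I would expand $M-N$ into a sum of \emph{graph matrices} indexed by the nonzero Fourier sets $T$, grouped into shapes, and bound the normalized operator norm of each by the trace power method: estimating $\mathbb E[\Tr((\cdot)^{2L})]$ by enumerating closed walks in the random graph, a shape with $s$ vertices outside the index sets contributes, after normalization and extraction of the $2L$-th root, roughly $(k/\sqrt n)^{s}$ — a quantity that is $n^{-\Omega(1)}$ precisely because $k \ll \sqrt n$ — against combinatorial overheads that remain subpolynomial because $r = o(\log n)$ and the walk length $L$ is chosen appropriately. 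Summing these bounds over all shapes, the total stays below $1$ exactly in the regime $k \le n^{1/2 - C\sqrt{r/\log n}}$ for a suitable constant $C$, which is what pins down the clique size in the statement. Once $M \succeq 0$ is established, $\tilde E$ is a valid pseudoexpectation with objective value $\ge k$, and invoking the pseudoexpectation-to-SoS-vectors direction completes the proof. The delicate points I anticipate are getting the truncation $\tau$ and the walk length $L$ to cooperate so that the graph-matrix norm bounds are strong enough to survive the union over all shapes, and controlling the lower-order $(1\pm o(1))$ discrepancies between the exact binomial probabilities and the $(k/n)^{|S\cup V(T)|}$ approximations inside the norm estimates.
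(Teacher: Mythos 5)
Your construction is exactly the paper's: pseudocalibrate against the planted distribution $G(n,1/2,k)$, compute the Fourier coefficients $\reallywidehat{\tilde{E}[\bm{x}_S](T)} \approx (k/n)^{|S\cup V(T)|}$, and truncate at $|S\cup V(T)|\le \tau=\Theta(r/\epsilon)$. Your cancellation argument for the clique constraints (pairing $T$ with $T\,\Delta\,\{uv\}$ for a non-edge with $u,v\in S$, noting that neither the coefficient nor the truncation condition changes while $\chi_T$ flips sign) is a correct and clean way to see why the hard constraints hold exactly, and is the standard justification behind the heuristic the paper describes.

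Two caveats. First, a small but genuine error: $\tilde{E}[1]$ is \emph{not} exactly $1$. By your own definition, $\tilde{E}[\bm{x}_\emptyset]=\sum_{|V(T)|\le\tau}(k/n)^{|V(T)|}\chi_T(G)$, which is $1$ plus a mean-zero random fluctuation; one must show this fluctuation is $o(1)$ with high probability and renormalize the operator by $\tilde{E}[1]$ (the same issue affects your objective-value computation). Second, and more importantly, the positivity of the moment matrix is the entire content of the theorem, and both you and the paper leave it as a sketch --- the paper explicitly defers it to Barak et al.\ as ``technically involved.'' Your plan (graph matrices, trace power method, shapes contributing $(k/\sqrt{n})^{s}$) names the right ingredients, but the specific reduction to $\norm{N^{-1/2}(M-N)N^{-1/2}}<1$ with $N[S,T]=(k/n)^{|S\cup T|}$ understates the difficulty: $N$ has condition number $(n/k)^{\Theta(r)}$, so the error matrix must be beaten down to the scale of $N$'s smallest eigenvalues in the corresponding directions, and the actual argument requires a much more careful block decomposition and charging scheme. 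As a description of the approach this matches the source; as a proof, the PSD step remains open in your write-up just as it does in the paper's exposition.
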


Since the actual optimum is $O(\log n)$, this shows that the integrality gap is large for $r = o(\log n)$ levels of SoS. On the other hand, a simple bruteforce algorithm that checks whether any $2\log n + 1$ vertices will form a clique, will run in time $n^{O(\log n)}$ and find the maximum clique for random instances with high probability.

Their proof proceeds by constructing a degree $r$ pseudoexpectation operator that witnesses this. The result would follow from the equivalence between the SoS hierarchy and the pseudoexpectation view.

Their argument proceeds in two parts, where they first use heuristics to mathematically construct the pseudoexpectation operator and then, in the second part, they prove that it satisfies the required properties. The first part is known as pseudocalibration, which we will describe here. We will skip the latter part, which is technically involved.

To be precise, for $r = o(\log n)$, we will exhibit a degree $2r$ pseudoexpectation operator $\tilde{E}$ that satisfies the following conditions with high probability when $G = (V, E)$ (assume $V = [n]$) is sampled from $G(n, 1/2)$.
\begin{itemize}
	\item $\tilde{E}$ is linear and $\tilde{E}[1] = 1$
	\item $\tilde{E}[(x_u^2 - x_u)h(x_1, \ldots, x_n)] = 0$ for all $h \in P^{\le 2r - 2}[x_1, \ldots, x_n], u = 1, \ldots, n$
	\item $\tilde{E}[x_ux_vh(x_1, \ldots, x_n)] = 0$ for all $(u, v) \not\in E, u \neq v, h \in P^{\le 2r - 2}[x_1, \ldots, x_n]$
	\item $\displaystyle\sum_{u = 1}^n \tilde{E}[x_u] = k$
	\item $\tilde{E}[h(x_1, \ldots, x_n)^2] \ge 0$ for all $h \in P^{\le r}[x_1, \ldots, x_n]$.
\end{itemize}

The idea is think of $\tilde{E}$ as a computationally bounded solver. We are trying to determine $\tilde{E}$ that will, in loose terms, think that $G(n, 1/2)$ has a clique of size $k$ for $k \gg 2\log n$. The crucial heuristic is to consider a planted version of the random graph and try to estimate the values of $\tilde{E}$ assuming that it cannot distinguish a planted version from a purely random graph. More precisely, consider the following two distributions
\begin{itemize}
	\item $G(n, 1/2)$ - A graph $G$ sampled from the Erd\"{o}s-R\'{e}nyi random graph distribution.
	\item $G(n, 1/2, k)$ - Sample a graph $G \sim G(n, 1/2)$, choose a subset of $k$ vertices uniformly at random and add all possible edges, if not already present, within this subset. We call this the planted version.
\end{itemize}
The intuition that $\tilde{E}$ is unable to distinguish these two distributions should mean in particular that for any function $f:\mathbb{R}^n \longrightarrow \mathbb{R}$ of degree at most $2r$ on the variables $x_1, \ldots, x_n$, the expected value of the pseudoexpectation of this function is the same for both distributions. That is, $\mathbb{E}_{G \sim G(n, 1/2)}\tilde{E}_G[f] = \mathbb{E}_{G \sim G(n, 1/2, k)}\tilde{E}_G[f]$ for all functions $f \in P^{\le 2r}[x_1, \ldots, x_n]$. Here, note that $\tilde{E}_G$ can depend on the graph $G$, which we emphasize by a subscript.

We take this further with the following heuristic and make a stronger assumption. Fix $f \in P^{\le 2r}[x_1, \ldots, x_n]$ and consider $\tilde{E}_G[f]$ as a function of the graph $G$. We assume that, not just the expectation but also, the correlation of $\tilde{E}_G[f]$ with any low degree function $g$ on graphs $G$ is the same for both distributions. We will describe the exact definition of low degree later. To make this formal, if we encode the edges of the graph using $\dbinom{n}{2}$ entries $G_e$ in $\{\pm 1\}$ where $G_e = 1$ means the edge $e$ is present and $G_e = -1$ means the edge $e$ is absent, we can treat $\tilde{E}[f]$ as a function from $\{\pm 1\}^{n(n-1)/2}$ to $\mathbb{R}$. Then, for all low degree functions $g:\{\pm 1\}^{n(n-1)/2}\longrightarrow\mathbb{R}$, we set the correlations to be the same for both distributions, namely \[\mathbb{E}_{G \sim G(n, 1/2)}[\tilde{E}_G[f]g(G)] = \mathbb{E}_{G \sim G(n, 1/2, k)}[\tilde{E}_G[f]g(G)]\]

Now, since $G \sim G(n, 1/2, k)$ does indeed have a $k$-clique, we heuristically assume that $\tilde{E}$ is the correct expectation on this graph, with a unique support being the indicator vector $\bm{x} \in \mathbb{R}^n$ of the planted clique (but in reality, there can be other cliques) and that $\tilde{E}_G$ only errs on $G(n, 1/2)$. Then, \[\mathbb{E}_{G \sim G(n, 1/2, k)}[\tilde{E}_G[f]g(G)] = \mathbb{E}_{(G, \bm{x}) \sim G(n, 1/2, k)}[f(\bm{x})g(G)]\] where we use the notation $(G, \bm{x}) \sim G(n, 1/2, k)$ to mean that $G \sim G(n, 1/2, k)$ and $\bm{x}$ is the indicator vector of the planted $k$-clique.

So, $\tilde{E}_G$ would ideally satisfy \[\mathbb{E}_{G \sim G(n, 1/2)}[\tilde{E}_G[f]g(G)] = \mathbb{E}_{(G, \bm{x}) \sim G(n, 1/2, k)}[f(\bm{x})g(G)]\] for all functions $f \in P^{\le 2r}[x_1, \ldots, x_n]$ and low degree $g : \{\pm 1\}^{n(n-1)/2} \longrightarrow \mathbb{R}$. From discrete Fourier analysis on boolean variables, note that $f$ of degree at most $2r$ can be written as a linear combination of the functions $\bm{x}_S:\mathbb{R}^n \longrightarrow\mathbb{R}$ for $S \in [n]_{\le 2r}$ where $\bm{x}_S(\bm{x}) = \displaystyle\prod_{i \in S} x_i$, and $g$ can be written as a linear combination of the functions $\chi_T : \{\pm 1\}^{n(n-1)/2} \longrightarrow \mathbb{R}$ for $T \subseteq [n(n-1)/2]$ where $\chi_T(G) = \displaystyle\prod_{e\in T} G_e$. So, it suffices to ensure \[\mathbb{E}_{G \sim G(n, 1/2)}[\tilde{E}_G[\bm{x}_S]\chi_T(G)] = \mathbb{E}_{(G, \bm{x}) \sim G(n, 1/2, k)}[\bm{x}_S(\bm{x})\chi_T(G)]\] for all $S \in [n]_{\le 2r}$ and low degree $T \subseteq [n(n-1)/2]$ and the condition we wish to ensure will follow from linearity of the pseudoexpectation and expectation. In fact, we make this assumption only for $S, T$ such that $|S \cup V(T)| \le \tau$ for some threshold $\tau$, where $V(T)$ is the set of vertices contained in the edges in $T$. The reason that we consider $|S \cup V(T)|$ will be clear when we compute the Fourier coefficients of $\tilde{E}_G[\bm{x}_S]$.  Barak et al. set $\tau \approx r / \epsilon$ where $k \approx n^{1/2 - \epsilon}$.

Remember that we are trying to determine $\tilde{E}_G[f]$ for $f \in P^{\le 2r}[x_1, \ldots, x_n]$ that will satisfy our constraints for graphs $G \sim G(n, 1/2)$ with high probability. Think of it as a function of $G$ and by the preceding comments, it suffices to determine $\tilde{E}_G[\bm{x}_S]$ for all $S$ of size at most $2r$. For a fixed $S$, since this is a function on graphs $G$, it has a Fourier expansion $\tilde{E}_G[\bm{x}_S] = \displaystyle\sum_{T \subseteq [n(n-1)/2]}\reallywidehat{\tilde{E}[\bm{x}_S](T)} \chi_T(G)$.

The final heuristic is to assume that $\reallywidehat{\tilde{E}[\bm{x}_S](T)} = 0$ for all subsets $T$ such that $|S \cup V(T)| > \tau$. The intuitive reason for this assumption is that the function $\tilde{E}[\bm{x}_S]$ is computed by an algorithm that runs in $n^{O(r)}$ time and hence, has to be simple upto $n^{O(r)}$ complexity. One way of interpreting this is to assume that the higher order Fourier coefficients vanish.

After we use these heuristics, we compute the remaining Fourier coefficients. For $S, T$ such that $|S| \le 2r$ and $|V(T) \cup S| \le \tau$, we have
\[\reallywidehat{\tilde{E}[\bm{x}_S](T)} = \mathbb{E}_{G \sim G(n, 1/2)}[\tilde{E}_G[\bm{x}_S]\chi_T(G)] = \mathbb{E}_{(G, \bm{x}) \sim G(n, 1/2, k)}[\bm{x}_S(\bm{x})\chi_T(G)]\]
Let $\mathcal{C} \subseteq [n]$ be the planted clique in $G$ where $(G, \bm{x}) \sim G(n, 1/2, k)$. If $\mathcal{C} \not\supseteq S$, then $\bm{x}_S(\bm{x}) = 0$ and if $\mathcal{C} \not\supseteq V(T)$, we have $\mathbb{E}_{(G, \bm{x}) \sim G(n, 1/2, k)}[\bm{x}_S(\bm{x})\chi_T(G)] = 0$ since there is an edge $e$ in $T$ that is outside the planted clique and hence, $G_e$ would be $1$ or $-1$ with probability $1/2$ each. And when $C \supseteq S \cup V(T)$, we have $\bm{x}_S(\bm{x})\chi_T(G) = 1$. So, we get
\begin{align*}
\reallywidehat{\tilde{E}[\bm{x}_S](T)} &= \mathbb{E}_{G \sim G(n, 1/2)}[\tilde{E}_G[\bm{x}_S]\chi_T(G)]\\
&= \text{Pr}[\mathcal{C}\text{ contains }S \cup V(T)]\\
&= \frac{\dbinom{n - |S \cup V(T)|}{k - |S \cup V(T)|}}{\dbinom{n}{k}}\\
&\approx \left(\frac{k}{n}\right)^{|S \cup V(T)|}
\end{align*}
So, in general, if $f(\bm{x}) = \displaystyle\sum_{S \in [n]_{\le 2r}} c_S \bm{x}_S$ is any polynomial in $P^{\le 2r}[x_1, \ldots, x_n]$, then we have \[\tilde{E}[f] = \displaystyle\sum_{S \in [n]_{\le 2r}} c_S \displaystyle\sum_{|S \cup V(T)| \le \tau, T \subseteq [n(n-1)/2]} \left(\frac{k}{n}\right)^{|S \cup V(T)|} \chi_T(G)\]
for the graph $G$.

This is the pseudoexpectation that was used to prove Theorem \ref{clique}. The polynomial constraints follow from concentration bounds and proving positivity of the operator was the main technical contribution of the paper.

\subsection{Max K-CSP}\label{pcalibration}

We now show some ingredients towards proving Theorem \ref{cspsos}, more specifically the integrality gap construction of Kothari et al.\cite{kmow} for SoS relaxations of Max $K$-CSP. The approach taken in their paper is purely combinatorial, but we will show that we can arrive at the same construction via pseudocalibration.

We will follow the terminology from section \ref{first} of this chapter. For simplicity of exposition, we will consider boolean predicates, that is, $q = 2$ with the alphabet being $\{-1, 1\}$ (instead of $\{0, 1\}$) and we will also assume $\tau = 3$, which means $\mathcal{C} \subseteq \{-1, 1\}^k$ supports a pairwise uniform distribution. For the $i$th constraint $C_i$, let the shift vector be denoted $b_i = (b_{i, 1}, \ldots, b_{i, K}) \in \{-1, 1\}^K$. So, the $i$th constraint $C_i$ on the appropriate subset of variables $x_{C_i} = (x_j)_{j \in C_i}$ is $[\text{Is }x_{C_i} \cdot b_i \in \mathcal{C}?]$, where "$\cdot$" denotes entrywise product.

Since $q = 2$, we can consider an equivalent, but simpler program than the one we constructed in Chapter $2$. We let the basic variables be $x_j \in \{-1, 1\}$. For each constraint $C_i$, we can arithmetize it into a polynomial expression $f_i(x_1, \ldots, x_n)$ of degree $K$, such that for any assignment of $x_j \in \{-1, 1\}$, the evaluation of $f_i$ on this assignment is $0$ if the respective assignment satisfies the constraint and $1$ otherwise. Indeed $f_i$ does not contain $x_j$ for $j \not\in C_i$. Since we are aiming to show perfect completeness, we can look at the feasibility problem where each constraint is perfectly satisfied. So, we wish to find $x_j$ such that $x_j^2 = 1$ for all $j \le n$ and $f_i(x_1, \ldots, x_n) = 0$ for all $i \le m$.


By the equivalence shown between the SoS hierarchy and pseudoexpectation operators, it suffices to obtain a degree $2r = \zeta\eta n / 3$ pseudoexpectation operator $\tilde{E}$ such that the following conditions are satisfied with high probability for a random Max $K$-CSP instance $I$ as per definition \ref{rand}.
\begin{itemize}
	\item $\tilde{E}$ is linear and $\tilde{E}[1] = 1$
	\item $\tilde{E}[(x_j^2 - 1)h(x_1, \ldots, x_n)] = 0$ for all $h \in P^{\le 2r - 2}[x_1, \ldots, x_n], j = 1, \ldots, n$
	\item $\tilde{E}[f_i(x_1, \ldots, x_n)h(x_1, \ldots, x_n)] = 0$ for $h \in P^{\le 2r - K}[x_1, \ldots, x_n], i = 1, \ldots, m$
	\item $\tilde{E}[h(x_1, \ldots, x_n)^2] \ge 0$ for all $h \in P^{\le r}[x_1, \ldots, x_n]$.
\end{itemize}

Now, we will fix the structure of the instance, that is, the factor graph. So, we know the variables involved in each clause $C_i$, let $C_i = (x_{t_{i, 1}}, \ldots, x_{t_{i, K}})$. Our only degrees of freedom will be the shift vectors $b_i$ for $i \le m$. Similar to the case of Maximum Clique, we will assume that $\tilde{E}$ cannot distinguish the following two distributions.
\begin{itemize}
	\item $\mu_r$ - For each clause $C_i$, sample $b_{i, 1}, \ldots, b_{i, K}$ from $\{-1, 1\}$ independently and uniformly.
	\item $\mu_p$ - Sample a global assignment $(y_1, \ldots, y_n) \in \{-1, 1\}^n$ uniformly at random. Then, independently for each clause $C_i$, sample $(z_{i, 1}, \ldots, z_{i, K})$ from $C \subseteq \mathbb{F}_2^K$ uniformly and set $b_{i, j} = y_{t_{i, j}}z_{i, j}$ for all $j = 1, 2, \dots, K$.
\end{itemize}

The intuitive reason we chose $\mu_p$ like that is because we would want some distribution very similar to $\mu_r$ so that distinguishing is hard, yet it should have some globally satisfying assignment for our pseudocalibration heuristic to work. Now, if $\tilde{E}$ is unable to distinguish $\mu_r$ from $\mu_p$, then, for any $f:\mathbb{R}^n \longrightarrow \mathbb{R}$ of degree at most $2r$ over $x_1, \ldots, x_n$, the expected value of the pseudoexpectations over $b_{i, j}$ should be the same, that is, $\mathbb{E}_{b \sim \mu_r}\tilde{E}_b[f] = \mathbb{E}_{b \sim \mu_p}\tilde{E}_b[f]$. Also, for a fixed $f \in P^{\le 2r}[x_1, \ldots, x_n]$, consider $\tilde{E}_b[f]$ as a function of the $b_{i, j}$. We assume that, since $\tilde{E}$ is unable to distinguish $\mu_r$ from $\mu_p$, the correlation of $\tilde{E}_b[f]$ with any low degree function $g$ of the $b_{i, j}$ is the same in both distributions, that is, \[\mathbb{E}_{b \sim \mu_r}[\tilde{E}_b[f]g(b)] = \mathbb{E}_{b \sim \mu_p}[\tilde{E}_b[f]g(b)]\]

When $b \in \mu_p$, there is an actual satisfying assignment $(y_1, \ldots, y_n)$. In that case, we assume that $\tilde{E}$ is the correct expectation, with a unique support being this assignment. Then,
\[\mathbb{E}_{b \sim \mu_p}[\tilde{E}_b[f]g(b)] = \mathbb{E}_{(b, y, z) \sim \mu_p}[f(y)g(b)]\]
where we use the notation $(b, y, z) \sim \mu_p$ to mean that, when we sampled $b$ from $\mu_p$, the global assignment is $(y_1, \ldots, y_n)$ and for each clause $C_i$, the sampled element from $C$ is $(z_{i, 1}, \ldots, z_{i, K})$.

So, we want $\tilde{E}$ to satisfy
\[\mathbb{E}_{b \sim \mu_r}[\tilde{E}_b[f]g(b)] = \mathbb{E}_{(b, y, z) \sim \mu_p}[f(y)g(b)]\]

We can think of $g$ as a function from $\{-1, 1\}^{mK}$ to $\mathbb{R}$ since there are $mK$ different $b_{i, j}$s. From discrete Fourier analysis, it is enough to satisfy this equation for all $f = \bm{x}_S:\mathbb{R}^n\longrightarrow \mathbb{R}$ for $S \in [n]_{\le 2r}$ where $\bm{x}_S(x_1, \ldots, x_n) = \displaystyle\prod_{i \in S} x_i$ and $g = \chi_T:\{-1, 1\}^{mK}\longrightarrow\mathbb{R}$ for some $T \subseteq \{(i, j) \;|\; i \le m, j \le K\} = [m]\times [K]$ where $\chi_T(b) = \displaystyle\prod_{(i, j) \subseteq T} b_{i, j}$. The assumption becomes
\[\mathbb{E}_{b \sim \mu_r}[\tilde{E}_b[\bm{x}_S]\chi_T(b)] = \mathbb{E}_{(b, y, z) \sim \mu_p}[\bm{x}_S(y_1, \ldots, y_n)\chi_T(b)]\]

Recall that we are trying to determine $\tilde{E}_b[f]$ for $f \in P^{\le 2r}[x_1, \ldots, x_n]$. For a fixed $S$ of size at most $2r$, we have the Fourier expansion $\tilde{E}_b[\bm{x}_S] = \displaystyle\sum_{T \subseteq [m]\times [K]}\reallywidehat{\tilde{E}[\bm{x}_S](T)} \chi_T(b)$. Let's compute the Fourier coefficients. For $S, T$ such that $S \in [n]_{\le 2r}$, $T \subseteq [m]\times [K]$, we have
\begin{align*}
\reallywidehat{\tilde{E}[\bm{x}_S](T)} &= \mathbb{E}_{b \sim \mu_r}[\tilde{E}_b[\bm{x}_S]\chi_T(b)]\\
&= \mathbb{E}_{(b, y, z) \sim \mu_p}[\bm{x}_S(y_1, \ldots, y_n)\chi_T(b)]\\
&= \mathbb{E}_{(b, y, z) \sim \mu_p}[\displaystyle\prod_{i \in S} y_i \displaystyle\prod_{(i, j) \in T} (y_{t_{i, j}}z_{i, j})]
\end{align*}

Note that any subset $T \subseteq [m]\times [K]$ can be thought of to be a collection of edges of the factor graph $G_I$. So, $T$ corresponds to an unique edge induced subgraph $H_T$ of $G_I$. If $H_T$ contains any variable vertex $x_j$ of odd degree outside $S$, then observe that the expectation above becomes $0$ because $y_j$ would occur an odd number of times in right hand side and it is chosen uniformly from $\{-1, 1\}$. Similarly, if any constraint vertex $C_i$ in $H_T$ has degree at most $2$, then the expectation above becomes $0$, since $C$ is pairwise independent and the choice of $z_{i, j}$  for this $i$ is independent of the other terms in the product. So, the only nonzero Fourier coefficients correspond to subgraphs $H_T$ of $G_I$ such that every constraint vertex in $H_T$ has degree at least $3$ and every variable vertex of $H_T$ with odd degree in $H_T$ is inside $S$.

The approach taken in \cite{kmow} was a bit more direct. They view the pseudoexpectations using the idea of local distributions. It is known that if $\tilde{E}$ is a degree $2r$ pseudoexpectation, then for any subset of $r$ variables $x_j$, there exists an actual probability distribution on them, whose true expectation matches $\tilde{E}$. Motivated by prior work by Razborov et al.\cite{raz}, Bennabas et al.\cite{bgmt}, etc., if $S$ denotes a set of variables $x_j$, Kothari et al. consider a larger set containing both variables and constraints, called the closure of $S$. Then, they define $\tilde{E}[\bm{x}_S]$ to be the actual expectation of a locally satisfying assignment on the closure. Under some assumptions, this locally satisfying assignment can be shown to exist. The closure of $S$ is defined to be the union of all subgraphs $H$ of $G_I$ for which all the constraint vertices have degree at least $3$, every leaf vertex of $H$ is inside $S$ and the number of constraint vertices in $H$ is at most $\eta n$.

Different definitions of closure of a set of variables have been studied before but one of the main contributions of $\cite{kmow}$ was this new definition of closure. Their motivation was to define it in such a way that it contains all the variables and constraints that, loosely speaking, affect the set $S$. Here, we show that this definition is also motivated by our computation of the Fourier coefficients. In particular, out of all the Fourier coefficients that are nonzero, we consider only the Fourier coefficients $T$ for which all the constraint vertices of the subgraph $H_T$ have degree at least $3$, every variable vertex of $H_T$ which has degree $1$ (a leaf variable vertex of $H_T$) is inside $S$ and the number of constraint vertices in $H_T$ is at most $\eta n$. We set all the remaining Fourier coefficients to $0$.

Similar to the case of Maximum Clique, the hardest part of proving that this construction works is proving positivity of the pseudoexpectation operator, which we will not cover here.

\subsection{Other applications}

In follow-up works, we use similar techniques (with various generalizations and modifications) to obtain high degree SoS lower bounds for the following problems
\begin{itemize}
    \item Sherrington-Kirkpatrick Hamiltonian \cite{ghosh2020sum}: An important problem from statistical physics that aims to certify the ground energy of a spin system.
    \item Tensor PCA \cite{potechin2020machinery}: A useful statistical and computational technique that exploits higher order moments of the data, used for various problems such as latent variable modeling, topic modeling, community detection, etc., see e.g. \cite{anandkumar2014tensor, kivva2021learning, hsu2012spectral, anandkumar2014guaranteed, richard2014statistical} and references therein.
    \item Sparse PCA \cite{potechin2020machinery}: A fundamental primitive in machine learning that extracts principal components from data, with applications in a diverse range of fields, e.g. \cite{wang2012online, naikal2011informative, tan2014classification, allen2011sparse}.
    \item Planted Slightly Denser Subgraph \cite{potechin2020machinery}: A problem closely related to the Maximum Clique problem that is studied in this thesis. This problem can be thought of as a harder version of the standard Densest $k$-subgraph problem (hence easier to show lower bounds for).
    \item Sparse Independent Set \cite{sparsesos}: A fundamental graph theoretic problem whose SoS lower bounds are fascinating because sparse random graphs are studied (as opposed to most works that study dense models) and a modified pseudocalibration heuristic is used in the work.
\end{itemize}

For a more up to date summary of pseudocalibration and its connections to other fields, see \cite{rajendran2022nonlinear}.

\chapter{Future Work}

As we saw, hierarchies form a unified approach to optimization problems. It is natural to consider Sum of Squares relaxations for other problems of interest and prove approximation guarantees as well as tight integrality gaps.

\subsection{Approximability}

Guruswami and Sinop\cite{gs} round SoS solutions and get good approximation guarantees for low threshold-rank graphs. We also know that SoS achieves good approximation for other classes of graphs such as $K_r$-minor free graphs but the analysis proceeds differently. Essentially, it is because these graphs structurally admit decompositions into graphs of bounded diameter, see for instance \cite{lapchilau}. Naturally, it would be interesting to unify the above two results and identify a larger class of graphs that preferably contains the above two classes, for which the natural SoS relaxation provably gives a good approximation.

For the Densest $k$-subgraph problem, we have an approximation guarantee of $n^{1/4 + \epsilon}$ for $O(1 / \epsilon)$ levels of the Lov\'{a}sz-Schrijver hierarchy\cite{dks} for a graph on $n$ vertices and hence, the Sum of Squares Hierarchy also gives the same guarantee. The analysis of this algorithm roughly proceeds by considering subgraphs of small size with a special structure known as caterpillar graphs and arguing that dense graphs have lots of them. The motivation for this algorithm comes from the algorithm for the related problem of distinguishing a random graph from a planted graph, where we simply count the number of caterpillar subgraphs. It is an open problem to simplify their analysis by trying to understand exactly which polynomial that SoS considers in making the distinction, if one exists. This would also make it possible to generalize this idea to analyze the SoS relaxation of the Densest $k$-subhypergraph problem.

\subsection{Inapproximability}

On the lower bound front, the best known integrality gap for the polynomial level SoS relaxation for Densest $k$-subgraph is $n^{1/14 - \epsilon}$(\cite{bcv}, \cite{pasin}). It is an open problem to improve this gap possibly using a different construction and it is plausible that the actual integrality gap is $n^{1/4}$, which would also be tight. It is known that the level $\Omega(\log n / \log \log n)$ Sherali-Adams relaxation for this problem has an integrality gap of $\tilde{\Omega}(n^{1/4})$\cite{bcv} which provides extra evidence to the truth of this gap.

For the Densest $k$-subhypergraph problem of arity $\rho$, the integrality gaps we obtained seem far from optimal and we conjecture that the actual integrality gap is $\Omega(n^{(\rho - 1) / 4})$. As remarked earlier, we also do not know approximation guarantees for the SoS relaxation for this problem. In particular, the currently known analysis for Densest $k$-subgraph\cite{dks} does not seem to easily extend for hypergraphs.

For Minimum $p$-Union, it was shown in \cite{yury} that the level $\Omega(\epsilon \log m / \log \log m)$ Sherali-Adams relaxation has an integrality gap of $O(m^{1/4 - \epsilon})$. And they also proved that, assuming the hypergraph extension of a conjecture known as "Dense versus Random", we can obtain a $m^{1/4}$ hardness of approximation. So, a natural first step would be to prove this lower bound for the restricted Sum of Squares hierarchy without any assumptions. Since this problem can be thought of as a more general version of Densest $k$-subgraph, it should plausibly be easier to prove lower bounds. Also, in our integrality gap for Minimum $p$-Union, our construction and proof can be modified to work for Smallest $m$-Edge subgraph, which is a restricted version of Minimum $p$-Union. So, it seems that we could further utilize the flexibility of the problem's input to our advantage.

It is possible to apply pseudocalibration to systematically construct integrality gaps for the SoS relaxations of these problems, but it is still not clear how to analyze them. Pseudocalibration was employed by Chlamt\'{a}\v{c} and Manurangsi\cite{logdensity} to obtain Sherali-Adams integrality gaps for $\tilde{\Omega}(\log n)$ levels for Densest $k$-subgraph, Smallest $m$-Edge subgraph, their hypergraph variants and Minimum $p$-Union, all through a common framework.
\begin{singlespace}
\bibliography{main}
\bibliographystyle{alpha}
\end{singlespace}

\end{document}